\documentclass[12pt]{article}

\AtBeginDocument{%
   \setlength\abovedisplayskip{0.05in}
   \setlength\belowdisplayskip{0.08in}}

\usepackage[onehalfspacing]{setspace}

\makeatletter
\renewcommand\@footnotetext[1]{%
  \insert\footins{%
    \reset@font\footnotesize
    \interlinepenalty\interfootnotelinepenalty
    \splittopskip\footnotesep
    \splitmaxdepth \dp\strutbox \floatingpenalty \@MM
    \hsize\columnwidth \@parboxrestore
    {\setstretch{1.0}\protect\@makefntext{%
      \rule{\z@}{\footnotesep}\ignorespaces#1}}}}
\makeatother

\usepackage[top=50mm, bottom=-500mm, left=50mm, right=50mm]{geometry}
\usepackage{lineno}
\usepackage{amssymb}
\usepackage{amsthm}
\usepackage{amsmath}
\usepackage{cases}
\usepackage{caption}
\usepackage{epsfig}
\usepackage{graphicx}
\usepackage{graphics}
\usepackage{float}
\usepackage{subfigure}
\usepackage{multirow}
\usepackage{xcolor}
\usepackage{lineno}
\usepackage{fullpage}
\usepackage[normalem]{ulem} 
\usepackage{makeidx}
\usepackage{xspace}
\usepackage{wrapfig}
\usepackage{amsmath}
\usepackage{lscape}
\usepackage{mathrsfs}
\usepackage{adjustbox}
\usepackage{multirow}
\usepackage{varwidth}
\usepackage{rotating}
\usepackage{lscape}
\usepackage[bottom]{footmisc}
\usepackage[colorlinks,citecolor=darkblue,urlcolor=darkblue,linkcolor=darkblue]{hyperref} 
\usepackage{cleveref}
\makeindex
\newtheorem{theorem}{Theorem}

\newtheorem{corollary}{Corollary}

\newtheorem{lemma}{Lemma}

\newtheorem{proposition}{Proposition}
\newtheorem{definition}{Definition}

\newtheorem{obs}{Observation}

\definecolor{purple}{rgb}{0.6, 0.4, 0.8}
\definecolor{darkred}{rgb}{1, 0.1, 0.3}
\definecolor{darkblue}{rgb}{0.0, 0.0, 0.55}
\definecolor{darkgreen}{rgb}{0,0.6,0.5}
\definecolor{darkpurple}{rgb}{0.4, 0.0, 0.6}
\definecolor{darkorange}{rgb}{0.9, 0.4, 0.0}
\definecolor{forestgreen}{rgb}{0.0, 0.46, 0.37}
\definecolor{bittersweet}{rgb}{1.0, 0.44, 0.37}
\definecolor{navy}{rgb}{0.0, 0.0, 0.55}
\definecolor{brown}{rgb}{0.53, 0.18, 0.09}
\definecolor{Green}{rgb}{0.0, 0.47, 0.44}

\newcommand {\mm}[1] {\ifmmode{#1}\else{\mbox{$#1$}}\fi}

\newcommand\E{\mathbb{E}}

\newcommand\R{\mathbb{R}}

\newcommand{\DeltaS}{\Delta(S)}
\newcommand{\DeltaO}{\Delta(\Omega)}
\newcommand{\co}{\operatorname{co}}

\newcommand{\KL}{D_{\mathrm{KL}}}

\usepackage{dsfont}

\usepackage{sectsty}
\sectionfont{\centering\Large}
\subsectionfont{\large}
\usepackage{fancyhdr}
\usepackage{natbib}

\usepackage{pst-node, pst-plot, pst-poly}
\usepackage{rotating}

\usepackage{tikz}
\usepackage{venndiagram}
\usepackage{pstricks}
\usepackage{tikz-cd}
\usepackage{siunitx}
\usepackage{pgfplots}
\usepgfplotslibrary{fillbetween}
\usepackage{algorithm}
\usepackage{algorithmic}
\usepackage{bbm}
\usepackage{csquotes}

\usepackage{scalerel}
\usepackage[T1]{fontenc}
\usepackage{titling} 
\usetikzlibrary{shapes,arrows,trees,calc}
\pgfplotsset{compat=1.12}

\allowdisplaybreaks 
\begin{document}

\setlength{\droptitle}{-1.1in} 
\title{\small\bf \mbox{\MakeUppercase{Belief Aggregation under Costly Information}}\thanks{I sincerely thank Drew Fudenberg and Stephen Morris for their guidance and support during this project. 
}
\vspace{-0.2in}
}
\author{\textsc{Florian Mudekereza}\thanks{Department of Economics, MIT, \href{mailto:florianm@mit.edu}{\texttt{\footnotesize florianm@mit.edu}}.}}
\date{}
\maketitle
\thispagestyle{empty}
\setcounter{page}{0}
\vspace{-0.88in}

\begin{abstract}
This paper proposes an \textit{epistemic} foundation for aggregating sets of probabilistic beliefs by retaining only \textit{shared} beliefs. It develops a model of belief formation under information-acquisition costs and capacity constraints.
In this model, different information technologies rationalize different belief-aggregation rules, such as the familiar linear, geometric, power, and multiplicative pooling. Since the ranking of uncertain policies depends on these  aggregation rules, failing to base collective beliefs on the underlying technologies can cause welfare losses. An application to financial markets demonstrates how these technologies translate conflicting beliefs into equilibrium prices.
\newline\textit{JEL codes}: D71, D83.
%\newline\textit{Keywords}: shared beliefs, rational inattention, information theory, intersection rule

\end{abstract}

\section{Introduction}

Social welfare analysis under uncertainty has two distinct tasks: it aggregates how individuals value policies, and it also aggregates their beliefs to determine which probabilistic descriptions of the world should be used to evaluate policies. We study the second task, which recent work emphasizes is an \textit{epistemic} problem that should be separated  from the aggregation of preferences \citep[e.g.,][]{fleu18,pivato19}. This separation is especially relevant when individuals have conflicting and \textit{imprecise} beliefs about an unknown state of the world; imprecision is modeled using sets of beliefs to capture the fact that some individuals may be unable to specify a unique belief \citep[][]{gilboa09,gilboa25}. Consider, for example, a financial market where individuals are trading an asset because they disagree about its future payoff. Each individual may expect to gain under his own beliefs, while a planner who lacks knowledge of the objective belief must decide whether the trade reflects socially valuable risk sharing or speculation due to belief disagreement. In these environments, \citet[][p. 1754]{simsek14} ask: ``which belief should the planner use?'' This raises the main question of our paper: \textit{how should the planner decide which collective beliefs are admissible when individual beliefs are formed through specific information-acquisition technologies?}
\par We answer this question by developing an epistemic foundation for the \textit{intersection rule}, which aggregates belief sets by retaining only \textit{shared} (or common)  beliefs.  This rule is appealing at an institutional level because it is simple, conservative, and reflects a social consensus: a belief is \textit{collectively admissible} only if it is admissible for every individual. It is widely used in many fields \citep[e.g.,][]{Walley91,manski95,inter02,robust16,niel18,inter18,hill23}; in public policy, \citet{manski95} refers to the intersection as a ``domain of consensus,'' and it is referred to as ``shared beliefs'' in organization theory \citep{van10} and markets \citep{gil00}. However, this appeal is incomplete without an account of why belief sets should be intersected in the first place. Our argument is that the appropriate belief-aggregation rule should be derived from the information technologies that generate individual beliefs. This has first-order welfare implications because different models of individual belief formation justify different collective beliefs, which in turn lead to distinct social welfare criteria and can therefore reverse the ranking of uncertain policies. In the trading example, failing to base collective beliefs on the underlying information technologies can lead the planner to endorse trades that generate aggregate financial losses. Our model clarifies the epistemic problem across two dimensions. 
\par First, \textit{rational inattention}  and \textit{partial verifiability} explain why collective admissibility of beliefs takes the form of an intersection. On one hand, each individual is interpreted as an information source with a reference belief, acquires costly information subject to a \textit{capacity} constraint, observes a private signal, and reports posterior beliefs to a planner. Rational inattention delivers individual restrictions: a posterior is admissible for a source  only if it can be produced within its capacity constraint. On the other hand, the planner is interpreted as a designer  who uses the sources' capacity constraints as evidentiary constraints on posterior reports in a mechanism with partially verifiable evidence. The implementable \textit{certification region} consists of posteriors consistent with \textit{all} these constraints. Thus, a posterior is admissible for the planner only if it satisfies every source's capacity constraint simultaneously. 
\par Second, \textit{information} theory identifies a geometry for the belief sets being intersected: under standard information costs discussed below, the posteriors that can be produced from each source form a \textit{Bregman ball} centered around its reference belief. Bregman balls are widely used in economics, statistics, and information geometry, and special cases are \textit{entropy} balls \citep[e.g.,][]{hansen01,hansen08,watson16,ball18,ball20}. Then, the main representation result is that the certification region generated by all individual capacity constraints is precisely the intersection of these Bregman balls.

\par Our assumption on information cost is standard. Before deciding how to trade, suppose each individual can run a Blackwell experiment about the asset payoff---an information structure that produces a signal \citep{blackwell51}---and then updates his reference belief via Bayes rule. We assume the cost of producing posteriors is \textit{uniformly posterior-separable}: the cost of an experiment is the expected ``difficulty'' of updating from the prior to the realized posteriors \citep{caplin_dean_leahy22}. This family of cost functions is motivated by the experimental evidence of \citet{exp23}, the sequential-sampling foundations of \citet{ups19} and \citet{HebertWoodford2021,seq23}, and the related theoretical justifications of \citet{ups25}. As in much of rational-inattention theory, the standard capacity constraint must be adapted to fit our institutional environment---\citeauthor{sims03}'s (\citeyear{sims03}) canonical restriction is a channel-capacity constraint: the information system chosen by a single individual must transmit only a bounded amount of mutual information before a particular signal realization is observed. Later applications adapt the timing, state space, and object of attention to fit their specific environments \citep[e.g.,][]{cap10,mondria13,cap16}. Our adaptation is required because the planner aggregates posterior reports \textit{after} individuals have acquired signals. Since uniformly posterior-separable costs satisfy \citeauthor{pts23}'s (\citeyear{pts23}) constant-marginal-cost property, expected cost cannot distinguish genuinely cheap posteriors from rare expensive posteriors. That is, any posterior can be made arbitrarily ``cheap'' in expectation by simply making the corresponding signal realization sufficiently rare. To avoid this vacuity, an appropriate posterior-admissibility restriction must instead constrain the \textit{unit cost} of producing a posterior. We show that this unit cost equals the \textit{Bregman divergence} between the prior and the candidate posterior; this divergence is a measure of discrepancy between probability distributions that generalizes the Kullback-Leibler divergence and is widely used in economics \citep[e.g.,][]{breg19,ri20,HebertWoodford2021,seq23,caplin_dean_leahy22,breg25}. Thus, individuals are rationally inattentive in the sense that each one has a unit-cost \textit{budget}; this aims to capture \citeauthor{arrow85}'s (\citeyear[][p. 303]{arrow85}) sentiment that  ``The choice of information structures must be subject to some limits, otherwise, of course, each agent would simply observe the entire state of the
world.''
\par The Bregman geometry is economically relevant because it is rich enough to nest many of the popular belief-aggregation rules used in economics and finance. The key is that Bregman balls embed an extra degree of freedom: the \textit{order} of arguments in Bregman balls matters because Bregman divergences are typically asymmetric. The two orders are referred to as \textit{primal} and \textit{dual}, and are derived from different information technologies: uniformly posterior-separable costs deliver dual Bregman balls, whereas scoring-rule regrets deliver primal Bregman balls. These orders pin down different belief-aggregation rules. On one hand, primal Bregman balls can generate linear pooling of  beliefs, so they recover \citeauthor{gil04}'s (\citeyear{gil04}) \textit{linear-linear} welfare criterion and the convex-hull belief set in \citeauthor{simsek14}'s (\citeyear{simsek14}) \textit{belief-neutral} welfare criterion. On the other hand, dual Bregman balls can generate geometric pooling of beliefs, so they recover \citeauthor{diet21}'s (\citeyear{diet21}) \textit{linear-geometric} welfare criterion; they can also generate the \textit{power} pooling rules used in welfare economics for inequality measurement \citep{maa86} and in finance for asset pricing \citep{asset21}; we also show that they can recover the \textit{multiplicative} pooling rule axiomatized by \citet{diet10}.  Then, we show that it is also possible to include all these familiar pooling rules in the same intersection, thereby allowing the planner to have access to all of them at once. Thus, the same trading environment can lead to different welfare-relevant collective beliefs depending on the information technology that constrains individual belief formation.
  \par We return to the financial-market example in our application. We embed Bregman balls in the incomplete-preference exchange economy of \citet{RigottiShannon2005}, where individuals trade Arrow securities while holding belief sets. Although there is no ``planner'' explicitly choosing a collective belief in this economy, a \textit{full-insurance} equilibrium converts any shared belief in the intersection of belief sets into a normalized Arrow price. This makes financial markets a useful test of the observable implications of the Bregman geometry because the belief selected by the intersection is capitalized into equilibrium prices. Specifically, when the intersection is a singleton, we show that primal entropy balls generate the same prices as a \textit{log-utility} risk economy with linearly pooled beliefs; dual entropy balls generate the same prices as an \textit{exponential-utility} risk economy with geometrically pooled beliefs; and other Bregman balls generate power-pooled prices corresponding to \textit{isoelastic} risk economies. Thus, the Bregman geometry explains how conflicting beliefs are priced in financial markets. This  illustrates the importance of forming collective beliefs based on individual information technologies, since doing otherwise can distort valuations and lead to trading losses.  

\par\noindent\textit{--- Related literature}: 
\citet{robust16}, hereafter DGHT, provide a \textit{normative} foundation for the intersection rule via welfare aggregation. They consider individuals with unambiguous preferences \`a la \citet{bew02}, where each individual has a set of beliefs. Their robust Pareto principle says that if all individuals unambiguously  prefer a policy to another, then so should the planner. Under  a restriction on utility functions called ``c-diversity,'' DGHT show that robust Pareto is equivalent to the collective utility function being a linear combination of individual utility functions and collective beliefs being a subset of the intersection of individual belief sets. While this result is elegant and  explains why the intersection rule is normatively compelling, it does \textit{not} explain why
it is epistemically compelling. The reason is that DGHT's joint aggregation relies on c-diversity to  derive the intersection, meaning the formation of collective beliefs cannot be disentangled from utility functions. Their framework therefore does not specify the information technologies that generate individual belief sets, the epistemic basis for intersecting them, or the geometric structure these sets should exhibit based on information costs. As a result, the epistemic foundation of the intersection rule remains an open problem. We complement DGHT by supplying such a foundation.
\par This paper contributes to epistemic social choice theory, which is the study of collective decision problems in which disagreement concerns not only values or tastes but also facts, probabilities, and information.  In environments where there are objective probabilistic facts, the correct way to aggregate individuals' beliefs about these facts depends on specific assumptions on how they form these beliefs. In a survey, \citet{pivato19} argues that in such environments,  epistemic social choice should start from a model of belief formation and then derive the appropriate aggregation rule from that model. In a related survey, \citet[][p. 34]{fleu18} notes: ``The aggregation of beliefs should be completely separated from, and be performed prior to, the aggregation of preferences.'' These methodological points are reflected in \citet{simsek14}: they assume individual belief formation is distorted by psychological biases and propose their convex-hull aggregation rule to address this problem. Their approach illustrates that assumptions about belief formation naturally determine which belief-aggregation rule is appropriate. Our epistemic foundation follows these perspectives by deriving collective beliefs from individual information technologies, separate from preference aggregation. More broadly, our foundation contributes to \citeauthor{gilboa08}'s (\citeyear[][p. 175]{gilboa08}) agenda  pointing to ``develop
formal, explicit theories of the belief formation process.''
\par We also contribute to the literature on welfare aggregation under heterogeneous beliefs. Starting with \citet{hars55}, the classical approach derives utilitarian aggregation from the standard Pareto principle when uncertainty is objective. However, \citet{mongin95} shows that this approach is incompatible with subjective heterogeneous beliefs. The literature following \citet{gil04} therefore weakens Pareto-style axioms to accommodate belief heterogeneity while still disciplining collective utility and collective beliefs jointly \citep[e.g.,][]{util16,robust16,qu17,hayashi_lombardi19,brandl21,billot21,diet21}. In contrast, a growing strand of this literature deliberately \textit{separates} the ethical aggregation of utilities from the epistemic aggregation of beliefs. \citet{update20} distinguish objective from subjective uncertainty to limit where common beliefs are imposed. \citet{pivato22} uses accumulating evidence to obtain utilitarian social welfare without determining collective beliefs. \citet{pivato24} use almost-objective uncertainty to decouple utility aggregation from belief aggregation, which is then treated as a distinct epistemic problem. We complement this strand by providing a microfoundation for such epistemic problems. Specifically, we derive collective beliefs from individual information technologies as follows: rational inattention generates capacity constraints, certification under partial verifiability requires simultaneous satisfaction of these constraints, and information theory gives the resulting certification region a Bregman geometry. This yields an epistemic justification for the intersection rule and explains why linear, geometric, power, and multiplicative pooling arise from different models of belief formation.

\par The rest of the paper proceeds as follows. Section \ref{sec:model} presents a normative foundation, and Section \ref{sec:bregman} analyzes implications of our epistemic foundation, which is formalized in Section \ref{sec:microfoundation}. Section \ref{sec:app_finance} explores an application to financial markets and Section \ref{sec:conclusion} concludes the paper. 

\section{Normative Foundation}\label{sec:model}

We consider a society with individuals $i\in I:=\{1,\dots,n\}$ and a planner indexed by $0$. Let $S$ be a finite set of states of the world with $|S|\geq2$ and  $X$ be a convex set of outcomes. The planner evaluates acts (or policies), which are functions $f:S\to X$, and let $F$ denote the set of all acts. 
Let $\Delta:=\Delta(S)$ be the space of all probability distributions over $S$. A preference over acts is described by a binary relation $\succsim$ on $F$. We write $f\succsim g$ when act $f$ is weakly preferred to $g$, while strict preference and indifference are denoted by $\succ$ and $\sim$, respectively. 

\begin{definition}\label{def:unamb}\normalfont 
A binary relation $\succsim$ on $F$ is an \emph{unambiguous preference} if there exists a pair $(u,P)$, where $u:X\to\R$ is a nonconstant affine utility function and $P\subseteq\Delta$ is a nonempty closed convex set of probability distributions on $S$, such that, for all acts $f,g\in F$,
$$
        f\succsim g
        \quad\text{if and only if}\quad
        \E_p[u(f)]\geq \E_p[u(g)]\quad\forall p\in P.
$$
\end{definition}

Unambiguous preferences are also known as Bewley preferences, due to \citet{bew02}. The key is that $\succsim$ is \textit{incomplete} unless $P$ is a singleton.  
The set $P$ is uniquely pinned down by $\succsim$, while $u$ is unique up to a positive affine transformation. Throughout, the planner and all individuals have unambiguous preferences $(\succsim_i)_{i=0}^n$  represented by pairs $\{(u_i,P_i)\}_{i=0}^n$.

\subsection{Robust Pareto principle}\label{sec:DGHT}
With unambiguous preferences in place, the next step is to identify a normative justification for the intersection rule. DGHT show that their  robust Pareto principle has a clear implication: once individual tastes are sufficiently diverse, the planner's admissible beliefs must be contained in the intersection of individual belief sets. This section recalls that result first and then isolates the epistemic question that remains open.

\par DGHT's robust Pareto principle states that if all individuals unambiguously prefer a policy over another, then so should the planner. The next definition formalizes this principle. 

\begin{definition}\label{def:pareto}\normalfont
The unambiguous preference relation $\succsim_0$ satisfies \emph{unambiguous Pareto dominance} with respect to the profile $(\succsim_i)_{i\in I}$ of individual unambiguous preferences if for all $f,g\in F$, $f \succsim_0 g$ whenever $f \succsim_i g$ for all $i\in I$.
\end{definition}

The welfare aggregation result will use a \textit{diversity} condition on individuals' tastes. Specifically, for each individual, there must exist two outcomes between which an individual is the only one to have a strict preference whereas all other individuals are indifferent. 

\begin{definition}[c-diversity]\label{def:div}\normalfont The profile $(\succsim_i)_{i\in I}$ is said to satisfy c-\textit{diversity} if for all $i\in I$, there exist outcomes $x,y\in X$ such that $x\succ_i y$ whereas $x \sim_j y$ for all $j \neq i$.     
\end{definition}
C-diversity, also known as ``independent prospects''  in the aggregation literature, is equivalent to the individual utility functions being linearly independent when $X$ is at least $n$-dimensional. This condition will play a key role in DGHT's welfare aggregation.
\subsection{Welfare aggregation and open problem}
The next result is DGHT's Theorem 1, which shows the welfare implications of robust Pareto.
\begin{obs}[DGHT's Theorem 1]\label{thm:agg}
 Let $(\succsim_i)_{i\in I}$ be a profile of unambiguous preferences with representation $\big\{(u_i,P_i)\big\}_{i\in I}$ satisfying {\normalfont c}-diversity. Then, the unambiguous preference $\succsim_{0}$ with representation $(u_0,P_0)$ satisfies unambiguous Pareto dominance with respect to $(\succsim_i)_{i\in I}$ if and only if there exists a nonzero weight vector $\theta\in\R^n_+$ and a constant $\gamma\in\R$ such that
\begin{align}\label{eq:agg}
    u_0=\sum_{i\in I} \theta_i u_i+\gamma
    \qquad\text{and}\qquad
    P_0\subseteq \underset{\theta_i>0}{\bigcap_{i\in I,}} P_i.
\end{align}
\end{obs}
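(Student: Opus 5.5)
The plan is to translate unambiguous Pareto dominance into a statement about convex cones of vectors in $\R^S$, and then apply a separating-hyperplane (Farkas-type) argument. Fix a reference outcome $x^*$ and normalize each $u_i$ so that $u_i(x^*)=0$. Since $X$ is convex and the $u_i$ are affine, comparing $f$ with $g$ is equivalent to evaluating the "utility difference profiles" $a_i:=u_i\circ f-u_i\circ g\in\R^S$. To handle the mixture structure I will use acts that mix $f$ and $g$ with a constant act, so that only differences matter. Then $f\succsim_i g$ holds iff $\langle p,a_i\rangle\ge 0$ for all $p\in P_i$, that is, iff $a_i\in P_i^*$, the dual cone of $P_i$.

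\textbf{Sufficiency.} Suppose $u_0=\sum\theta_iu_i+\gamma$ and $P_0\subseteq\bigcap_{\theta_i>0}P_i$, and let $f\succsim_i g$ for all $i$. For each $p\in P_0$ and each $i$ with $\theta_i>0$ we have $p\in P_i$, so $\E_p[u_i(f)-u_i(g)]\ge0$. Summing with weights $\theta_i$ gives $\E_p[u_0(f)]\ge\E_p[u_0(g)]$. This direction is routine.

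\textbf{Necessity, utility part.} First restrict to constant acts. For outcomes $x,y$, $x\succsim_i y$ for all $i$ means $u_i(x)\ge u_i(y)$ for all $i$, and Pareto then gives $u_0(x)\ge u_0(y)$. The standard Harsanyi/Farkas argument applies to the affine map $x\mapsto(u_1(x),\dots,u_n(x))$. Pareto indifference, obtained by applying the condition in both directions, shows that $u_0$ factors through this map affinely. The weak-Pareto monotonicity then forces the linear coefficients to be nonnegative, which yields $\theta\in\R^n_+$ and $\gamma$. Nonconstancy of $u_0$ gives $\theta\neq0$. Here c-diversity makes the vectors $(u_i(x)-u_i(y))_i$ span enough directions that $\theta$ is well defined; in particular it is unique.

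\textbf{Necessity, belief part.} This is the main obstacle. Fix $i$ with $\theta_i>0$ and suppose, for contradiction, that some $q\in P_0\setminus P_i$. Since $P_i$ is closed and convex, strict separation gives a vector $a\in\R^S$ with $\langle p,a\rangle\ge0$ for all $p\in P_i$ but $\langle q,a\rangle<0$. Use c-diversity to pick $x,y$ with $u_i(x)>u_i(y)$ and $u_j(x)=u_j(y)$ for all $j\ne i$. Rescale $a$ into $[-1,1]$ and construct the acts $f(s)=\tfrac12\big(1+a(s)\big)x+\tfrac12\big(1-a(s)\big)y$ and $g\equiv\tfrac12x+\tfrac12y$. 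Then $u_j(f)-u_j(g)\equiv0$ for every $j\ne i$, while $u_i(f)-u_i(g)=c\,a$ with $c>0$. Hence $f\succsim_j g$ for all $j$, and by Pareto $f\succsim_0 g$. On the other hand, $u_0(f)-u_0(g)=\theta_i c\,a$, so $\E_q[u_0(f)-u_0(g)]<0$, contradicting $q\in P_0$.

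The delicate points are the exact use of c-diversity, needed to isolate each individual, and the rescaling that keeps mixtures inside $X$. Both are handled by the construction above. Note that coordinates with $\theta_i=0$ impose no restriction, which explains why the intersection is taken only over $i$ with $\theta_i>0$.
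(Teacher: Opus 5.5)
The paper does not prove Observation~\ref{thm:agg}; it quotes DGHT's Theorem~1, so there is no in-paper argument to compare against. Your proof is correct and self-contained, and it follows the standard route for this result.

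\emph{Sufficiency.} Summing $\theta_i\,\E_p[u_i(f)-u_i(g)]\geq 0$ over the $i$ with $\theta_i>0$ gives the conclusion.

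\emph{Necessity, utility side.} Pareto indifference on constant acts makes $u_0$ factor affinely through $U=(u_1,\dots,u_n)$. c-diversity supplies pairs with $U(x_i)-U(y_i)=c_ie_i$ and $c_i>0$. Hence the affine hull of $U(X)$ is $\R^n$, the coefficients $\theta$ are unique, and applying Pareto to the pair $(x_i,y_i)$ directly yields $\theta_i\geq 0$.

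\emph{Necessity, belief side.} You isolate individual $i$ via that c-diversity pair. You then build the acts $f(s)=\tfrac12(1+a(s))x+\tfrac12(1-a(s))y$ and $g\equiv\tfrac12x+\tfrac12y$, so that $u_j(f)-u_j(g)\equiv 0$ for $j\neq i$. Finally you separate $q\in P_0\setminus P_i$ from $P_i$. This is a direct proof of the containment step that is usually quoted as a comparison result for Bewley preferences.

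A few small points are worth making explicit:
\begin{itemize}
\item The separating functional can be taken with threshold $0$ only because every element of $\Delta$ sums to one. You pass from $\langle p,a\rangle\geq\beta>\langle q,a\rangle$ to $a-\beta\mathbf 1$.
\item Extending the affine map $\phi$ from the convex set $U(X)$ to all of $\R^n$ uses the full-dimensionality that c-diversity provides.
\item Your opening normalization $u_i(x^*)=0$ and the remark about mixing with a constant act play no role and can be dropped.
\item c-diversity is used only in the necessity direction.
\end{itemize}
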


Observation \ref{thm:agg} is the normative benchmark that disciplines both sides of the social preference. The utility side is Harsanyi-style: the collective utility function is a linear combination of individual utility functions. The belief side is set-valued: every collective belief lies in the \textit{intersection} of individual belief sets. For this aggregation rule to make sense, we assume throughout that individual belief sets are compatible so that their intersection is nonempty.\footnote{DGHT's Section II.B propose two ways to deal with empty intersections: (1) omit individuals whose belief sets are incompatible with others by giving them zero utility weight; (2) consider a weaker aggregation result where  the intersection is replaced with the convex hull of the union of belief sets.} 
Shared beliefs are plausible in environments involving \textit{deep uncertainties} \citep{hansen24}, where most people (including experts) entertain wide ranges of beliefs; e.g., climate policy \citep[][]{heal14} and health policy \citep[][]{manski22}.\footnote{For example, \citet{gil14} and \citet{gil22} require existence of shared beliefs to rationalize Pareto improvements in exchange economies under uncertainty.}
\par A key feature of Observation \ref{thm:agg} is the \textit{joint} aggregation of beliefs and utility functions. This forces the intersection to be dependent on c-diversity, and hence  the set of collective beliefs in \eqref{eq:agg} cannot be disentangled from utility weights $\theta$. 
In contrast, a primary objective of epistemic social choice theory is to understand the information-acquisition costs and processing constraints that shape individual belief formation and determine aggregation rules \citep[][]{pivato19}. By relying on Pareto-style axioms, DGHT's framework  abstracts from these mechanisms. In fact, while referring to DGHT's aggregation approach, \citet[][p. 29]{fleu18} notes that: ``it
does not make sense to jointly aggregate beliefs and preferences [...] a rational
evaluation would instead separate the formation of beliefs from the aggregation of preferences.''  To operationalize this separation, welfare economics requires an epistemic foundation detailing how information technologies generate individual belief sets and how to aggregate these sets. The rest of our paper delivers such a foundation by answering the question: \textit{What theory of information-acquisition costs and information-processing constraints makes the intersection rule an epistemically grounded restriction on collective beliefs?} \par Section \ref{sec:microfoundation} answers this question by developing an epistemic foundation that combines ideas from rational inattention, posterior certification, and information theory. Before presenting this formal analysis, we first highlight its key economic implications in Section \ref{sec:bregman}. 

\section{Implications of Epistemic Foundation}\label{sec:bregman}

The purpose of this section is to abstract from the formalism of Section \ref{sec:microfoundation} and isolate the key economic content of our epistemic foundation. We start with a high-level summary of the foundation. Each individual is an information source who begins with a reference belief, acquires costly information subject to a capacity constraint, and reports posterior beliefs after observing a private signal. The planner evaluates beliefs at the reporting stage, so the relevant admissibility question  is whether a posterior can be justified within an individual's information technology. Formally, the planner designs a posterior-reporting mechanism that allows individuals to partially verify posteriors, which forms the \textit{certification region}: a posterior  is collectively admissible only if it satisfies all individual capacity constraints. 

\par  The main economic object delivered by our epistemic foundation is the \textit{Bregman} geometry of individual belief sets. This geometric structure originates from a standard assumption on information-acquisition costs, and it determines how belief disagreement is aggregated in welfare analysis. In particular, different information costs generate different Bregman balls, which in turn induce different collective beliefs when intersected. This section therefore starts by studying the welfare implications of the Bregman geometry. We show that this geometry is useful because it is rich enough to recover several familiar belief-aggregation rules used in economics and finance, including linear, geometric, power, and multiplicative pooling.

\subsection{Bregman balls}
To define Bregman balls, we first introduce its main component: the \textit{Bregman divergence} associated with a  convex differentiable generator $G$, defined as
\begin{equation}\label{eq:bregdiv}
        D_G(q\lVert p):=G(q)-G(p)-\langle \nabla G(p),q-p\rangle,
\end{equation}
for any $q,p\in \Delta$ such that the expression is well-defined \citep{breg67}. This divergence is nonnegative, equals zero when $q=p$, and is popular in many fields, including information economics \citep[e.g.,][]{breg19,ri20,HebertWoodford2021,caplin_dean_leahy22,breg25}. Bregman divergences are asymmetric,\footnote{The squared error is (up to scaling) the unique Bregman divergence that is symmetric \citep{savage71}.} which affects  Bregman balls, so we use one notation for each order of arguments in $D_G(\cdot\lVert\cdot)$:
\begin{align}
   \text{(\textit{primal} ball)}& \qquad   B^G_{\eta_i}(p_i):=\{q\in\Delta:D_G(p_i\lVert q)\leq \eta_i\},\label{eq:primalball}\\
     \text{(\textit{dual} ball)}& \qquad    \hat B^G_{\eta_i}(p_i):=\{q\in\Delta:D_G(q\lVert p_i)\leq \eta_i\},\label{eq:dualball}
\end{align}
where $p_i\in \Delta$ is $i$'s  \textit{reference  belief} and $\eta_i\geq0$ measures the imprecision around $p_i$, or inversely, $i$'s confidence in $p_i$. An interpretation is that, while $i$ believes $p_i$ is the ``best'' approximation of the objective belief, $i$ still considers nearby models that are within $\eta_i$-range of $p_i$ because such models may capture features of the objective belief that were missed by $p_i$. \citet[][Figure 2]{topo18} visualize differences of  primal and dual balls. 
\par When $G(q)=\sum_{s\in S}q(s)\log q(s)$, 
we obtain the Kullback-Leibler (KL) divergence (or relative entropy) $\KL(q\lVert p):=\sum_{s\in S}q(s)\log\frac{q(s)}{p(s)}$. The KL divergence gives rise to \textit{entropy balls}, which are the most popular subclass of belief sets. 
A primal entropy ball is denoted $B^{\text{\normalfont\scriptsize KL}}_{\eta_i}(p_i)$ and a dual entropy ball is denoted
$\hat{B}^{\text{\normalfont\scriptsize KL}}_{\eta_i}(p_i)$, and both are convex sets.\footnote{In statistics, \citet[][Section 4.1]{watson16} discuss dual and primal entropy balls.} 
For example, \citeauthor{hansen01}'s (\citeyear{hansen01}) \textit{constraint} preference uses a dual entropy ball as the set of subjective beliefs; \citet{ball20} extends this preference by axiomatizing generalized dual entropy balls in the maxmin-expected-utility setting, where the radius can be endogenous.

\subsection{Linear pooling}\label{sec:pos}

Linear pooling is the most prominent aggregation rule, dating back to \citet{stone61}. However, \citet[][]{epistemic16} claim: ``\textit{linear pooling} (the weighted or unweighted linear averaging of probabilities) can be justified on
procedural grounds but not on epistemic ones.''  We show that our model provides an epistemic foundation for linear pooling.  For all our aggregation results to be well-behaved, we assume some standard regularity conditions on the generator $G$ and support restrictions on reference beliefs; we collect all these technical conditions in Appendix \ref{app:conditions}. The first result uses primal Bregman balls to partially identify linear pooling.

\begin{proposition}\label{thm:pos}
  Fix any generator $G$ and reference beliefs $(p_{i})_{i\in I}$ satisfying the regularity conditions in Appendix \ref{app:conditions} and assume $\bigcap_{i\in I}B^G_{\eta_i}(p_i)\neq\varnothing$.  Then, $\bigcap_{i\in I}B^G_{\eta_i}(p_i)\cap\text{\normalfont co}(p_{i})_{i\in I}\neq\varnothing$.
\end{proposition}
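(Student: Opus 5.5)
Pick any $q$ in the intersection and project it onto the convex hull $C:=\text{co}(p_i)_{i\in I}$ in the Bregman sense; the goal is to show the projection stays in every primal ball. The key fact is the Bregman Pythagorean inequality for the \emph{right-argument} projection. For fixed $p$, the map $q\mapsto D_G(p\lVert q)$ is generally not convex. The three-point identity, however, shows that the right projection with respect to $D_G$ is the ordinary left projection with respect to the dual divergence $D_{G^*}$ in the mirror coordinates $\nabla G$. To avoid this, work directly with the left-argument projection in dual coordinates, using the following identity. For all $p,q,r$,
\begin{equation*}
D_G(p\lVert q)=D_G(p\lVert r)+D_G(r\lVert q)+\langle \nabla G(r)-\nabla G(q),\,p-r\rangle .
\end{equation*}
Its content is that moving $q$ to a suitable $r$ decreases $D_G(p_i\lVert\cdot)$ simultaneously for all $i$ whenever the cross term is nonpositive for every $p_i$.

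Given $q$ in the intersection, define $r$ as a minimizer over $C$ of $x\mapsto D_G(x\lVert q)$. This function is convex in $x$, and $C$ is compact and convex, so a minimizer exists; strict convexity of $G$ from the regularity conditions makes it unique. The first-order optimality condition of this convex program reads
\begin{equation*}
\langle \nabla G(r)-\nabla G(q),\,x-r\rangle\ge 0\qquad\text{for all }x\in C .
\end{equation*}
This sign is the wrong way for the three-point identity above, so the identity must be rearranged. Writing the identity with roles $D_G(x\lVert q)=D_G(x\lVert r)+D_G(r\lVert q)+\langle\nabla G(r)-\nabla G(q),x-r\rangle$ gives $D_G(x\lVert q)\ge D_G(x\lVert r)+D_G(r\lVert q)\ge D_G(x\lVert r)$ for all $x\in C$. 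This is exactly the generalized Pythagorean inequality. Applying it with $x=p_i\in C$ gives $D_G(p_i\lVert r)\le D_G(p_i\lVert q)\le\eta_i$ for every $i$. Hence $r\in\bigcap_i B^G_{\eta_i}(p_i)\cap C$, which proves the proposition.

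The main obstacle is technical rather than conceptual. The first step is to justify the first-order condition, which needs interiority and differentiability. The regularity conditions in Appendix \ref{app:conditions} should give this: for example, $G$ is Legendre type (essentially smooth and strictly convex) and the reference beliefs have full support. Under these conditions $C$ lies in the relative interior of $\Delta$, $\nabla G$ is finite on $C$ and at $q$, and $D_G(\cdot\lVert q)$ is differentiable on $C$. The second step is to confirm that $q$ itself has $\nabla G(q)$ finite. This holds because $D_G(p_i\lVert q)\le\eta_i<\infty$ together with essential smoothness forces $q$ into the relative interior. If $q$ could lie on the boundary, I would first replace it by a nearby interior point. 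Lower semicontinuity and compactness of the balls would then pass the conclusion to the limit.
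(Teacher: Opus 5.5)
Your argument is correct, and it takes a genuinely different route from the paper.

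\textbf{The paper's route.} The paper minimizes the convex, lower semicontinuous function $\Phi(q)=\max_{i\in I}\{D_G(p_i\Vert q)-\eta_i\}$ over $\Delta$. It writes the optimality condition $0\in\partial\Phi(q^\star)+N_\Delta(q^\star)$ via the max rule and computes $\nabla_qD_G(p_i\Vert q)=\nabla^2G(q)(q-p_i)$. It then uses positive definiteness of $\nabla^2G(q^\star)$ on $T_\Delta$ to conclude that the minimizer equals a multiplier-weighted average of the $p_i$'s attaining the maximum in $\Phi$.

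\textbf{Your route.} You take an arbitrary $q$ in the intersection and its left Bregman projection $r$ onto $C$. The variational inequality plus the three-point identity then give $D_G(p_i\Vert r)\leq D_G(p_i\Vert q)$ for all $i$ at once.

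\textbf{What each buys.} Your route is more elementary and needs less. It uses only convexity of $G$ and differentiability on an open set containing $C$ and the intersection. So it dispenses with the assumed convexity and lower semicontinuity of $q\mapsto D_G(p_i\Vert q)$; it applies even to nonconvex primal balls, which the paper's regularity conditions exclude. It also dispenses with the Hessian condition. It proves more as well: every point of the intersection retracts into $C$ while weakly decreasing $D_G(x\Vert\cdot)$ for every $x\in C$. The paper's route instead singles out a canonical minimax point and locates it in the hull of the tightest sources' reference beliefs. It also parallels the multiplier arguments the paper uses for dual balls.

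\textbf{Two clean-ups.} First, your remarks that the cross term should be ``nonpositive'' and that its sign is ``the wrong way'' are slips. The variational inequality gives a nonnegative cross term, which is exactly what the identity needs, so no rearrangement is involved. Second, the worries in your last paragraph dissolve under the paper's conventions:
\begin{itemize}
\item $F_i=+\infty$ off $\mathcal D$, so every $q$ in a ball lies in $\mathcal D$. It may sit on the boundary of $\Delta$ (e.g.\ for quadratic $G$), which is harmless.
\item $C\subseteq\Delta^\circ\subseteq\mathcal D$ because $p_i\in\Delta^\circ$.
\item Neither uniqueness of $r$, a Legendre-type assumption, nor the approximation fallback is needed.
\end{itemize}
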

This result shows that if every  individual has a primal ball, then the intersection of these balls must contain at least one linear pooling of their centers; here, $\co(.)$ denotes the convex hull. The next two results build on this insight to recover familiar welfare criteria.
\begin{corollary}\label{thm:neutral}
There exists $(\eta_i)_{i\in I}$ such that $\text{\normalfont co}(p_{i})_{i\in I}
        \subseteq
        \bigcap_{i\in I} B^G_{\eta_i}(p_i)$.
\end{corollary}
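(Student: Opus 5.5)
We prove Corollary \ref{thm:neutral} directly by choosing each radius $\eta_i$ large enough to swallow the whole convex hull. Proposition \ref{thm:pos} is not needed here. The argument rests on compactness of $\co(p_i)_{i\in I}$ and continuity of the primal divergence $q\mapsto D_G(p_i\lVert q)$ on that hull.

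Fix $i\in I$ and set
\[
\eta_i:=\sup_{q\in \co(p_j)_{j\in I}} D_G(p_i\lVert q).
\]
The order of steps is as follows.

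\textbf{Step 1: compactness.} The hull $\co(p_j)_{j\in I}$ is the convex hull of finitely many points of $\Delta$, so it is compact. It is also contained in the relative interior of the face of $\Delta$ spanned by the supports of the $p_j$, up to the support restrictions of Appendix \ref{app:conditions}.

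\textbf{Step 2: finiteness and continuity.} Show that $q\mapsto D_G(p_i\lVert q)=G(p_i)-G(q)-\langle\nabla G(q),p_i-q\rangle$ is finite and continuous on the hull. This requires $G$ to be finite and continuously differentiable at every point of the hull. The regularity conditions in Appendix \ref{app:conditions}, such as differentiability of $G$ on the relevant relative interior and the reference beliefs sharing a common support, are exactly what make this hold.

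\textbf{Step 3: conclusion.} Continuity on a compact set gives $\eta_i<\infty$, attained at some point of the hull. By construction, $D_G(p_i\lVert q)\le\eta_i$ for every $q$ in the hull, so $\co(p_j)_{j\in I}\subseteq B^G_{\eta_i}(p_i)$ for each $i$. Intersecting over $i$ yields the claim.

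\textbf{Refinement.} One can sharpen the radius. Since $D_G(p_i\lVert\cdot)$ need not be convex in its second argument, the supremum need not sit at a vertex, so we simply take the supremum over the hull. If $G$ makes the primal divergence convex in $q$, as for KL, where $-\log$ is convex, then the supremum is attained at a vertex. In that case $\eta_i=\max_j D_G(p_i\lVert p_j)$ suffices.

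\textbf{Main obstacle.} The delicate point is Step 2, namely finiteness near the boundary of $\Delta$. For KL, $D_{\mathrm{KL}}(p_i\lVert q)$ blows up if $q$ vanishes where $p_i$ does not. Every $q$ in the hull has support equal to the union of the supports of the $p_j$, because it is a convex combination of them. So finiteness requires either common full support or the support restriction assumed in Appendix \ref{app:conditions}. I would verify carefully that those restrictions guarantee that $\nabla G$ is defined and continuous along the whole hull, including its relative boundary. If this fails, no finite $\eta_i$ exists, and the corollary would be false.
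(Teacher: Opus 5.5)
Your proof is correct and matches the paper's: set $\eta_i$ equal to the maximum of $q\mapsto D_G(p_i\Vert q)$ over the compact hull $\co(p_j)_{j\in I}$, note that this maximum is finite by continuity, and conclude that the hull lies in every ball. The boundary obstacle you flag does not arise, because Appendix~\ref{app:conditions} assumes every $p_i\in\Delta^{\circ}$ and that $G$ is $C^2$ on an open $\mathcal D\supseteq\Delta^{\circ}$, so the hull is a compact subset of $\Delta^{\circ}$; and since $F_i$ is also assumed convex there, your vertex refinement $\eta_i=\max_j D_G(p_i\Vert p_j)$ in fact always applies.
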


Corollary \ref{thm:neutral} explains how familiar welfare criteria based on individuals with subjective expected utility (SEU) preferences can arise within our unambiguous-preference framework. In those criteria, every individual SEU preference is represented by a unique prior belief; in our framework, the same role is played by the reference beliefs $(p_i)_{i\in I}$, while individual preferences are incomplete and represented by belief sets centered around them. Thus, when the Bregman radii $(\eta_i)_{i\in I}$ are large enough that the intersection contains $\co(p_i)_{i\in I}$ and the subset $P_0=\co(p_i)_{i\in I}$ is chosen in Observation \ref{thm:agg}, the resulting welfare criterion $(u_0,\co(p_i)_{i\in I})$ recovers \citeauthor{simsek14}'s (\citeyear{simsek14}) \textit{belief-neutral} criterion: the collective utility is a linear combination of individual utilities and the collective-belief set is the entire convex hull of reference beliefs. In related frameworks with SEU individuals, $\co(p_i)_{i\in I}$ is the set of collective beliefs in \citet[][Theorem 2]{util16}, while any nonempty subset of $\co(p_i)_{i\in I}$ is a valid set of collective beliefs in the welfare criterion of \citet[][Theorem 1]{util16}.

\par Notice that the social preference in Observation \ref{thm:agg} is forced to be an SEU welfare criterion when the intersection of individual belief sets is a singleton; this is the most studied class of welfare criteria. Thus, for  our remaining examples, we focus on singleton intersections to recover familiar SEU welfare criteria, then we will generalize the analysis to non-singleton cases in Section \ref{sec:nonsingleton} and provide an institutional interpretation for the singleton case in Section \ref{sec:singleton}. Corollary \ref{thm:pos1} will help to point-identify the most popular welfare criterion.

\begin{corollary}\label{thm:pos1}
If  $\bigcap_{i\in I}B^G_{\eta_i}(p_i)=\{q^{*}_0\}$, then there exist convex weights $(\mu_i)_{i\in I}$ such that   $$q^{*}_0(s)=\sum_{i\in I}\mu_i\hspace{0.02in}p_i(s) \qquad \forall s\in S.$$
\end{corollary}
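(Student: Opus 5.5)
Corollary \ref{thm:pos1} is an immediate consequence of Proposition \ref{thm:pos}, so the plan is a direct deduction with no new machinery. Suppose $\bigcap_{i\in I}B^G_{\eta_i}(p_i)=\{q^*_0\}$. In particular the intersection is nonempty, and $G$ and $(p_i)_{i\in I}$ satisfy the standing regularity conditions of Appendix \ref{app:conditions}. So the hypotheses of Proposition \ref{thm:pos} hold, and it gives
$$\bigcap_{i\in I}B^G_{\eta_i}(p_i)\cap\co(p_i)_{i\in I}\neq\varnothing .$$

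The intersection has only one element, $q^*_0$, so any point in the set above must equal $q^*_0$. Hence $q^*_0\in\co(p_i)_{i\in I}$. The convex hull of the finite set $\{p_1,\dots,p_n\}$ is exactly the set of convex combinations of these points. Therefore there are weights $\mu_i\geq 0$ with $\sum_{i\in I}\mu_i=1$ and $q^*_0=\sum_{i\in I}\mu_i p_i$ as elements of $\Delta$. Evaluating this equality of distributions at each state gives $q^*_0(s)=\sum_{i\in I}\mu_i p_i(s)$ for all $s\in S$, which is the claim.

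There is essentially no obstacle here. The only point to check is that the singleton hypothesis really supplies nonemptiness, so that Proposition \ref{thm:pos} applies under exactly the conditions of Appendix \ref{app:conditions}. The weights $\mu_i$ need not be unique, since the $p_i$ may be affinely dependent, but the statement only asserts that such weights exist.
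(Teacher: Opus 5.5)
Your proposal is correct and follows the paper's own proof, which likewise obtains the corollary directly from Proposition~\ref{thm:pos}. The singleton hypothesis supplies nonemptiness, so the intersection with $\operatorname{co}(p_i)_{i\in I}$ must be $\{q^*_0\}$, and $q^*_0$ is therefore a convex combination of the reference beliefs.
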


When the intersection of primal Bregman balls is a singleton, the unique point-identified belief in the intersection is always a linear pooling of reference beliefs---the model weights $(\mu_i)_{i\in I}$ depend on $G$ but we always suppress that dependence to ease notation. Combined with Observation \ref{thm:agg}, Corollary \ref{thm:pos1} shows that the planner behaves identically to \citeauthor{gil04}'s (\citeyear{gil04}) SEU planner with social preference $(u_0,q^*_0)$,  
which is known as a ``utilitarian'' preference 
or \textit{linear-linear} aggregation rule. Thus, the primal Bregman geometry recovers the belief component of \citeauthor{gil04}'s (\citeyear{gil04}) welfare criterion within an unambiguous-preference environment: the key is that the reference beliefs $(p_i)_{i\in I}$ play the role of individual SEU priors, and the singleton intersection selects their linear pool as the unique collective belief. 

\subsection{Geometric pooling}\label{sec:geometric}
The previous section showed that the primal order mostly identifies linear pooling under the intersection rule. The rest of our examples build on the dual order because, as we shall see, it is more flexible. We start with the case where individuals have dual entropy balls $\hat{B}^{\text{\normalfont\scriptsize KL}}_{\eta_i}(p_i)$. 

\begin{proposition}\label{thm:geo1}
If $\bigcap_{i\in I}\hat{B}^{\text{\normalfont\scriptsize KL}}_{\eta_i}(p_i)=\{\hat{q}_0\}$, then there exist convex weights $(\lambda_i)_{i\in I}$ such that
\begin{equation}\label{eq:geopool}
        \hat{q}_0(s)=
        \frac{\prod_{i\in I}p_i(s)^{\lambda_i}}
        {\sum_{t\in S}\prod_{i\in I}p_i(t)^{\lambda_i}} \qquad \forall s\in S.
\end{equation}
\end{proposition}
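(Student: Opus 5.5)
The plan is to characterize the unique point $\hat q_0$ through a separation (KKT) argument on the simplex. When the intersection of dual entropy balls is the single point $\hat q_0$, the sets $\hat B^{\text{\normalfont\scriptsize KL}}_{\eta_i}(p_i)=\{q\in\Delta:\KL(q\lVert p_i)\le\eta_i\}$ are closed and convex, since $\KL(\cdot\lVert p_i)$ is convex. The regularity conditions of Appendix \ref{app:conditions} make the $p_i$ full support, so each constraint function is smooth on the relative interior of $\Delta$.

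First, I will show that $\hat q_0$ is interior. Each $p_i$ has full support, so $\KL(\cdot\lVert p_i)$ is finite on all of $\Delta$. If $\hat q_0$ lay on the boundary of $\Delta$, the uniqueness argument below would have to handle the simplex constraints; I will rely on the support conditions to rule this out, or I will add the multipliers for $q(s)\ge0$ and check that they vanish.

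Next, I will consider the convex program
\[
\min_{q\in\Delta}\ \max_{i\in I}\big(\KL(q\lVert p_i)-\eta_i\big).
\]
Its value is $\le 0$, and by uniqueness the feasible set $\{\text{value}\le0\}$ is exactly $\{\hat q_0\}$. It follows that $\hat q_0$ is the minimizer and the optimal value equals $0$. If the value were negative, a small neighborhood of the minimizer in $\Delta$ would be feasible, contradicting the singleton assumption. Since the binding constraints' subdifferentials are gradients, the first-order condition for a max of convex functions says that $0$ lies in the convex hull of the active gradients, modulo the normal cone of $\Delta$. Concretely, there are convex weights $(\lambda_i)$, supported on active $i$, and a scalar $c$ such that
\[
\sum_i\lambda_i\Big(\log\frac{\hat q_0(s)}{p_i(s)}+1\Big)=c\qquad\forall s\in S.
\]
This is equivalently the Lagrangian stationarity of $\sum_i\lambda_i\KL(q\lVert p_i)$ on the simplex: $\hat q_0$ minimizes a weighted sum of dual KL divergences.

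Solving gives $\log\hat q_0(s)=\sum_i\lambda_i\log p_i(s)+\text{const}$. Normalizing over $s$ then yields \eqref{eq:geopool}. Equivalently, I can use the identity
\[
\sum_i\lambda_i\KL(q\lVert p_i)=\KL(q\lVert \bar p_\lambda)-\log Z_\lambda,
\]
where $\bar p_\lambda$ is the normalized geometric pool, so the minimizer is $\bar p_\lambda$.

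The main obstacle is justifying the multiplier step, which requires showing that $\hat q_0$ is the minimizer of the max-program with value exactly zero and that the weights $\lambda$ can be chosen as a genuine convex combination. I would first try the minimax theorem on $\min_{q\in\Delta}\max_{\lambda\in\Delta(I)}\sum_i\lambda_i(\KL(q\lVert p_i)-\eta_i)$, which is convex–concave on compact convex sets. It gives a saddle point $(\hat q_0,\lambda)$ with $\hat q_0$ minimizing $\sum_i\lambda_i\KL(q\lVert p_i)$ over $\Delta$; that minimizer is unique by strict convexity and equals the geometric pool. This route avoids constraint qualifications, except for confirming the value is zero, which follows from the singleton argument above.
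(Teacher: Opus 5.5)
Your argument is correct. Its skeleton matches the paper's: obtain a multiplier relation at $\hat q_0$, then solve for $\log\hat q_0$ and normalize. You justify the multiplier step differently, though.

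In your first route you identify $\hat q_0$ as the unique minimizer of $\Phi=\max_i(\KL(\cdot\Vert p_i)-\eta_i)$ and invoke $0\in\partial\Phi(\hat q_0)+N_\Delta(\hat q_0)$. This is the device the paper uses for the primal Proposition \ref{thm:pos}. For Proposition \ref{thm:geo1} the paper instead uses Lemma \ref{lem:active_normal_auto}, which applies Gordan's alternative on $T_\Delta$ and rules out a common descent direction because it would produce a second feasible point. Both versions need $\hat q_0\in\Delta^\circ$, which Appendix \ref{app:conditions} simply assumes, so your hedge on interiority is covered. It is also provable: at a boundary point every $\KL(\cdot\Vert p_i)$ has directional derivative $-\infty$ toward the interior, so a second feasible point would exist.

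Your minimax route is genuinely different and, for this proposition, sharper. Set $L(q,\lambda)=\sum_i\lambda_i(\KL(q\Vert p_i)-\eta_i)$; Sion's theorem applies because each $\KL(\cdot\Vert p_i)$ is finite, continuous and convex on $\Delta$. For any maximizer $\lambda^*\in\Delta(I)$ of $\lambda\mapsto\min_q L(q,\lambda)$, the pair $(\hat q_0,\lambda^*)$ is a saddle point simply because $\hat q_0$ minimizes $\max_\lambda L(\cdot,\lambda)=\Phi$. That fact, not the value being zero, is what you actually need. The Gibbs identity $\sum_i\lambda^*_i\KL(q\Vert p_i)=\KL(q\Vert\bar p_{\lambda^*})-\log Z_{\lambda^*}$ then forces $\hat q_0=\bar p_{\lambda^*}$.

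This route needs no differentiability and no theorem of the alternative, and it delivers convex weights directly. It also shows the interiority assumption is redundant here, since $\bar p_{\lambda^*}$ automatically has full support. The paper's lemma, in turn, is purely local: it uses only convexity plus differentiability near the singleton, with no global continuity or minimax theorem. That is why it can be reused verbatim for general separable generators in Proposition \ref{prop:separable_pool}, where the exact Gibbs identity you exploit is not available.
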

When the intersection of dual entropy balls is a singleton, the unique point-identified belief is always a \textit{geometric} (or logarithmic) pooling of reference beliefs. Combined with Observation \ref{thm:agg}, Proposition \ref{thm:geo1} shows that our planner behaves identically to \citeauthor{diet21}'s (\citeyear{diet21}) SEU planner with social preference $(u_0,\hat{q}_0)$, which is known as a \textit{linear-geometric} aggregation rule. \citet{diet21} axiomatizes this rule and shows that, under basic conditions, it is the only preference that is dynamically consistent, i.e., it is compatible with Bayesian updating.

\subsubsection{Comparing primal and dual orders}\label{sec:order}
A comparison with Section \ref{sec:pos} clarifies the economic content of the two Bregman orders. The main insight is that they stem from two distinct information technologies. A primal ball measures admissibility from the perspective of the individual reference belief: the collective belief must be a sufficiently good approximation to what the individual considers to be plausible. This produces linear pooling because the disagreement across individuals is resolved by averaging probability levels. In contrast, a dual entropy ball measures admissibility from the perspective of the candidate collective belief: the candidate must be sufficiently close to each reference belief when probabilities are evaluated in relative-entropy terms. This produces geometric pooling because disagreement is resolved by averaging log probabilities.

\subsection{Power pooling and other aggregation rules}\label{sec:power}

We now consider a broader class of dual Bregman balls to show that they can also rationalize more sophisticated pooling rules. Here, we will obtain different belief-aggregation rules by varying the generator $G$ while holding the dual order fixed. To this end, let $G$ be \textit{separable}, in the sense that
$G(q)=\sum_{s\in S}\beta_s g(q(s)),$
where $\beta_s>0$ for every $s\in S$, and  $g$ is twice continuously differentiable and strictly convex on $(0,\infty)$, with $g''>0$. 

\begin{proposition}\label{prop:separable_pool}
 If
$\bigcap_{i\in I}\hat B^G_{\eta_i}(p_i)=\{\tilde{q}_0\}$ and $G$ is separable,
then there exist convex weights $(\mu_i)_{i\in I}$ and a constant $a\in\mathbb{R}$ such that
$\beta_s g'(\tilde{q}_0(s))=a+\beta_s\sum_{i\in I}\mu_i g'(p_i(s))$ $\forall s\in S.$
\end{proposition}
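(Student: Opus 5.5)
The plan is to view the singleton intersection as the solution of a convex feasibility problem and to read off the pooling formula from the KKT conditions.

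\textbf{Step 1: convexity of the balls.} The dual ball $\hat B^G_{\eta_i}(p_i)=\{q\in\Delta: D_G(q\lVert p_i)\le\eta_i\}$ is convex, because $q\mapsto D_G(q\lVert p_i)$ is $G$ plus an affine function of $q$. Since $G$ is separable with $g''>0$, this map is in fact strictly convex.

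\textbf{Step 2: $\tilde q_0$ as a minimizer.} Consider the program
\[
\min_{q\in\Delta}\ \max_{i\in I}\ \bigl[D_G(q\lVert p_i)-\eta_i\bigr].
\]
Its value is at most $0$, since the intersection is nonempty. If some feasible $q$ gave a strictly negative value, then every constraint would be slack at $q$. Each constraint function is continuous on the relative interior, so a small neighborhood of $q$ in $\Delta$ would lie in the intersection, contradicting that it is a singleton. Here I use $|S|\ge 2$, which makes $\Delta$ non-degenerate. Hence the value is exactly $0$, and the unique feasible point $\tilde q_0$ is a minimizer.

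\textbf{Step 3: KKT conditions.} The objective is a maximum of strictly convex functions, so the minimizer is unique, and the regularity conditions in Appendix~\ref{app:conditions} should place it in the relative interior of $\Delta$. By Danskin's theorem, the subdifferential of the max at $\tilde q_0$ is the convex hull of the gradients of the active constraints. Optimality over the simplex $\{q:\sum_s q(s)=1\}$ then yields convex weights $\mu_i\ge 0$, supported on active constraints and summing to $1$, together with a multiplier $c$ such that
\[
\sum_{i}\mu_i\nabla_q D_G(\tilde q_0\lVert p_i)=c\,\mathbf 1 .
\]
The gradient is computed coordinatewise:
\[
\partial_{q(s)} D_G(q\lVert p_i)=\beta_s g'(q(s))-\beta_s g'(p_i(s)).
\]
Substituting gives
\[
\beta_s g'(\tilde q_0(s))=c+\beta_s\sum_i\mu_i g'(p_i(s))\qquad\forall s\in S,
\]
which is the claimed condition with $a=c$.

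\textbf{Main obstacle.} The delicate step is justifying interiority, so that no boundary multipliers for $q(s)\ge 0$ appear. If $\tilde q_0(s)=0$ for some $s$, the first-order condition would acquire an extra nonnegative term and only an inequality would survive. I would handle this with the support and regularity assumptions of the appendix, for example full-support reference beliefs and $g'(x)\to-\infty$ as $x\to 0$, which force an interior solution.

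An alternative route, if the minimax argument needs care, is direct separation. The singleton condition means that the sets $\{q: D_G(q\lVert p_i)<\eta_i\}$, intersected with the affine hull of $\Delta$, have no common point, and Gordan-type alternatives for convex inequalities give the same multipliers $\mu_i$ at $\tilde q_0$.
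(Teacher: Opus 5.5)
Your proof is correct, but it does not follow the paper's proof of this proposition. The paper never sets up the program $\min_{q\in\Delta}\max_i[D_G(q\lVert p_i)-\eta_i]$. Instead it proves an active-normal lemma (Lemma~\ref{lem:active_normal_auto}) using Gordan's theorem of the alternative on $T_\Delta=\{z:\sum_s z(s)=0\}$. If some $d\in T_\Delta$ had $\langle\nabla f_i(\tilde q_0),d\rangle<0$ for every active $i$, then $\tilde q_0+\varepsilon d$ would be a second point of the intersection for small $\varepsilon>0$: inactive constraints stay slack by continuity, and $\tilde q_0\in\Delta^{\circ}$ by assumption. Hence there are nonnegative, not-all-zero multipliers with $\sum_i\lambda_i\nabla f_i(\tilde q_0)=\zeta\mathbf 1$. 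Normalizing them and substituting $\nabla G(q)-\nabla G(p_i)$ gives exactly your final display.

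Your main route is the min-max reformulation, the max rule for the subdifferential, and $N_\Delta(\tilde q_0)=\operatorname{span}\{\mathbf 1\}$ at an interior point. That is the argument the paper uses for the primal case in Proposition~\ref{thm:pos}, and your ``alternative route'' is essentially the paper's actual proof here. The Gordan route is more economical and more general:
\begin{itemize}
\item it uses only differentiability of the active constraints and continuity of the inactive ones at $\tilde q_0$, and convexity plays no role;
\item it needs neither your Step~2 nor subdifferential sum and max rules with their qualification conditions.
\end{itemize}
Your route buys a unified treatment of primal and dual balls, and it delivers convex weights directly from the max rule.

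A few small remarks:
\begin{itemize}
\item In Step~2, the neighborhood argument needs the point to be in the relative interior. This is harmless: any $q$ with negative value lies in the intersection and so equals $\tilde q_0\in\Delta^{\circ}$. In fact, that $\tilde q_0$ minimizes the max is immediate, since any $q$ with value at most that of $\tilde q_0$ lies in the intersection.
\item The interiority you flag as the main obstacle is imposed outright in Appendix~\ref{app:conditions}, so you do not need $g'(x)\to-\infty$.
\item The strict convexity and uniqueness in Steps~1 and~3 are redundant given the singleton hypothesis.
\end{itemize}
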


This formula is very flexible. If $\beta_s$ is constant across states and  $g'(x)=x$, then we recover the linear pooling in Corollary \ref{thm:pos1}, and $g'(x)=\log x$ recovers the geometric pooling in Proposition \ref{thm:geo1}. With general $\beta_s$, if $g'(x)=x^r/r$ for $r\neq0$, then the formula becomes
\begin{align}\label{eq:power}
     \tilde{q}_0(s)=
        \Big(\frac{ra}{\beta_s}+\sum_{i\in I}\mu_i p_i(s)^r\Big)^{1/r}
        \qquad
        \forall s\in S,
\end{align}
where $a\in\R$ is a normalizing constant that ensures $\tilde{q}_0\in\Delta$. Here, the parameter $r$ controls how the planner treats disagreement across states. Larger $r$ puts relatively more weight on reference beliefs that assign high probability to a state, whereas smaller $r$ gives relatively more influence to reference beliefs that assign low probability to that state.  

\par Notably, the expression \eqref{eq:power} also nests the standard \textit{power} pooling rule 
        $$
      \tilde{q}_0(s)=  \frac{\big(\sum_{i\in I}\mu_i p_i(s)^r\big)^{1/r}}
        {\sum_{t\in S}\big(\sum_{i\in I}\mu_i p_i(t)^r\big)^{1/r}}
        \qquad
        \forall s\in S.
$$
when $a=(Z^{-r}-1)/r$, $\beta_s=1/V_s$, $Z:=\sum_{t\in S}V_t^{1/r}$, and  $V_s:=\sum_{i\in I}\mu_i p_i(s)^r$ in \eqref{eq:power}. This rule is used in welfare economics to measure multi-dimensional inequality \citep[e.g.,][]{maa86}, and it is also used in finance for asset pricing \citep[e.g.,][]{asset21}. 
\par The next result identifies another relevant aggregation rule related to geometric pooling.

\begin{corollary}\label{thm:multi}
There exist $(p_i)_{i\in I}$, $(\eta_i)_{i\in I}$, and separable $G$  such that $\bigcap_{i\in I}\hat B^G_{\eta_i}(p_i)=\{q^{\times}_0\}$  and
$$
        q^{\times}_0(s)=
        \frac{\prod_{i\in I}p_i(s)}
        {\sum_{t\in S}\prod_{i\in I}p_i(t)}
        \qquad \forall s\in S.
$$
\end{corollary}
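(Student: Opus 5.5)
The plan is to build the example by reverse-engineering the first-order condition in Proposition \ref{prop:separable_pool}, and then to certify that the intersection is a singleton through a variational argument. First fix full-support reference beliefs $(p_i)_{i\in I}$ and let $q^{\times}_0$ be their multiplicative pool, normalized by $Z:=\sum_{t\in S}\prod_{i\in I}p_i(t)$. Take the entropy-type separable generator $g(x)=x\log x$, so that $g'(x)=\log x+1$, and keep the state weights $\beta_s>0$ free. Use equal weights $\mu_i=1/n$.

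The key observation reduces singleton-ness to an optimization problem. Suppose $q^{\times}_0$ minimizes $\Phi(q):=\sum_{i}\mu_i D_G(q\lVert p_i)$ over $\Delta$, and set $\eta_i:=D_G(q^{\times}_0\lVert p_i)$. Then any $q$ in $\bigcap_i\hat B^G_{\eta_i}(p_i)$ satisfies $\Phi(q)\le\sum_i\mu_i\eta_i=\Phi(q^{\times}_0)=\min_\Delta\Phi$, so $q$ is also a minimizer. Since $G$ is strictly convex, so is $q\mapsto D_G(q\lVert p)$, because it equals $G(q)$ minus an affine function. Hence the minimizer is unique and the intersection is exactly $\{q^{\times}_0\}$.

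It therefore remains to choose $(\beta_s)$ so that $q^{\times}_0$ satisfies the KKT conditions of this convex problem. Since $q^{\times}_0$ is interior, these reduce to $\nabla G(q^{\times}_0)-\sum_i\mu_i\nabla G(p_i)=a\mathbf 1$ for some $a\in\R$, which is exactly the identity in Proposition \ref{prop:separable_pool}. Componentwise this reads $\beta_s d_s=a$, where
$$d_s:=\log q^{\times}_0(s)-\tfrac1n\textstyle\sum_i\log p_i(s)=(1-\tfrac1n)\sum_i\log p_i(s)-\log Z.$$
So the construction works whenever all the $d_s$ are nonzero and of a common sign. In that case set $\beta_s:=a/d_s$, choosing $a$ with the same sign as the $d_s$.

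The main obstacle is this sign condition on $(d_s)$, together with checking the regularity conditions of Appendix \ref{app:conditions}; here we are free to pick $(p_i)$ conveniently. For the sign condition, $d_s<0$ for every $s$ is equivalent to $\prod_i p_i(s)^{1-1/n}<Z$ for every $s$. I would verify this for an explicit choice, for instance $n=2$ with $p_1$ and $p_2$ close to uniform on $|S|\ge2$ states, or else tune $n$ and the $p_i$ numerically. If necessary I would also let $\mu$ be non-uniform, which adds freedom in the $d_s$.

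As a fallback there is a degenerate but valid instance. Take every $p_i$ uniform. Then $q^{\times}_0$ is uniform and all $d_s=0$, so $a=0$ works with arbitrary $\beta_s>0$. With $\eta_i=0$ the intersection is trivially $\{q^{\times}_0\}$.
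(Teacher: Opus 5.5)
Your fallback (all $p_i$ uniform, $\eta_i=0$) is valid and already proves the corollary as literally worded: strict convexity of $G$ gives $\hat B^G_0(p_i)=\{p_i\}$. It is the analogue of what the paper does for $|I|=1$. But on that profile every pooling rule of Section 3 returns the uniform belief; this is the exception in the paper's unanimity footnote. So the example does not exhibit multiplicative pooling as such, which is why the paper builds a non-uniform pool for $|I|\geq2$.

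Your main construction, which is meant to do that, fails at the sign step as planned. The condition you propose to verify, $d_s<0$ for all $s$, can never hold, for any $n$, any full-support $(p_i)$ and any convex weights $\mu$. To see this, let $\gamma(s):=\prod_i p_i(s)^{\mu_i}$, $\Gamma:=\sum_t\gamma(t)\leq1$ (weighted AM--GM) and $\hat q:=\gamma/\Gamma$. Then $d_s=\log(q^\times_0(s)/\hat q(s))-\log\Gamma$, hence $\sum_s q^\times_0(s)d_s=\KL(q^\times_0\Vert\hat q)-\log\Gamma\geq0$.

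Your suggested test case also fails generically in the other direction. For $n=2$ and $p_i=(1+\epsilon_i)/|S|$ with $\sum_s\epsilon_i(s)=0$, one gets $d_s=\frac12(\epsilon_1(s)+\epsilon_2(s))+O(\|\epsilon\|^2)$. The leading term sums to zero over states, so the signs are mixed unless $\epsilon_2\approx-\epsilon_1$. Note also that $n=|I|$ is given, so it cannot be tuned.

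The route is repairable, but only with all $d_s>0$ (so $a>0$) and a profile built for it. For $n\geq2$, fix a non-uniform $\bar q\in\Delta^{\circ}$, let $p_3,\dots,p_n$ be uniform, pick $p_1\in\Delta^{\circ}$ and set $p_2(s)\propto\bar q(s)/p_1(s)$. Then $q^\times_0=\bar q$. With $K:=\sum_t\bar q(t)/p_1(t)$, one has $n\,d_s=(n-1)\log\bar q(s)+\log K+(n-2)\log|S|$. This is positive for all $s$ once some coordinate of $p_1$ is small enough. Setting $\beta_s:=a/d_s$, your KKT reduction and variational uniqueness argument (which are correct) then close the proof.

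The paper uses the same variational device: every $q$ in the intersection satisfies $\Phi(q)\leq\Phi(q^\times_0)$, and $\Phi$ has a unique minimizer. The difference is that it places the freedom in the reference beliefs rather than in $G$. With $p_j\propto\bar q^{2}$, $p_k\propto\bar q^{-1}$ and the rest uniform, $\bar q$ is both the multiplicative pool and the $(2/3,1/3)$ geometric pool of $p_j,p_k$. So plain entropy works, and $\Phi=\KL(\cdot\Vert\bar q)+A$ yields uniqueness with no sign condition at all. Your repaired version tailors $G$ to the profile (any profile with all $d_s>0$ works, an open condition). The paper instead tailors the profile so that the Shannon generator suffices.
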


This rule is known  as \textit{multiplicative} (or product) pooling and is axiomatized by \citet{diet10}, who relates it to aggregation of private information. Its key property is that it is parameter-free. However, unlike all previous results, Corollary \ref{thm:multi} requires reference beliefs to satisfy a ``compatibility'' restriction because this rule does not satisfy the \textit{unanimity} axiom.\footnote{If $\Phi:\Delta^{|I|}\rightarrow\Delta$ is a belief-aggregation rule, the unanimity axiom requires $\Phi(p,\dots,p)=p$. Now, suppose $p_i=p$  $\forall i\in I$ and $|I|>1$. The multiplicative rule yields   $q^\times(s)=p(s)^{|I|}/\sum_{t\in S}p(t)^{|I|}$, which differs from the common $p$, except when $p$ is uniform. In contrast, all other rules above satisfy unanimity.} 
\par To summarize, all our previous examples have shown that changing the Bregman geometry changes the economic meaning of belief aggregation: it changes whether collective belief formation emphasizes optimistic  beliefs, pessimistic beliefs, or more balanced beliefs. 
\subsection{Extension: beyond singleton condition}\label{sec:nonsingleton}
The previous results focused on singleton intersections  to point-identify a unique collective belief. We now show that all previous pooling rules can be contained in the same intersection. 

\begin{obs}\label{thm:gen}
Let $\mathcal A$ be the closed convex hull of the collection of beliefs generated by all linear, geometric, power, and multiplicative pooling rules of the reference beliefs $(p_i)_{i\in I}$. Then, there exist $(\eta_i)_{i\in I}$ such that
$\mathcal A\subseteq\bigcap_{i\in I}\hat B^{\text{\normalfont\scriptsize KL}}_{\eta_i}(p_i).
$
\end{obs}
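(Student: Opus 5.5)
The plan is to reduce the statement to a compactness-plus-continuity argument. The key enabling fact is that every pooled belief has full support on the union of the reference supports, so the KL divergence from it to each $p_i$ is finite. Concretely, I would show that $\mathcal A$ is a compact subset of $\Delta$ on which each map $q\mapsto \KL(q\lVert p_i)$ is bounded, and then take $\eta_i:=\sup_{q\in\mathcal A}\KL(q\lVert p_i)$. Every $q\in\mathcal A$ then satisfies $\KL(q\lVert p_i)\le\eta_i$ for all $i$, which is exactly $q\in\bigcap_{i\in I}\hat B^{\text{\normalfont\scriptsize KL}}_{\eta_i}(p_i)$. No property of the intersection beyond its definition is used, so the radii can simply be chosen large.

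The first step is to record the support structure. Under the regularity conditions in Appendix \ref{app:conditions}, I would assume the reference beliefs share a common support, or more simply that each $p_i$ has full support on $S$; this is the support restriction the paper already imposes for the aggregation results. Under that assumption, every candidate rule produces a belief with the same support:
\begin{itemize}
\item linear pools $\sum_i\mu_ip_i$;
\item geometric pools, as in \eqref{eq:geopool};
\item power pools $\big(\sum_i\mu_ip_i(s)^r\big)^{1/r}/Z$ for $r\neq 0$;
\item the multiplicative pool $q^\times_0$.
\end{itemize}

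The second step bounds the pooled probabilities away from zero uniformly. Let $m:=\min_{i,s}p_i(s)>0$. For convex weights:
\begin{itemize}
\item linear and geometric pools have numerators at least $m$ and $m$, respectively, and denominators at most $1$;
\item a power mean of the $p_i(s)$ lies in $[m,1]$, so the normalized power pool is at least $m/|S|$;
\item the multiplicative pool is at least $m^{n}$.
\end{itemize}
Hence every generated belief lies in $\Delta_\epsilon:=\{q\in\Delta:q(s)\ge\epsilon\ \forall s\}$ with $\epsilon:=\min\{m/|S|,m^n\}>0$. The set $\Delta_\epsilon$ is closed and convex, so the closed convex hull $\mathcal A$ is also contained in $\Delta_\epsilon$. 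On $\Delta_\epsilon$, each $\KL(\cdot\lVert p_i)$ is continuous, in fact bounded by $\log(1/m)$ because $\sum_s q(s)\log q(s)\le 0$. So the supremum defining $\eta_i$ is finite, and $\eta_i\le\log(1/m)$ works uniformly.

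The main obstacle is the power-pooling family, because $r$ ranges over all of $\mathbb{R}\setminus\{0\}$. As $r\to\pm\infty$ the power mean tends to $\max_i p_i(s)$ or $\min_i p_i(s)$, and I need the lower bound to survive these limits, not just hold for each fixed $r$. The monotonicity of power means (Jensen), with $\min_ip_i(s)\le M_r\le\max_ip_i(s)$, gives the bound independently of $r$ and of $\mu$. That is what makes $\epsilon$ uniform across the family.

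If instead the reference beliefs only share a common support $S'\subsetneq S$, I would run the same argument on $\Delta(S')$. Every pool vanishes off $S'$, so the KL divergence remains finite there. If the supports differ across individuals, the statement could fail for geometric and multiplicative pools, so this is where I would lean on the appendix conditions explicitly.
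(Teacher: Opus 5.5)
Your proof is correct and is essentially the paper's argument: take $\eta_i$ to be the maximum (or supremum) of $\KL(\cdot\lVert p_i)$ over $\mathcal A$, which is finite because each $p_i$ has full support. Your second step (the uniform lower bound on pooled probabilities) and the ``main obstacle'' are unnecessary, because the bound $\KL(q\lVert p_i)\le\log(1/m)$ you derive already holds for every $q\in\Delta$; finiteness of the dual divergence needs $\operatorname{supp}q\subseteq\operatorname{supp}p_i$, not full support of $q$, so your ``key enabling fact'' points the wrong way, and with unequal supports it is pools that spread mass outside $\operatorname{supp}p_i$, such as linear pools, that break, not multiplicative ones.
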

Observation \ref{thm:gen} shows that it is possible to construct an intersection of dual entropy balls that contains all the previous aggregation rules simultaneously, so that a planner would have access to them all at once.  This result therefore highlights that when the intersection is not a singleton, the Bregman aggregation problem becomes very flexible and can accommodate several combinations of belief-aggregation rules at once. Notably, allowing the planner to have access to many popular aggregation rules at once is economically relevant because it is known in the literature of epistemic social choice that different belief-aggregation rules are appropriate in different environments \citep[e.g.,][]{epistemic16,pivato19,update20,pivato22,pivato24}. Thus, the Bregman geometry provides a tractable and unified framework to study the emergence of different pooling rules.

\subsubsection{Interpretation of singleton condition}\label{sec:singleton}
The singleton condition  can be interpreted institutionally through the lens of social \textit{deliberation}. \citet{delib24} study deliberation as repeated interactions on a social network in which individuals exchange beliefs and iteratively update their beliefs according to some rule, such as \citeauthor{degroot74}'s (\citeyear{degroot74}) rule. They show that deliberation can converge to a consensus belief that takes various forms depending on the updating rule, such as  the linear pooling  in Corollary \ref{thm:pos1} and the geometric pooling in Proposition \ref{thm:geo1}. This suggests interpreting the singleton condition  as a reduced-form outcome of a convergent deliberation process: individuals start with their own reference beliefs, agree to exchange beliefs with others, and update until a consensus belief is reached. When this deliberation process converges, the consensus belief can be interpreted as the unique collective belief in the singleton intersection.

\section{Epistemic Foundation}\label{sec:microfoundation}
 \citet[][p. 107]{pivato19} suggests the following outline for an epistemic foundation: 
 ``begin with
a detailed model of \textit{how} the voters formed their probabilistic beliefs, and from this
model derive the optimal way to aggregate these beliefs.''
\citet[][p. 29]{fleu18} recommends a similar outline. The key point of the outline is the timing, suggesting that individuals' beliefs must be formed \textit{before} the planner evaluates which probability distributions are collectively admissible. We operationalize this by modeling each individual as a rationally inattentive information source with a reference belief, an information cost, and a capacity constraint. Then, the planner evaluates which posterior reports can be certified jointly by all sources. 

\subsection{Information technology}
Let $\Omega$ be a finite state space with $|\Omega|=m\geq2$; we change the notation throughout so that the epistemic foundation is self-contained.
A signal alphabet is a finite set $Y$. A \textit{signal structure} is a collection $\sigma=\{\sigma_{\omega}\in\Delta(Y)\}_{\omega\in\Omega}$, where $\sigma_{\omega}(y)$ is the probability of signal $y$ in state $\omega$. Equivalently, $\sigma$ is a finite Blackwell experiment: before the state is observed, an individual with full-support prior $q_0\in\DeltaO$ chooses a stochastic kernel from states to signals \citep{blackwell51};\footnote{Full-support-prior assumption is standard in information theory \citep[e.g.,][]{pts23}.} after signal $y$ is realized, $q_0$ is updated to the posterior $q_y(\sigma,q_0)$ via Bayes rule. Specifically, given $q_0\in\DeltaO$, the unconditional probability of signal $y$ is
$\pi_y(\sigma,q_0):=\sum_{\omega\in\Omega}q_0(\omega)\sigma_{\omega}(y),$
and the posterior after observing $y$, when $\pi_y(\sigma,q_0)>0$, is
$$q_y(\sigma,q_0)(\omega):=\frac{q_0(\omega)\sigma_{\omega}(y)}{\pi_y(\sigma,q_0)}.$$ 

\par Fix a prior $q_0\in\DeltaO$. An \emph{information cost function} assigns to each finite signal structure $(Y,\sigma)$ a  cost $C(\sigma,q_0;Y)\in\R_+.$
The function $C$ is interpreted as the resource cost required to implement $(Y,\sigma)$ given prior $q_0$. Let $\Delta^{\circ}(\Omega)
        :=
        \{q\in\DeltaO:q(\omega)>0\hspace{0.03in} \forall \omega\in \Omega\}.$

\subsubsection{Uniform posterior separability}\label{sec:ups}

Motivated by the theoretical results of \citet{HebertWoodford2021,seq23}; \citet{caplin_dean_leahy22}; and \citet{ups25}, and the experimental evidence of \citet{exp23}, we restrict attention to cost functions in the \textit{uniform posterior separability}
family. 

\begin{definition}[Uniform posterior separability]\label{def:ups}
  \normalfont  There exists a convex differentiable real-valued function $G$ such that, for every prior $q_0\in\Delta^{\circ}(\Omega)$ and every finite signal structure $(Y,\sigma)$, the cost function takes the form
\begin{equation}\label{eq:UPS}
        C(\sigma,q_0;Y)=\sum_{y\in Y}\pi_y(\sigma,q_0)D_G(q_y(\sigma,q_0)\lVert q_0),
\end{equation}
where $D_G(\cdot\lVert \cdot)$ denotes the Bregman divergence in \eqref{eq:bregdiv}.
\end{definition}

Uniform posterior separability was introduced by \citet{caplin_dean_leahy22}. It is justified in \citet{HebertWoodford2021,seq23} by a sequential-sampling foundation: optimal evidence accumulation can induce uniformly posterior-separable costs, so \eqref{eq:UPS} can be viewed as a static reduced-form representation.
\eqref{eq:UPS} says that the total cost of an experiment is the average (over realized posteriors) of a local ``difficulty'' measure $D_G$, where
the function $G$ summarizes the local discriminability properties of the underlying sampling process; the relevant belief domain of $G$ is standard but technical, so it is stated in Appendix \ref{app:conditions}. For example, given any posterior $q$, if we set $G(q)=\sum_{\omega}q(\omega)\log q(\omega)$, then we get $D_G(q\lVert q_0)=\KL(q\lVert q_0)$, which is the dual KL divergence and \eqref{eq:UPS} becomes the mutual information used by \citet{sims03}.

\subsection{Institutional environment}\label{sec:unit}  

In our institutional setting, the planner's problem is to determine collective beliefs \textit{after} individuals have already acquired signals and formed their own beliefs. Notice that this timing differs from the timing in  canonical rational-inattention theory, which typically studies a single agent's \textit{ex-ante} choice of an information strategy subject to a \textit{capacity} constraint imposed before observing signals \citep{sims03}.\footnote{For a survey on the standard capacity constraint, see \citet[][Section 2.4.2]{survey23}.} Our posterior-reporting framework therefore requires adapting rational inattention to the object being evaluated by the planner.\footnote{Our model is not the first to adapt aspects of rational inattention; it is actually common to do so in applications. For example, \citet{cap10} adapt the framework to portfolio choice by letting investors choose what asset-payoff information to acquire before investing, and they compare entropy-based, linear-precision, and decreasing-returns learning technologies. \citet{mondria13} adapt it to financial contagion by letting investors allocate limited information-processing resources across two stock markets, imposing a linear constraint of the form $K_{i1}+K_{i2}\leq K$, and distinguishing anticipated from unanticipated crises depending on whether capacity can be adjusted. \citet{cap16} adapt it to mutual funds by letting every manager $j$ allocate signal precisions across $L$ risk factors under $\sum^L_{\ell=1}K_{\ell j}\leq K$, with signals that may concern linear combinations of asset payoffs or risk factors.} In our specific environment, the scarce information-processing resource constrains posterior reports, so the collective admissibility question is therefore not only how costly an entire signal structure is in expectation, but which posterior reports can be justified as outcomes of an individual's information technology. To answer this question, we first show that uniformly posterior-separable costs have a relevant property: their marginal costs are constant.

\subsubsection{Constant marginal cost}
\par Fix any signal structure $(Y,\sigma)$ and let $o\notin Y$ be a null signal realization. For $\alpha\in[0,1]$, define the \textit{dilution} $\alpha\cdot(Y,\sigma)$ as the signal structure on $Y\cup\{o\}$ given by
$$
\sigma_{\omega}^\alpha(y):=\alpha\sigma_{\omega}(y)\qquad\forall y\in Y;
        \qquad
        \sigma_{\omega}^\alpha(o):=1-\alpha.
$$
The diluted Blackwell experiment first randomizes between activating the original experiment with probability $\alpha$ and producing an uninformative null signal with probability $1-\alpha$.

\begin{lemma}\label{thm:dilution}
    For all $(Y,\sigma)$, all $q_0\in\Delta^{\circ}(\Omega)$, and all $\alpha\in[0,1]$,
$C(\sigma^\alpha,q_0;Y\cup\{o\})=\alpha C(\sigma,q_0;Y).$
\end{lemma}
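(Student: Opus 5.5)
The plan is a direct computation from the uniform posterior-separable form \eqref{eq:UPS}: compute the unconditional signal probabilities and posteriors of the diluted structure $\sigma^\alpha$, and check that each original signal keeps its posterior while its probability is scaled by $\alpha$, and that the null signal contributes nothing to the cost.

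First, for $y\in Y$ we have $\pi_y(\sigma^\alpha,q_0)=\sum_{\omega}q_0(\omega)\alpha\sigma_\omega(y)=\alpha\,\pi_y(\sigma,q_0)$. Whenever $\alpha\pi_y(\sigma,q_0)>0$, Bayes' rule gives
\[
q_y(\sigma^\alpha,q_0)(\omega)=\frac{q_0(\omega)\alpha\sigma_\omega(y)}{\alpha\pi_y(\sigma,q_0)}=q_y(\sigma,q_0)(\omega).
\]
So dilution leaves the posterior attached to each original signal unchanged. Second, for the null signal, $\pi_o(\sigma^\alpha,q_0)=\sum_\omega q_0(\omega)(1-\alpha)=1-\alpha$. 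When $\alpha<1$, its posterior is $q_o(\omega)=q_0(\omega)(1-\alpha)/(1-\alpha)=q_0(\omega)$. Hence $D_G(q_o\lVert q_0)=D_G(q_0\lVert q_0)=0$, since a Bregman divergence vanishes on the diagonal by \eqref{eq:bregdiv}.

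Substituting into \eqref{eq:UPS},
\[
C(\sigma^\alpha,q_0;Y\cup\{o\})=\sum_{y\in Y}\alpha\,\pi_y(\sigma,q_0)\,D_G(q_y(\sigma,q_0)\lVert q_0)+(1-\alpha)\cdot 0=\alpha\,C(\sigma,q_0;Y).
\]

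The main obstacle is bookkeeping of zero-probability signals, where posteriors are undefined. I will adopt the convention, implicit in \eqref{eq:UPS}, that the sum runs only over signals with $\pi_y>0$; equivalently, such terms are counted as zero. There are three edge cases:
\begin{itemize}
\item If $\alpha=0$, every $y\in Y$ has probability zero and $o$ has posterior $q_0$, so the cost is $0=0\cdot C(\sigma,q_0;Y)$.
\item If $\alpha=1$, the signal $o$ has probability zero and is dropped, leaving exactly $C(\sigma,q_0;Y)$.
\item Signals with $\pi_y(\sigma,q_0)=0$ also have zero probability under $\sigma^\alpha$, so they are excluded consistently on both sides.
\end{itemize}
Finally, $q_0\in\Delta^\circ(\Omega)$ guarantees that $q_0$ lies in the domain where $D_G(\cdot\lVert q_0)$ is well-defined, as required by the regularity conditions in Appendix \ref{app:conditions}.
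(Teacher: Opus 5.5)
Your proposal is correct and follows essentially the same route as the paper: each original signal's probability scales by $\alpha$ while Bayes' rule leaves its posterior unchanged, the null signal returns the prior and so contributes $D_G(q_0\lVert q_0)=0$, and zero-probability realizations are dropped from the sum. Your explicit treatment of the $\alpha=0$ case is slightly more careful than the paper's, which singles out only $\alpha=1$ and $\pi_y(\sigma,q_0)=0$ and leaves $\alpha=0$ to the zero-probability convention.
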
 

 Lemma \ref{thm:dilution} is the dilution or constant-marginal-cost property of \citet[][Axiom 3]{pts23}:
the marginal cost of increasing the success probability of an experiment is constant.   This property has a sharp implication: expected-cost budgets are too weak to constrain rare posteriors under uniform posterior separability; the next result formalizes this degeneracy.

\begin{obs}[Degeneracy of expected cost]\label{prop:degeneracy}
Fix $q_0\in\Delta^{\circ}(\Omega)$ and a target posterior $q\in\DeltaO$. Suppose there exists a signal structure $(Y,\sigma)$ and $y^\star\in Y$ such that $\pi_{y^\star}(\sigma,q_0)>0$ and $q_{y^\star}(\sigma,q_0)=q$. Then, for every budget level $\eta>0$, there exists a signal structure $(\widetilde Y,\widetilde\sigma)$ and $\widetilde y^\star\in\widetilde Y$ such that
$C(\widetilde\sigma,q_0;\widetilde Y)\leq\eta$
         and 
        $q_{\widetilde y^\star}(\widetilde\sigma,q_0)=q.$ Thus,  the same posterior can remain attainable while total expected cost can be made arbitrarily small.
\end{obs}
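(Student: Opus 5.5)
The plan is to build the cheap structure directly from the given one by dilution, using Lemma \ref{thm:dilution}. Start from the signal structure $(Y,\sigma)$ and the realization $y^\star$ with $\pi_{y^\star}(\sigma,q_0)>0$ and $q_{y^\star}(\sigma,q_0)=q$. Set $(\widetilde Y,\widetilde\sigma):=\alpha\cdot(Y,\sigma)$, so $\widetilde Y=Y\cup\{o\}$ and $\widetilde\sigma=\sigma^\alpha$ for some $\alpha\in(0,1]$ to be chosen, and take $\widetilde y^\star:=y^\star$.

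The first step is to check that the target posterior survives dilution. For $y\in Y$ the unconditional probability scales, $\pi_{y^\star}(\sigma^\alpha,q_0)=\sum_{\omega}q_0(\omega)\alpha\sigma_\omega(y^\star)=\alpha\,\pi_{y^\star}(\sigma,q_0)$, which is strictly positive because $\alpha>0$. In Bayes' rule the factor $\alpha$ appears in both numerator and denominator and cancels, giving $q_{y^\star}(\sigma^\alpha,q_0)=q_{y^\star}(\sigma,q_0)=q$.

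The second step controls the cost. By Lemma \ref{thm:dilution}, $C(\sigma^\alpha,q_0;Y\cup\{o\})=\alpha\,C(\sigma,q_0;Y)$. Write $K:=C(\sigma,q_0;Y)\in\R_+$, which is finite. If $K=0$, take $\alpha=1$. Otherwise take $\alpha:=\min\{1,\eta/K\}>0$. In both cases the diluted cost $\alpha K$ is at most $\eta$, and letting $\eta\downarrow0$ shows that the expected cost can be made arbitrarily small while $q$ remains attainable.

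There is no substantive obstacle, since the argument is direct once Lemma \ref{thm:dilution} is available. The only point needing care is that $\alpha$ must be strictly positive, so that $y^\star$ keeps positive probability and the posterior is well defined. This is why $\eta>0$ is required and why the case $K=0$ is handled separately.
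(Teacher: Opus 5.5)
Your proposal is correct and follows essentially the same route as the paper's proof. Like the paper, you dilute $(Y,\sigma)$ by $\alpha=\min\{1,\eta/C(\sigma,q_0;Y)\}$ (or $\alpha=1$ when the cost is zero), note that Bayes' rule leaves $q_{y^\star}=q$ unchanged, and invoke Lemma \ref{thm:dilution} to bound the cost by $\eta$.
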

Observation \ref{prop:degeneracy} shows why standard rational-inattention capacity constraints must be adapted in our institutional setting. Since our planner aggregates beliefs after signals are observed, expected-cost budgets would deem a posterior  admissible whenever it can be produced by some sufficiently rare signal realization.  In other words, to pass any expected-cost budget, an individual can simply ``dilute'' their experiment. This makes admissibility depend on how rarely the posterior is reached, rather than on how \textit{informative} the posterior is. Thus, for a posterior-level capacity constraint to be nontrivial, it must control the cost \textit{per unit} probability of producing a posterior.  This unit-cost notion is defined below.

\subsubsection{Unit cost of producing a posterior}

The next definition introduces the unit cost of producing a posterior belief. 

\begin{definition}\label{def:unitcost}\normalfont
Fix a prior $q_0\in\Delta^{\circ}(\Omega)$. For any target posterior $q\in\DeltaO$, define
\begin{equation}\label{eq:kappa}
\kappa(q|q_0):=
\inf\left\{
\frac{C(\sigma,q_0;Y)}{\pi_{y^\star}(\sigma,q_0)}
:
\begin{array}{l}
Y\hspace{0.03in} \text{\normalfont finite},\hspace{0.03in} \sigma=(\sigma_{\omega})_{\omega\in\Omega},\\
y^\star\in Y,\hspace{0.03in} \pi_{y^\star}(\sigma,q_0)>0,\\
q_{y^\star}(\sigma,q_0)=q
\end{array}
\right\}.
\end{equation}
\end{definition}
The normalization in \eqref{eq:kappa} divides  expected cost of a signal structure by the probability of the signal realization that induces posterior $q$. This converts an ex-ante cost into a posterior-level cost. The conversion is needed because expected cost depends on how often a posterior is reached, while admissibility of a reported posterior should depend on the informational cost required to produce that posterior when its signal realization occurs. The unit cost $\kappa(q|q_0)$ captures  this cost per realized occurrence of $q$. The next result sharpens this intuition.

\begin{obs}\label{prop:unitprice}
Fix $q_0\in\Delta^{\circ}(\Omega)$ and $q\in\DeltaO$ such that $\kappa(q|q_0)<+\infty$. For $\varepsilon\in(0,1]$, let
$$
        c^\star(\varepsilon):=\inf\Big\{C(\sigma,q_0;Y):\exists y^\star\in Y\text{ with }q_{y^\star}(\sigma,q_0)=q,\hspace{0.03in} \pi_{y^\star}(\sigma,q_0)\geq\varepsilon\Big\},
$$
with the convention $\inf\emptyset=+\infty$. Then,
\begin{enumerate}
\item For every $\varepsilon\in(0,1]$, $c^\star(\varepsilon)\geq\varepsilon\kappa(q|q_0)$.
\item For every $\delta>0$, there exists $\bar\varepsilon\in(0,1]$ such that, for all $\varepsilon\in(0,\bar\varepsilon]$,
$c^\star(\varepsilon)\leq \varepsilon(\kappa(q|q_0)+\delta).$
\end{enumerate}
\end{obs}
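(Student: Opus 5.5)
Part 1 follows directly from Definition \ref{def:unitcost}. Fix $\varepsilon\in(0,1]$. If the feasible set defining $c^\star(\varepsilon)$ is empty, then $c^\star(\varepsilon)=+\infty$ and there is nothing to prove. Otherwise, take any feasible $(Y,\sigma,y^\star)$ with $q_{y^\star}(\sigma,q_0)=q$ and $\pi_{y^\star}(\sigma,q_0)\geq\varepsilon>0$. This triple is admissible in \eqref{eq:kappa}, so
$$C(\sigma,q_0;Y)=\pi_{y^\star}\frac{C(\sigma,q_0;Y)}{\pi_{y^\star}}\geq \pi_{y^\star}\,\kappa(q|q_0)\geq \varepsilon\,\kappa(q|q_0).$$
The last step uses $\kappa\geq0$, which holds because $C\in\R_+$. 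Taking the infimum over feasible structures gives $c^\star(\varepsilon)\geq\varepsilon\kappa(q|q_0)$.

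For part 2, the plan is to pick a near-optimal structure for $\kappa$ and then tune its success probability down to exactly $\varepsilon$ by dilution, invoking Lemma \ref{thm:dilution}.
\begin{enumerate}
\item Fix $\delta>0$. Since $\kappa(q|q_0)<+\infty$, the infimum in \eqref{eq:kappa} is over a nonempty set. Choose $(Y,\sigma,y^\star)$ with $q_{y^\star}=q$, $\pi^\star:=\pi_{y^\star}(\sigma,q_0)>0$, and $C(\sigma,q_0;Y)\leq \pi^\star(\kappa(q|q_0)+\delta)$.
\item Set $\bar\varepsilon:=\pi^\star\in(0,1]$. For $\varepsilon\in(0,\bar\varepsilon]$, let $\alpha:=\varepsilon/\pi^\star\in(0,1]$ and form the dilution $\sigma^\alpha$ on $Y\cup\{o\}$.
\item Check that $y^\star$ still induces $q$. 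Its unconditional probability is $\alpha\pi^\star=\varepsilon$, and
$$q_{y^\star}(\sigma^\alpha,q_0)(\omega)=\frac{q_0(\omega)\,\alpha\,\sigma_\omega(y^\star)}{\alpha\,\pi^\star}=q(\omega),$$
since the factor $\alpha$ cancels. So $\sigma^\alpha$ is feasible for $c^\star(\varepsilon)$.
\item By Lemma \ref{thm:dilution},
$$C(\sigma^\alpha,q_0;Y\cup\{o\})=\alpha\, C(\sigma,q_0;Y)\leq \alpha\,\pi^\star(\kappa+\delta)=\varepsilon(\kappa+\delta).$$
Hence $c^\star(\varepsilon)\leq\varepsilon(\kappa(q|q_0)+\delta)$.
\end{enumerate}

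The only delicate point is bookkeeping rather than substance. The null realization $o$ must not coincide with $y^\star$, which is guaranteed because $o\notin Y$. The posterior at $o$ equals $q_0$ and does not affect feasibility. The definition of $c^\star$ allows $\pi_{y^\star}$ to be at least $\varepsilon$, and we achieve exactly $\varepsilon$, which suffices. The threshold $\bar\varepsilon$ depends on $\delta$ through the chosen near-optimizer, which the statement permits.
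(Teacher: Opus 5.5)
Your proof is correct and follows essentially the same route as the paper. In part 1 you compare each feasible structure's cost-per-probability ratio with the infimum defining $\kappa(q|q_0)$; in part 2 you take a $\delta$-optimal structure for $\kappa$, set $\bar\varepsilon$ equal to its success probability, and dilute by $\alpha=\varepsilon/\bar\varepsilon$ via Lemma \ref{thm:dilution}. The only cosmetic difference is in part 1, where you pass from $\pi_{y^\star}$ to $\varepsilon$ using $\kappa\geq 0$, whereas the paper uses $C\geq 0$.
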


Observation \ref{prop:unitprice} gives the unit cost in Definition \ref{def:unitcost} an operational interpretation. Let $c^\star(\varepsilon)$ be the minimum expected cost of a signal structure in which some signal realization inducing $q$ occurs with probability at least $\varepsilon$. Thus, 
$c^\star(\varepsilon)=\varepsilon \kappa(q|q_0)+o(\varepsilon)$, for $\varepsilon\downarrow0$.
That is, $\kappa(q|q_0)$ is the marginal expected cost of making posterior $q$ occur with small probability. Then, Appendix \ref{app:unitcost_unique} shows that the unit cost $\kappa$ is the unique nondegenerate cost index for evaluating posterior reports under constant marginal costs.  We now identify this unit cost.

\begin{lemma}\label{thm:unitbregman}
Fix $q_0\in\Delta^{\circ}(\Omega)$. For every  $q\in\DeltaO$ such that $D_G(q\lVert q_0)$ is well defined,
\begin{equation}\label{eq:kappaEqMain}
        \kappa(q|q_0)=D_G(q\lVert q_0).
\end{equation}
\end{lemma}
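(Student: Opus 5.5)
The plan is to prove the two inequalities $\kappa(q|q_0)\geq D_G(q\lVert q_0)$ and $\kappa(q|q_0)\leq D_G(q\lVert q_0)$ separately, working directly from Definition \ref{def:unitcost} and the uniformly posterior-separable form \eqref{eq:UPS}.

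\textbf{Lower bound.} Take any feasible triple $(Y,\sigma,y^\star)$ with $\pi_{y^\star}(\sigma,q_0)>0$ and $q_{y^\star}(\sigma,q_0)=q$. Every term in \eqref{eq:UPS} is nonnegative because a Bregman divergence of a convex $G$ is nonnegative. Dropping all terms except $y^\star$ therefore gives
$$C(\sigma,q_0;Y)\geq \pi_{y^\star}(\sigma,q_0)\,D_G(q\lVert q_0).$$
Dividing by $\pi_{y^\star}>0$ and taking the infimum yields $\kappa(q|q_0)\geq D_G(q\lVert q_0)$. If no feasible triple exists, the infimum is $+\infty$, so the bound holds trivially; the next step shows this case does not arise.

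\textbf{Upper bound.} I would construct a binary experiment that makes the ratio in \eqref{eq:kappa} arbitrarily close to $D_G(q\lVert q_0)$. Since $q_0$ has full support, choose $\varepsilon>0$ small enough that $\varepsilon q(\omega)\leq q_0(\omega)$ for all $\omega$; then $r_\varepsilon:=(q_0-\varepsilon q)/(1-\varepsilon)$ lies in $\DeltaO$. Set $Y=\{y^\star,o\}$ with
$$\sigma_\omega(y^\star)=\varepsilon q(\omega)/q_0(\omega),\qquad \sigma_\omega(o)=1-\sigma_\omega(y^\star).$$
Then $\pi_{y^\star}=\varepsilon$, the posterior after $y^\star$ is exactly $q$, and the posterior after $o$ is $r_\varepsilon$. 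By \eqref{eq:UPS},
$$\frac{C}{\pi_{y^\star}}=D_G(q\lVert q_0)+\frac{1-\varepsilon}{\varepsilon}D_G(r_\varepsilon\lVert q_0).$$
It remains to show the second term vanishes as $\varepsilon\downarrow0$. Note that $r_\varepsilon-q_0=\frac{\varepsilon}{1-\varepsilon}(q_0-q)$, which is $O(\varepsilon)$. If $G$ is differentiable at $q_0$ (interior, guaranteed by the regularity conditions of Appendix \ref{app:conditions}), the definition of the Bregman divergence gives $D_G(r_\varepsilon\lVert q_0)=o(\|r_\varepsilon-q_0\|)=o(\varepsilon)$. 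Hence the second term is $o(1)$, and $\kappa(q|q_0)\leq D_G(q\lVert q_0)$. This construction also shows feasibility, so $\kappa<\infty$ whenever $D_G(q\lVert q_0)$ is finite.

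\textbf{Main obstacle.} The delicate step is the $o(\varepsilon)$ estimate when $q$ lies on the boundary of the simplex or $G$ is only differentiable, not $C^2$. Differentiability of $G$ at the interior point $q_0$ is exactly what gives $G(r)-G(q_0)-\langle\nabla G(q_0),r-q_0\rangle=o(\|r-q_0\|)$, so first-order differentiability suffices. I would first verify that this is what the appendix conditions guarantee. The posterior $r_\varepsilon$ stays near $q_0$ and hence in the interior for small $\varepsilon$, so boundary issues affect only the term $D_G(q\lVert q_0)$, which is assumed well defined.
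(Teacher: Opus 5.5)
Your proposal is correct and follows the paper's proof essentially step for step. The lower bound comes from discarding the nonnegative Bregman terms for $y\neq y^\star$, and the upper bound from the binary experiment $\sigma_\omega(y^\star)=\varepsilon q(\omega)/q_0(\omega)$, whose residual posterior $r_\varepsilon=(q_0-\varepsilon q)/(1-\varepsilon)$ satisfies $D_G(r_\varepsilon\lVert q_0)=o(\varepsilon)$ by first-order differentiability of $G$ at $q_0$.

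The only cosmetic difference is harmless. The paper imposes the strict inequality $\varepsilon<\min_{\omega:q(\omega)>0}q_0(\omega)/q(\omega)$ from the outset, so $r_\varepsilon\in\Delta^{\circ}(\Omega)$ for every admissible $\varepsilon$. You use a weak inequality and recover interiority only for small $\varepsilon$.
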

This result is the main identification step of our epistemic foundation because it reveals that, for uniformly posterior-separable cost functions, the unit cost of producing a candidate posterior $q$ is equal to the difficulty  measure $D_G(q\lVert q_0)$ of distinguishing $q$ from the prior $q_0$. 

\subsection{Rational inattention, posterior certification, and representation}\label{sec:rational}
Interpret each individual $i\in I$ as a distinct information source. For each $i$, fix a full-support prior (or reference belief) $p_i\in\DeltaO$ and a cost function $C_i$ with generator $G_i$, where $\kappa_i$ is the unit cost in \eqref{eq:kappa} induced by $C_i$. Every source $i$ is rationally inattentive, in the sense that it can generate posteriors only within its fixed unit-cost \textit{budget}, denoted $\eta_i\geq0$, i.e., the set of admissible posteriors for $i$ consists of all posteriors $q\in\DeltaO$ such that $\kappa_i(q|p_i)\leq \eta_i.$\footnote{This form of capacity restriction imposed on Bregman divergences appears also in other fields; e.g., \citet{eilat2021bayesian} impose an ex-post privacy capacity by requiring the KL divergence between a posterior and prior to be bounded above; see also \citet{babichenko21} for a version in Bayesian persuasion.}  
We interpret this capacity constraint as a posterior-level analogue of the standard rational-inattention capacity constraint, adapted to our institutional environment.

\par The planner is aware that sources are rationally inattentive in the sense above, and following \citet{simsek14},  she is \textit{not} assumed to have superior (or privileged) knowledge about the state of the world.\footnote{As \citet[][p. 1754]{simsek14} remark: ``In many realistic situations, the planner does not observe the objective belief and faces the same difficulty as individuals do in discriminating different beliefs based on available data.''} As a result, she cannot verify posterior reports directly. She therefore relies on a mechanism in which sources provide evidence for candidate posteriors. As in \citet{deneck08}, this evidence is only \textit{partially verifiable}: a source can substantiate a posterior only if this posterior does not exceed its capacity constraint, and it cannot manufacture the verification of a posterior outside this constraint. Thus, any mechanism that certifies a posterior as a common evidentiary basis must be able to obtain verification from every source. This necessity requirement is what we call \textit{joint admissibility}: a collective posterior must satisfy all sources' unit-cost budgets simultaneously.
\begin{definition}[Joint admissibility]\label{def:joint}\normalfont
A nonempty set $P_0\subseteq\DeltaO$ is \emph{jointly  admissible} given the profile $\{(p_i,C_i,G_i,\eta_i)\}_{i\in I}$ if, for every $q\in P_0$,
\begin{equation}\label{eq:joint_feas}
        \kappa_i(q|p_i)\leq\eta_i \qquad \forall i\in I.
\end{equation}
\end{definition}
Appendix \ref{app:verifiability} provides the technical details that microfound joint admissibility using an institutional mechanism with partially verifiable evidence and shows that the implementable \textit{certification region} must consist of those posteriors that all sources can verify.     

\begin{theorem}\label{thm:intersection}
Suppose the cost function $C_i$ of every individual source $i$ is uniformly posterior-separable. Then, for any nonempty set $P_0\subseteq\DeltaO$, the following are equivalent:
\begin{enumerate}
\item[(1)] $P_0$ is jointly  admissible given the profile $\{(p_i,C_i,G_i,\eta_i)\}_{i\in I}$.
\item[(2)] $P_0\subseteq\bigcap_{i\in I}\hat B^{G_i}_{\eta_i}(p_i)$.
\end{enumerate}
\end{theorem}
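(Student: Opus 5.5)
The equivalence is essentially a pointwise statement, so I will reduce it to a statement about single posteriors and then invoke Lemma \ref{thm:unitbregman}. By Definition \ref{def:joint}, $P_0$ is jointly admissible if and only if every $q\in P_0$ satisfies $\kappa_i(q|p_i)\leq\eta_i$ for all $i\in I$. By the definition of the dual ball in \eqref{eq:dualball}, $P_0\subseteq\bigcap_{i\in I}\hat B^{G_i}_{\eta_i}(p_i)$ if and only if every $q\in P_0$ satisfies $D_{G_i}(q\lVert p_i)\leq\eta_i$ for all $i\in I$. It therefore suffices to show that, for each fixed source $i$ and each $q\in\DeltaO$,
$$\kappa_i(q|p_i)\leq\eta_i \quad\Longleftrightarrow\quad D_{G_i}(q\lVert p_i)\leq\eta_i.$$
Both directions of the theorem then follow by quantifying over $q\in P_0$ and $i\in I$.

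For the pointwise equivalence, I fix $i$. Each $C_i$ is uniformly posterior-separable with generator $G_i$, and each $p_i$ is a full-support reference belief, so $p_i\in\Delta^{\circ}(\Omega)$. Lemma \ref{thm:unitbregman}, applied with $q_0=p_i$ and $G=G_i$, then gives $\kappa_i(q|p_i)=D_{G_i}(q\lVert p_i)$ whenever $D_{G_i}(q\lVert p_i)$ is well defined. On that domain the two inequalities are literally the same condition.

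The only real obstacle is the boundary case: posteriors $q$ for which $D_{G_i}(q\lVert p_i)$ is not well defined, for instance because $G_i$ or its gradient blows up on the boundary of the simplex under the regularity conditions in Appendix \ref{app:conditions}. For such $q$ I will adopt the convention that $D_{G_i}(q\lVert p_i)=+\infty$, so $q\notin\hat B^{G_i}_{\eta_i}(p_i)$. I then need to verify that $\kappa_i(q|p_i)>\eta_i$ as well. The expected argument is that any signal structure producing such a $q$ has cost $C_i(\sigma,p_i;Y)\geq\pi_{y^\star}D_{G_i}(q\lVert p_i)$, because $D_{G_i}\geq 0$ and the cost in \eqref{eq:UPS} is a sum of nonnegative terms. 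That lower bound is infinite, or else the cost in \eqref{eq:UPS} is itself undefined, in which case such structures are excluded from the infimum and $\kappa_i=+\infty$. Either way $\kappa_i(q|p_i)=+\infty>\eta_i$. The same one-line lower bound also confirms the inequality $\kappa_i\geq D_{G_i}$ in general, so the argument does not rest on the exact statement of the lemma at the boundary.

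Finally, the equivalence places no restriction on $P_0$ beyond nonemptiness, so the argument holds for every nonempty $P_0$. In particular, the largest jointly admissible set is exactly $\bigcap_{i\in I}\hat B^{G_i}_{\eta_i}(p_i)$.
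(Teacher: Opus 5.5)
Your proof is correct and follows essentially the same route as the paper: unpack Definition \ref{def:joint} pointwise, apply Lemma \ref{thm:unitbregman} source by source with $q_0=p_i$ to replace $\kappa_i(q|p_i)$ by $D_{G_i}(q\lVert p_i)$, and read off membership in the dual balls. Your boundary-case discussion is a harmless extra. The paper never needs it, because its regularity conditions in Appendix \ref{app:conditions} simply assume that $D_{G}(q\lVert q_0)$ is well defined for every posterior used in the theorem.
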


This equivalence is the main representation result of our foundation. Building on Lemma \ref{thm:unitbregman}, uniform posterior separability turns each source's posterior-level budget into a dual Bregman ball, and joint admissibility is therefore equivalent to inclusion in the intersection of those balls. Section \ref{sec:return} provides a discussion of Theorem \ref{thm:intersection} and relates it to Observation \ref{thm:agg}, then Section \ref{sec:micro_primal} completes our foundation by providing a parallel result for primal balls. 
\par We also know from \citet[][Theorem 3]{caplin_dean_leahy22} that the Shannon cost function is uniquely identified within the uniformly posterior-separable family by an axiom called ``invariance under compression.''\footnote{This axiom states that behavior is invariant to changes in the decision problem that leave the probabilistic structure of payoffs unchanged---meaning the nature of the state space itself is behaviorally irrelevant.} This means that including that axiom in Theorem \ref{thm:intersection} would pin down the dual entropy ball $\hat B^{\text{\normalfont\scriptsize KL}}_{\eta_i}(p_i)$ in Theorem \ref{thm:intersection}.(2). We should note that a similar axiomatization of other Bregman divergences remains an open problem in information theory. 

\subsection{Discussion}\label{sec:return}

We can now relate Theorem \ref{thm:intersection} and  Observation \ref{thm:agg}. DGHT provide a normative justification for collective beliefs to lie in an intersection. We complement their agenda by providing an epistemic justification: each rationally-inattentive source generates a posterior-feasibility set, and the planner's  certification region contains the posteriors consistent with all of these feasibility constraints.  Theorem \ref{thm:intersection} identifies this region with the intersection of dual Bregman balls, i.e.,  the $\hat B^{G_i}_{\eta_i}(p_i)$'s in Theorem \ref{thm:intersection}.(2) play the role of the $P_i$'s in  Observation \ref{thm:agg}.

\subsubsection{A partial-identification analogy}\label{sec:partial}
Our planner's problem can be interpreted as an institutional version of the partial-identification problem studied by \citet{manski03}. In \citet[][Chapters 1.1, 1.4, and 2.4]{manski03}, the object of interest is a probability distribution that is not point-identified by the available evidence alone. An analyst therefore characterizes the \textit{identification region}---the set of distributions that remain admissible after the evidence and the maintained restrictions are imposed. Our planner faces the same kind of set-valued problem, but at the level of posterior beliefs. The maintained restrictions are the source-specific unit-cost constraints generated by rational inattention. The identification region is therefore an analogue of the certification region formalized in  Appendix \ref{app:verifiability}: it describes which candidate posterior every source can substantiate.

\par The key feature that relates our environment to \citet{manski03} is that the identification region in his chapters above also takes an intersection form: different pieces of evidence or maintained restrictions generate sets of admissible probability distributions, and the identified set consists of the distributions that satisfy all of them at once.\footnote{\citet[][]{manski03} studies the partial identification of probability distributions---a problem that arises when an analyst does not have enough information or evidence to point-identify a distribution of interest. Chapter 1.1 introduces the basic logic: when an additional maintained restriction is imposed, the original evidence-based region is intersected with the set allowed by that restriction. Chapter 1.4 then studies finitely many sampling processes, each of which implies its own set of admissible distributions, and the identification region is their intersection. Chapter 2.4 shows that a closely related finite-intersection structure arises when the maintained restriction is imposed across finitely many observable conditions.} The analogy is useful because both frameworks produce an intersection of admissible probability distributions. In our institutional environment, the restrictions are sources' capacity constraints generated by rational inattention and interpreted via the institutional mechanism in Appendix \ref{app:verifiability}.

\subsection{Primal Bregman balls}\label{sec:micro_primal}

Section \ref{sec:ups} imposes uniform posterior separability, under which the local cost of signal realization $y$ is $D_G(q_y\lVert q_0)$. This order of arguments delivers dual Bregman balls. The opposite order has a different interpretation. Rather than pricing the difficulty of moving from the reference belief $q_0$ to the realized posterior $q_y$, it prices the \textit{regret} from using the contingency-specific model $q_y$ when the reference belief $q_0$ is the benchmark predictive model. We show  that \textit{proper scoring rules} provide a natural foundation for this interpretation.

A scoring rule assigns a numerical reward to a reported distribution after the state is realized. It is proper if the expected reward is maximized by truthful reporting \citep[e.g.,][]{gneiting07}. Let $\iota_{\omega}\in\R^m$ denote the Dirac vector on state $\omega\in\Omega$. Given a convex differentiable real-valued function $G$, the \textit{score} associated with $G$ is
$s_G(r,\omega):=G(r)+\langle\nabla G(r),\iota_{\omega}-r\rangle$.
For any data-generating distribution $q$, the expected score from report $r$ is $G(r)+\langle\nabla G(r),q-r\rangle$, which is maximized at $r=q$ by convexity of $G$. Notably, the expected regret from reporting $r$ instead of $q$ is the Bregman divergence:
\begin{equation}\label{eq:score_regret}
        \E_q\big[s_G(q,\omega)-s_G(r,\omega)\big]
        =
        D_G(q\lVert r).
\end{equation}
This scoring-rule regret interpretation of Bregman divergences is standard in the probabilistic forecasting and elicitation literature \citep[e.g.,][]{savage71,banerjee05,score18,patton2020comparing}. For example, when $G(q)=\sum_{\omega} q(\omega)\log q(\omega)$, the score is the log score and \eqref{eq:score_regret} becomes the log-loss regret $\KL(q\lVert r)$; when $G(q)=\sum_{\omega}q(\omega)^2$, the score is the so-called Brier score \citep{brier50} and \eqref{eq:score_regret} becomes the quadratic-loss regret.

\par After observing signal $y$, a source reports the predictive model $q_y(\sigma,q_0)$. In this reporting environment, the reference belief $q_0$ is a benchmark model and the realized posterior $q_y(\sigma,q_0)$ is the model proposed after the signal realization. A proper-scoring-rule audit evaluates the regret from validating this contingency-specific report relative to the benchmark. Thus, the local validation cost of report $q_y(\sigma,q_0)$ is
$D_G(q_0\lVert q_y(\sigma,q_0)).$
Averaging this report-level regret across signal realizations yields a uniformly posterior-separable cost, as in \citet{caplin_dean_leahy22}, but with the reverse order of arguments. The Blackwell experiment, Bayes rule, and averaging over  posteriors are unchanged; the Bregman term now measures the cost of certifying posterior reports against a benchmark predictive model. This is formalized below.

\begin{definition}[\textit{Reverse} uniform posterior separability]\label{def:upsprime}
\normalfont
There exists a convex differentiable function $G$, such that, for every prior $q_0\in\Delta^{\circ}(\Omega)$, the cost function $\tilde C$ satisfies
\begin{equation}\label{eq:UPSprime}
        \tilde C(\sigma,q_0;Y)
        =
        \sum_{y\in Y}\pi_y(\sigma,q_0)D_G(q_0\lVert q_y(\sigma,q_0)),
\end{equation}
for every finite signal structure $(Y,\sigma)$ where the right-hand side is well defined.
\end{definition}
\subsubsection{Blackwell informativeness}
\par The reverse formulation is also disciplined by Blackwell informativeness. If an experiment is garbled, then the posterior induced by each garbled signal is an average of the posteriors induced by the original experiment. Thus, when the report-level regret $q\mapsto D_G(q_0\lVert q)$ is convex (e.g., entropy or quadratic losses), garbling weakly lowers expected validation cost.

\begin{obs}[Blackwell monotonicity]\label{prop:rev_blackwell}
Fix $q_0\in\Delta^{\circ}(\Omega)$ and suppose
$F_{q_0}(q):=D_G(q_0\lVert q)$ is convex on $\Delta^{\circ}(\Omega)$. Let $(Y,\sigma)$ be a finite signal structure such that $q_y(\sigma,q_0)\in\Delta^{\circ}(\Omega)$ whenever $\pi_y(\sigma,q_0)>0$. If $(Z,\tau)$ is a garbling of $(Y,\sigma)$, then
$\tilde C(\tau,q_0;Z)\leq \tilde C(\sigma,q_0;Y).$
\end{obs}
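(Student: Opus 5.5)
The plan is to reduce the claim to Jensen's inequality applied to the convex function $F_{q_0}$, using the standard fact that a garbling replaces each posterior by a convex combination of the original posteriors, with the unconditional signal distribution preserved.

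First, I will formalize the garbling. $(Z,\tau)$ is a garbling of $(Y,\sigma)$ if there is a Markov kernel $\gamma:Y\to\Delta(Z)$ with
$$\tau_\omega(z)=\sum_{y\in Y}\sigma_\omega(y)\gamma(z|y)\qquad\forall\omega,z.$$
Summing over $\omega$ with weights $q_0(\omega)$ gives
$$\pi_z(\tau,q_0)=\sum_{y}\pi_y(\sigma,q_0)\gamma(z|y).$$
For any $z$ with $\pi_z(\tau,q_0)>0$, Bayes rule yields
$$q_z(\tau,q_0)=\sum_{y:\pi_y(\sigma,q_0)>0}w_{zy}\,q_y(\sigma,q_0),\qquad w_{zy}:=\frac{\pi_y(\sigma,q_0)\gamma(z|y)}{\pi_z(\tau,q_0)}.$$
The weights $w_{zy}$ are nonnegative and sum to one. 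Each $q_y$ lies in $\Delta^\circ(\Omega)$ by hypothesis, and $\Delta^\circ(\Omega)$ is convex, so every $q_z$ also lies in $\Delta^\circ(\Omega)$. Consequently $\tilde C(\tau,q_0;Z)$ is well defined and every term lies in the domain where $F_{q_0}$ is convex.

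Second, I will apply Jensen's inequality to each garbled posterior:
$$F_{q_0}(q_z)\leq\sum_y w_{zy}F_{q_0}(q_y).$$
Multiplying by $\pi_z(\tau,q_0)$ and summing over $z$ with $\pi_z>0$ gives
$$\tilde C(\tau,q_0;Z)\leq\sum_y\pi_y(\sigma,q_0)F_{q_0}(q_y)\sum_z\gamma(z|y)=\tilde C(\sigma,q_0;Y),$$
since $\sum_z\gamma(z|y)=1$. Signals with zero probability contribute nothing on either side, so they can be dropped throughout.

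The only delicate point is domain bookkeeping: verifying that the garbled posteriors remain in $\Delta^\circ(\Omega)$, where convexity is assumed, and that excluding zero-probability signals does not disturb the weight computation. Both follow from the convex-combination identity, so I expect no substantive obstacle.
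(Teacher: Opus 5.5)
Your proposal is correct and matches the paper's proof: you write each garbled posterior as a convex combination of the original posteriors with weights $\pi_y\gamma(z|y)/\pi_z(\tau,q_0)$, apply convexity of $F_{q_0}$, multiply by $\pi_z(\tau,q_0)$, and sum using $\sum_z\gamma(z|y)=1$. Your check that the garbled posteriors remain in $\Delta^{\circ}(\Omega)$ is bookkeeping the paper leaves implicit, and it is what the convexity hypothesis on $\Delta^{\circ}(\Omega)$ requires.
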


Observation \ref{prop:rev_blackwell} gives the reverse cost an information-theoretic interpretation. More informative experiments generate posterior reports that are weakly more costly to validate under scoring-rule regret. Thus, reverse uniform posterior separability represents a scarce model-validation capacity: a source can report posterior models only up to the amount of scoring-rule regret it can certify relative to its benchmark model. 

\subsubsection{Representation}
\par Define $\tilde\kappa(\cdot|\cdot)$ as in \eqref{eq:kappa} by replacing $C$ by $\tilde C$. For each source $i\in I$, let $\tilde\kappa_i$ be the unit cost generated by $\tilde C_i$. The unit-cost argument is unchanged from Lemma \ref{thm:unitbregman}, except for the order of the Bregman divergence; the proof is given in Appendix \ref{app:proofs}. Joint admissibility is also defined just as in Definition \ref{def:joint} with $\tilde\kappa_i$ in place of $\kappa_i$, so we omit its statement for brevity.

\begin{theorem}\label{thm:revjoint}
Suppose the cost function $\tilde{C}_i$ of every individual source $i$ is reverse uniformly posterior-separable. Then, for any nonempty set $P_0\subseteq\Delta^{\circ}(\Omega)$, the following are equivalent:
\begin{enumerate}
\item[(1)] $P_0$ is jointly  admissible given the profile $\{(p_i,\tilde{C}_i,G_i,\eta_i)\}_{i\in I}$.
\item[(2)] $P_0\subseteq\bigcap_{i\in I} B^{G_i}_{\eta_i}(p_i)$.
\end{enumerate}
\end{theorem}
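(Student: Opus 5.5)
The plan is to reduce Theorem \ref{thm:revjoint} to a reverse analogue of Lemma \ref{thm:unitbregman}. The key claim is that for every full-support prior $q_0\in\Delta^{\circ}(\Omega)$ and every $q\in\Delta^{\circ}(\Omega)$ at which $D_G(q_0\lVert q)$ is well defined,
\[
\tilde\kappa(q|q_0)=D_G(q_0\lVert q).
\]
Given this identity, the equivalence is immediate. Joint admissibility says that every $q\in P_0$ satisfies $\tilde\kappa_i(q|p_i)\le\eta_i$ for all $i$. By the identity, this is the same as $D_{G_i}(p_i\lVert q)\le\eta_i$ for all $i$. That is exactly $q\in B^{G_i}_{\eta_i}(p_i)$ for every $i$, by the definition of the primal ball \eqref{eq:primalball}. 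Hence (1) holds if and only if $P_0\subseteq\bigcap_i B^{G_i}_{\eta_i}(p_i)$. The restriction $P_0\subseteq\Delta^{\circ}(\Omega)$ in the statement ensures the reverse divergences are well defined.

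For the upper bound $\tilde\kappa(q|q_0)\le D_G(q_0\lVert q)$, I would build an explicit two-signal experiment. Since $q$ and $q_0$ both have full support, there is $\pi\in(0,1)$ small enough that $q_0=\pi q+(1-\pi)r_\pi$ with
\[
r_\pi:=\frac{q_0-\pi q}{1-\pi}\in\Delta^{\circ}(\Omega).
\]
By the splitting lemma (Bayes plausibility), there is a binary experiment whose posteriors are $q$ with probability $\pi$ and $r_\pi$ with probability $1-\pi$. Its reverse cost divided by $\pi$ is
\[
D_G(q_0\lVert q)+\frac{1-\pi}{\pi}\,D_G(q_0\lVert r_\pi).
\]
Because $r_\pi-q_0=\pi(q_0-q)/(1-\pi)=O(\pi)$, smoothness of $G$ near the interior point $q_0$ gives $D_G(q_0\lVert r_\pi)=O(\pi^2)$. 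The second term is therefore $O(\pi)\to0$, and taking $\pi\downarrow0$ yields the upper bound. This step mirrors the dilution logic of Lemma \ref{thm:dilution} and Observation \ref{prop:unitprice}.

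For the lower bound, take any experiment with a signal $y^\star$ inducing $q$ with positive probability. Since all terms of the reverse cost are nonnegative,
\[
\tilde C(\sigma,q_0;Y)\ge \pi_{y^\star}(\sigma,q_0)\,D_G\bigl(q_0\lVert q_{y^\star}(\sigma,q_0)\bigr)=\pi_{y^\star}(\sigma,q_0)\,D_G(q_0\lVert q).
\]
Dividing by $\pi_{y^\star}$ and taking the infimum gives $\tilde\kappa(q|q_0)\ge D_G(q_0\lVert q)$. Note that convexity of $F_{q_0}$ is not needed anywhere; only nonnegativity of the Bregman divergence is used.

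The main obstacle is the upper bound, specifically showing that the residual $D_G(q_0\lVert r_\pi)/\pi$ vanishes. This needs $G$ differentiable with a locally Lipschitz gradient, or twice differentiable, at $q_0$, together with $r_\pi$ staying in the domain where the reverse cost is well defined. I would invoke the regularity conditions of Appendix \ref{app:conditions} for this. If only differentiability is available, I would instead use the first-order expansion
\[
D_G(q_0\lVert r_\pi)=G(q_0)-G(r_\pi)-\langle\nabla G(r_\pi),q_0-r_\pi\rangle=o(\lVert q_0-r_\pi\rVert)=o(\pi),
\]
which requires continuity of $\nabla G$ at $q_0$. Then $\frac{1-\pi}{\pi}\,o(\pi)\to0$, which suffices.
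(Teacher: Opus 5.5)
Your proposal is correct and takes the same route as the paper. The paper first proves $\tilde\kappa(q|q_0)=D_G(q_0\lVert q)$ for $q\in\Delta^{\circ}(\Omega)$ (Lemma \ref{prop:unitprbregman}), using the same nonnegativity lower bound and the same binary experiment $\sigma_\omega(y^\star)=\varepsilon q(\omega)/q_0(\omega)$ with residual posterior $r_\varepsilon=(q_0-\varepsilon q)/(1-\varepsilon)$; your fallback first-order expansion using continuity of $\nabla G$ at $q_0$ is exactly its Lemma \ref{lem:rev_small}, after which the primal-ball equivalence is read off source by source as you describe.
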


Thus, replacing $C$ by $\tilde C$ in Theorem \ref{thm:intersection} changes only the order of the Bregman divergence. The admissible belief sets therefore switch from dual balls $\hat B^{G_i}_{\eta_i}(p_i)$ to primal balls $B^{G_i}_{\eta_i}(p_i)$.

\section{Application: Financial Markets}\label{sec:app_finance}

This application studies some observable market implications of our epistemic foundation. Financial markets are relevant even though they do not contain a planner who explicitly chooses collective beliefs. In complete markets, equilibrium prices aggregate heterogeneous probabilistic beliefs into a state-price kernel. In the incomplete-preference economy of \citet{RigottiShannon2005}, called ``Bewley economy,'' a shared belief in the intersection of individual belief sets can support full-insurance equilibria; when the intersection is a singleton, it is the unique normalized Arrow price vector. Thus, the intersection rule has a market interpretation: it identifies which belief is capitalized into prices. We use this observation to show the pricing implications of the Bregman geometry: entropy balls make the Bewley economy mimic log-utility and exponential-utility risk economies, while the broader separable Bregman family delivers power-pooled prices with an isoelastic risk-sharing interpretation.

\subsection{Setup and preliminaries}
Consider a two-date Arrow-Debreu exchange economy. At date $1$, one of the states $s\in S$ realizes. There is one consumption good in each state and a complete set of Arrow securities traded at date $0$. Let $I$ be a finite set of market participants. We impose the assumptions in \citet[][A1--A3]{RigottiShannon2005}: for each individual $i\in I$, let $\omega^i\in\R^S_{++}$ be the date-$1$ endowment, let $u_i:\R_+\to\R$ be differentiable, strictly increasing, concave, such that $u_i'(z)>0$ for all $z$, and let $P_i\subseteq\DeltaS$ be a nonempty closed convex set of beliefs. Individual $i$'s unambiguous preference over consumption bundles has the following representation:
$$
        x^i\succ_i y^i
        \quad\Longleftrightarrow\quad
        \sum_{s\in S}\pi(s)u_i(x^i_s)
        >
        \sum_{s\in S}\pi(s)u_i(y^i_s)
        \quad\forall\pi\in P_i,
$$
where this strict preference is imposed in \citet{RigottiShannon2005}. 
Assume no aggregate uncertainty:
$\sum_{i\in I}\omega^i_s=\bar W
        $ $\forall s\in S$, for $\bar W>0$.
For $x^i\in\R^S_{++}$, define the marginal-belief set 
$$
        P_i(x^i)
        :=
        \left\{
        q\in\DeltaS:
        q(s)=
        \frac{\pi(s)u_i'(x_s^i)}
        {\sum_{t\in S}\pi(t)u_i'(x_t^i)}
        \text{ for some }\pi\in P_i
        \right\}.
$$
 A feasible allocation $(x^i)_{i\in I}$ is a \emph{full-insurance allocation} if each $x^i$ is constant across states, say $x_s^i=a^i$ for every $s$. Notice that if $x^i$ is constant across states, then $P_i(x^i)=P_i$. 

\begin{lemma}\label{lem:finance_RS}
Suppose the preceding assumptions hold. If $\bigcap_{i\in I}P_i\neq\varnothing$, then every full-insurance allocation is Pareto optimal. Moreover, for every $q\in\bigcap_{i\in I}P_i$, the full-insurance allocation defined by
$x_s^i\equiv q\cdot \omega^i$ $\text{for every }i\in I\text{ and }s\in S$ is an equilibrium supported by the normalized Arrow price vector $q$. If
$\bigcap_{i\in I}P_i=\{q^*\},$
then every full-insurance equilibrium has the unique normalized Arrow price vector $q^*$.
\end{lemma}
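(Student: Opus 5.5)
The plan is to reduce all three claims to the first-order characterization of Pareto optimality and equilibrium in Bewley economies from \citet{RigottiShannon2005}, and to use the observation already made in the text that $P_i(x^i)=P_i$ whenever $x^i$ is constant. The three parts are handled in turn.

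\textbf{Pareto optimality.} Take a full-insurance allocation $x^i_s\equiv a^i$ and suppose some feasible $(y^i)$ Pareto dominates it, so that $y^i\succ_i x^i$ for every $i$ under the strict representation. Fix $q\in\bigcap_i P_i$; it lies in every $P_i$. Each strict preference then gives $\sum_s q(s)u_i(y^i_s)>u_i(a^i)$. Concavity gives $u_i(y^i_s)\le u_i(a^i)+u_i'(a^i)(y^i_s-a^i)$, and since $u_i'(a^i)>0$ we obtain $\sum_s q(s)y^i_s>a^i$ for every $i$. Summing over $i$ yields $\sum_s q(s)\sum_i y^i_s>\sum_i a^i=\bar W$, which contradicts feasibility $\sum_i y^i_s=\bar W$. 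I would adopt the Rigotti--Shannon notion of dominance (strict improvement for all agents) so that the strict representation applies directly. If weak dominance is needed, the same inequality still holds, with strict inequality for at least one agent, provided weak preference is read as holding for all $\pi$.

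\textbf{Equilibrium.} For $q\in\bigcap_iP_i$, set prices equal to $q$ and let $x^i_s\equiv q\cdot\omega^i$. Market clearing holds because $\sum_i q\cdot\omega^i=q\cdot\bar W\mathbf 1=\bar W$. The budget constraint binds by construction. For optimality, suppose an affordable bundle $y$ with $q\cdot y\le q\cdot\omega^i$ satisfies $y\succ_i x^i$. Evaluating at $\pi=q\in P_i$ and applying the concavity bound above gives $q\cdot y>q\cdot\omega^i$, a contradiction. Hence no affordable bundle is strictly preferred, which is the Bewley equilibrium notion.

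\textbf{Uniqueness of prices.} Let $p$ be the normalized Arrow price vector supporting a full-insurance equilibrium $x^i\equiv a^i$. The key step, and the main obstacle, is to show that $p\in P_i$ for every $i$. I would argue by contrapositive with a separating hyperplane. If $p\notin P_i(x^i)=P_i$, then since $P_i$ is closed and convex there is a direction $z\in\R^S$ with $p\cdot z<0$ and $\pi\cdot z>0$ for all $\pi\in P_i$. Because $P_i$ is compact, these inequalities hold uniformly, with some margin $\delta>0$. For small $t>0$, differentiability of $u_i$ gives $\sum_s\pi(s)u_i(a^i+tz_s)-u_i(a^i)=t\,u_i'(a^i)\,\pi\cdot z+o(t)>0$. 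The $o(t)$ term is uniform in $\pi$ because $S$ is finite and $\pi$ ranges over the simplex. So $a^i+tz$ is strictly preferred and costs less, contradicting optimality; this is exactly Rigotti--Shannon's supporting-price characterization. It follows that $p\in\bigcap_iP_i=\{q^*\}$, so $p=q^*$.
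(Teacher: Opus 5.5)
Your proof is correct, and its skeleton matches the paper's: a shared prior yields Pareto optimality; the allocation $x^i_s\equiv q\cdot\omega^i$ with prices $q$ is an equilibrium because $P_i(x^i)=P_i$; and any supporting price lies in $\bigcap_i P_i(x^i)=\{q^*\}$. The difference is that the paper simply invokes \citet{RigottiShannon2005}, namely Corollary 2 for Pareto optimality and Theorem 4 for the equilibrium characterization and for uniqueness, whereas you derive each step from first principles.

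\begin{itemize}
\item The supergradient inequality $u_i(z)\le u_i(a)+u_i'(a)(z-a)$, evaluated at the shared prior $q$, does double duty. It gives both Pareto optimality and individual optimality at prices $q$.
\item A strict separation plus a first-order expansion, uniform over the compact $P_i$, shows that any normalized supporting price must lie in every $P_i$.
\end{itemize}

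The paper's route is shorter and inherits Rigotti--Shannon's exact definitions and full if-and-only-if characterizations. Yours is self-contained and makes visible which hypotheses do the work: concavity, differentiability at the full-insurance consumption level, closedness and convexity of $P_i$, and finiteness of $S$. In particular, you spell out the necessity claim that \emph{every} supporting price lies in $\bigcap_i P_i(x^i)$. The paper asserts this by appeal to a theorem phrased as the existence of \emph{some} price in that intersection satisfying the budget equations.

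Two small details to add when you write it up:
\begin{itemize}
\item Shift $z$ by a multiple of $\mathbf 1$, using $\sum_s p(s)=\sum_s\pi(s)=1$, so that the separating threshold is zero.
\item Note that $a^i>0$: the budget binds by strict monotonicity, and $p$ is a nonnegative normalized price with $\omega^i\in\R^S_{++}$. Hence $a^i\mathbf 1+tz$ remains a nonnegative consumption bundle for small $t>0$.
\end{itemize}
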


Full insurance  isolates the pricing role of beliefs. With no aggregate uncertainty, variation in normalized Arrow prices is not driven by aggregate consumption risk, but instead, it is driven by the shared belief selected from the intersection of belief sets. Thus, Lemma \ref{lem:finance_RS} clarifies the economic role of the singleton intersection: in a full-insurance Bewley equilibrium, a singleton intersection of belief sets pins down the unique normalized Arrow price. 
\par For comparison, Lemma \ref{prop:finance_risk_benchmarks}  presents two classical complete-preference benchmarks that aggregate heterogeneous priors in different ways. Throughout, every individual $i$ has a full-support reference belief $p_i\in P_i$, which later will represent the center of their Bregman ball.

\begin{lemma}\label{prop:finance_risk_benchmarks}
Let $J\subseteq I$ be a nonempty set and let $(\mu_i)_{i\in J}$ satisfy $\mu_i>0$ and $\sum_{i\in J}\mu_i=1$.

\emph{(i)} The log-utility planner problem
$$
        \max_{( x^i)_{i\in J}}
        \sum_{i\in J}\mu_i\sum_{s\in S}p_i(s)\log x_s^i
        \qquad\text{s.t.}\qquad
        \sum_{i\in J}x_s^i=\bar W\quad\forall s\in S
$$
is supported by the normalized Arrow price 
$q^{\mathrm L}(s)=\sum_{i\in J}\mu_i p_i(s)$ $\forall s\in S.$

\emph{(ii)} Fix $\tau>0$, set $\tau_i:=\tau\mu_i$, and let $ x^i\in\mathbb R^S$. The exponential-utility planner problem
$$
        \max_{( x^i)_{i\in J}}
        \sum_{i\in J}\alpha_i\sum_{s\in S}p_i(s)\bigl(-e^{-x_s^i/\tau_i}\bigr)
        \qquad\text{s.t.}\qquad
        \sum_{i\in J}x_s^i=\bar W\quad\forall s\in S,
$$
where $\alpha_i>0$, is supported by the normalized Arrow price
$q^{\mathrm H}(s)
        =
        \frac{\prod_{i\in J}p_i(s)^{\mu_i}}
        {\sum_{t\in S}\prod_{i\in J}p_i(t)^{\mu_i}}
        $ $\forall s\in S.$
\end{lemma}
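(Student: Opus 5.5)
The plan is to solve each planner problem through its first-order conditions. In both cases I will identify the state-price vector as the common multiplier on the resource constraint, normalized to sum to one. Both objectives are strictly concave and the constraints are linear, so the KKT conditions are necessary and sufficient. The normalized multiplier vector then supports the optimum as a competitive allocation in the usual second-welfare-theorem sense: each agent's bundle maximizes its own weighted expected utility at those prices. I will also check that the sign conventions (positive multipliers, interior solutions) are satisfied.

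\emph{(i) Log utility.} Let $\lambda_s$ be the multiplier on $\sum_{i\in J}x^i_s=\bar W$. The FOC for $x^i_s$ is $\mu_i p_i(s)/x^i_s=\lambda_s$, so $x^i_s=\mu_i p_i(s)/\lambda_s$. Summing over $i\in J$ and using feasibility gives $\bar W=\sum_i\mu_i p_i(s)/\lambda_s$, hence $\lambda_s=\sum_i\mu_ip_i(s)/\bar W$. Since $\sum_s\sum_i\mu_ip_i(s)=1$, normalizing yields $q^{\mathrm L}(s)=\sum_{i\in J}\mu_ip_i(s)$. Interiority holds because the $p_i$ have full support and $\mu_i>0$, and the log utility forces $x^i_s>0$.

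\emph{(ii) Exponential utility.} The FOC is $\alpha_i p_i(s)\tau_i^{-1}e^{-x^i_s/\tau_i}=\lambda_s$. Taking logs gives
\[
x^i_s=\tau_i\bigl[\log(\alpha_i/\tau_i)+\log p_i(s)-\log\lambda_s\bigr].
\]
Summing over $i$ with $\sum_i\tau_i=\tau$ and imposing $\sum_ix^i_s=\bar W$ gives
\[
\tau\log\lambda_s=\sum_i\tau_i\log p_i(s)+\sum_i\tau_i\log(\alpha_i/\tau_i)-\bar W.
\]
Dividing by $\tau$ and using $\tau_i/\tau=\mu_i$,
\[
\lambda_s=K\prod_{i\in J}p_i(s)^{\mu_i},
\]
where $K>0$ does not depend on $s$. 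Normalizing across states removes $K$ and yields $q^{\mathrm H}$. Here consumption is unrestricted in $\mathbb R^S$, so the solution is interior automatically.

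The main point needing care is not the algebra but justifying that ``supported by'' the normalized multiplier is legitimate. The strict concavity of the objective and the linear constraints give the supporting hyperplane. The normalization constant is state-independent in both cases because the state-dependent and agent-dependent factors separate multiplicatively. In case (ii), this separation relies on the equal-tolerance-share structure $\tau_i=\tau\mu_i$, which is what makes the geometric weights exactly $\mu_i$.
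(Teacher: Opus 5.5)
Your proposal is correct and follows essentially the same route as the paper's proof. In both parts you derive first-order conditions with state multipliers $\lambda_s$, solve for $x^i_s$, sum over $i\in J$ using the resource constraint, and normalize. In (ii) you use $\sum_{i\in J}\tau_i=\tau$ and $\tau_i/\tau=\mu_i$ to obtain the geometric pool, with the $\alpha_i$ absorbed into the state-independent constant $K$.
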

\par Lemma \ref{prop:finance_risk_benchmarks} formalizes how two standard complete-preference risk economies price beliefs in the absence of aggregate uncertainty. In the log-utility benchmark, normalized Arrow prices average probability levels across individuals. In the exponential-utility benchmark, normalized Arrow prices average log probabilities across individuals and therefore take a geometric-pooling form. These benchmarks differ from the Bewley economy above, where individuals have incomplete preferences represented by belief sets, and prices are disciplined by shared beliefs---those beliefs that remain in their intersection. The next result uses Lemmas \ref{lem:finance_RS} and \ref{prop:finance_risk_benchmarks} to contrast these two ways of generating normalized Arrow prices.
\subsection{Market implications of Bregman balls}
We now specialize the full-insurance pricing result to entropy balls, where the only difference between the two cases is the order of arguments in the underlying KL divergence.
\begin{proposition}\label{thm:finance_entropy_order}
Assume no aggregate uncertainty and a full-insurance equilibrium exists.

\emph{(i)} Suppose $P_i=B^{\text{\normalfont\scriptsize KL}}_{\eta_i}(p_i)$ $\forall i\in I$ and
$\bigcap_{i\in I}B^{\text{\normalfont\scriptsize KL}}_{\eta_i}(p_i)=\{q^{\mathrm L}\}.$
Then, every full-insurance equilibrium has the unique normalized Arrow price vector
$$
        q^{\mathrm L}(s)=\sum_{i\in J_{\mathrm L}}\mu_i^{\mathrm L}p_i(s)
        \qquad\forall s\in S,
$$
for nonempty $J_{\mathrm L}\subseteq I$ and positive convex weights $(\mu_i^{\mathrm L})_{i\in J_{\mathrm L}}$. Every full-insurance economy has the same normalized prices as the log-utility risk economy in Lemma \ref{prop:finance_risk_benchmarks}.(i).

\emph{(ii)} Suppose $P_i=\hat{B}^{\text{\normalfont\scriptsize KL}}_{\eta_i}(p_i)$ $\forall i\in I$ and
$\bigcap_{i\in I}\hat{B}^{\text{\normalfont\scriptsize KL}}_{\eta_i}(p_i)=\{q^{\mathrm H}\}.$
Then, every full-insurance equilibrium has the unique normalized Arrow price vector
$$
        q^{\mathrm H}(s)
        =
        \frac{\prod_{i\in J_{\mathrm H}}p_i(s)^{\mu_i^{\mathrm H}}}
        {\sum_{t\in S}\prod_{i\in J_{\mathrm H}}p_i(t)^{\mu_i^{\mathrm H}}}
        \qquad\forall s\in S,
$$
for nonempty $J_{\mathrm H}\subseteq I$ and positive convex weights $(\mu_i^{\mathrm H})_{i\in J_{\mathrm H}}$. Every full-insurance economy has the same normalized prices as the exponential-utility risk economy in Lemma \ref{prop:finance_risk_benchmarks}.(ii).
\end{proposition}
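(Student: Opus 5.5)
The plan is to combine Lemma \ref{lem:finance_RS}, which converts a singleton intersection into the unique normalized Arrow price, with Corollary \ref{thm:pos1} for part (i) and Proposition \ref{thm:geo1} for part (ii), and then match the resulting formula to the benchmarks in Lemma \ref{prop:finance_risk_benchmarks}. Both entropy balls are nonempty, closed and convex subsets of $\DeltaS$. For the primal ball, $q\mapsto \KL(p_i\lVert q)$ is convex and lower semicontinuous. For the dual ball, $q\mapsto\KL(q\lVert p_i)$ is convex and continuous on $\DeltaS$, given that $p_i$ has full support. Each ball contains $p_i$. So the standing assumptions of \citet{RigottiShannon2005} hold with $P_i$ equal to the ball, and Lemma \ref{lem:finance_RS} applies directly. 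Whenever the intersection is $\{q^*\}$, every full-insurance equilibrium has the unique normalized Arrow price $q^*$. What remains is to identify $q^*$ in each case.

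For (i), I would apply Corollary \ref{thm:pos1} with $G(q)=\sum_s q(s)\log q(s)$. I need to check that this generator and full-support $p_i$ satisfy the regularity conditions of Appendix \ref{app:conditions}. The corollary then gives convex weights $(\mu_i)_{i\in I}$ with $q^{\mathrm L}=\sum_i\mu_i p_i$. Next, set $J_{\mathrm L}:=\{i:\mu_i>0\}$, which is nonempty because the weights sum to one, and let $\mu_i^{\mathrm L}:=\mu_i$ on $J_{\mathrm L}$. Dropping the zero weights gives exactly the formula with positive weights. Part (i) of Lemma \ref{prop:finance_risk_benchmarks}, applied with $J=J_{\mathrm L}$ and these weights, shows that the log-utility planner economy is supported by the same normalized price.

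For (ii), the argument is the same, with Proposition \ref{thm:geo1} supplying convex weights $(\lambda_i)$ such that $q^{\mathrm H}$ is the geometric pool. I set $J_{\mathrm H}:=\{i:\lambda_i>0\}$. A zero weight contributes the factor $p_i(s)^0=1$, so it drops out of both the numerator and the normalization, and restricting to $J_{\mathrm H}$ leaves the formula unchanged. Part (ii) of Lemma \ref{prop:finance_risk_benchmarks}, with $\mu_i=\lambda_i$ on $J_{\mathrm H}$ and any $\tau>0$ and $\alpha_i>0$, then yields the matching exponential-utility price.

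The main obstacle I expect is bookkeeping rather than substance: confirming that the hypotheses of Corollary \ref{thm:pos1} and Proposition \ref{thm:geo1} hold in this setting. The belief set here is $\DeltaS$, while the earlier results are stated on $\Delta$, and I need the regularity conditions for the KL generator with full-support reference beliefs. A related point concerns the assumption that a full-insurance equilibrium exists. It guarantees that the statement about every such equilibrium is not vacuous. Since Lemma \ref{lem:finance_RS} already constructs equilibria for each $q$ in the intersection, I would simply note that this assumption is consistent with the construction and is not otherwise needed.
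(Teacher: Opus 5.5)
Your proposal is correct and follows the paper's proof essentially step for step. Corollary~\ref{thm:pos1} (resp.\ Proposition~\ref{thm:geo1}) identifies the singleton as a linear (resp.\ geometric) pool, zero weights are dropped to define $J_{\mathrm L}$ (resp.\ $J_{\mathrm H}$), Lemma~\ref{lem:finance_RS} makes that point the unique normalized Arrow price, and Lemma~\ref{prop:finance_risk_benchmarks} supplies the matching log-utility (resp.\ exponential-utility) economy; your extra checks that the entropy balls are nonempty, closed and convex and that the KL generator meets the conditions of Appendix~\ref{app:conditions} are left implicit in the paper but are consistent with it.
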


This result gives the Bregman geometry a pricing interpretation in financial markets. The order of arguments selects which shared belief is capitalized into equilibrium prices in full-insurance Bewley economies. Specifically, primal entropy yields the prices of a log-utility risk economy, whereas dual entropy yields the prices of an exponential-utility risk economy. Proposition \ref{prop:finance_power_crra} extends the same logic to power pooling and isoelastic risk sharing.

\subsection{Illustration: financial losses and information technologies}\label{sec:pricing_disagreement}

Section \ref{sec:bregman} indicated that primal entropy aggregates probability levels, whereas dual entropy aggregates log probabilities. Proposition \ref{thm:finance_entropy_order} shows that this distinction has an asset-pricing implication: using the wrong entropy order is equivalent to using the prices of a different complete-preference economy. We now illustrate how this mistake generates financial losses.

Consider two states, $S=\{A,B\}$, and two individuals, $I=\{1,2\}$, with reference beliefs $p_1=(0.96,0.04)$ and $p_2=(0.36,0.64)$. Individual $1$ views state $B$ as very unlikely, while individual $2$ views it as fairly likely. Suppose the information technology is the dual-entropy technology, so individual belief sets are dual KL balls, $P_i=\hat B^{\text{\normalfont\scriptsize KL}}_{\eta_i}(p_i)$. Suppose the radii satisfy $\eta_1=\KL(q^{\mathrm H}\lVert p_1)=0.202$ and $\eta_2=\KL(q^{\mathrm H}\lVert p_2)=0.379$, where $q^{\mathrm H}=(0.786,0.214)$. This belief is the equal-weight geometric pool of $p_1$ and $p_2$: its odds ratio is the geometric average of the two individual odds ratios, since $q^{\mathrm H}(A)/q^{\mathrm H}(B)=\sqrt{(p_1(A)/p_1(B))(p_2(A)/p_2(B))}=\sqrt{13.5}$. Moreover, it follows that the intersection $P_1\cap P_2=\{q^{\mathrm H}\}$ is a singleton. By Proposition \ref{thm:finance_entropy_order}, every full-insurance Bewley equilibrium has the unique normalized Arrow price vector $q^{\mathrm H}$; in particular, the state-$B$ Arrow security has price $q^{\mathrm H}(B)=0.214$.

\par Suppose a financial authority evaluates whether buying one unit of state-$B$ security at date-$0$ price $\rho$ is socially valuable. If the authority ignores the dual-entropy technology and instead uses a linear pooling, it acts as if the relevant benchmark were the log-utility economy in Lemma \ref{prop:finance_risk_benchmarks}.(i), rather than the exponential-utility economy in Lemma \ref{prop:finance_risk_benchmarks}.(ii). The equal-weight linear pooling is $q^{\mathrm L}=(0.66,0.34)$, so the mistaken authority values the state-$B$ security at $q^{\mathrm L}(B)=0.34$. Figure \ref{fig:mispricing_loss} illustrates the financial loss caused by applying the incorrect aggregation rule. The ``wrong-rule approval region'' $(0.214 < \rho < 0.34)$ denotes prices where the linear-pooling authority incorrectly evaluates the security as underpriced and authorizes a purchase. Within this region, the downward-sloping line ($0.34-\rho$) plots the authority's perceived gain per unit, while the upward-sloping line ($\rho-0.214$) plots the actual financial loss evaluated under the correct dual-entropy technology. The intersection at $\rho=0.277$ marks the price where the perceived gain equals the actual loss. To the left of this point ($0.214 < \rho < 0.277$), the perceived gain exceeds the actual loss, leading the authority to view the trade as favorable despite it reducing overall value. To the right ($0.277 < \rho < 0.34$), the actual loss surpasses the perceived gain. The shaded region traces the actual per unit loss across prices at which the authority approves the trade. In summary, this example has illustrated how failing to account for  information technologies when aggregating beliefs can lead a planner to endorse trades that generate aggregate financial losses.

\begin{figure}[hbt!]
\centering
\begin{tikzpicture}
\begin{axis}[
    width=0.95\textwidth,
    height=0.5\textwidth,
    xmin=0.16, xmax=0.37,
    ymin=-0.02, ymax=0.15,
    xlabel={Price $\rho$ of the state-$B$ Arrow security},
    ylabel={Date-$0$ value per unit},
    axis lines=left,
    label style={font=\footnotesize},
    tick label style={font=\scriptsize},
    legend style={at={(0.02,0.45)},anchor=west,draw=none,fill=none,font=\scriptsize},
    tick align=outside,
    xtick={0.214,0.277,0.34},
    xticklabels={$q^{\mathrm H}(B)$,$\rho=0.277$,$q^{\mathrm L}(B)$},
    ytick={0,0.063,0.126},
    yticklabels={$0$,$0.063$,$0.126$},
]

\addplot[name path=loss, thick, darkpurple, domain=0.214:0.34, samples=2] {x-0.214};
\addplot[name path=gain, thick, darkorange, domain=0.214:0.34, samples=2] {0.34-x};
\addplot[name path=xaxis, draw=none, domain=0.214:0.34] {0};

\addplot[fill=gray!15] fill between[of=loss and xaxis];

\draw[gray, thin] (axis cs:0.34,0) -- (axis cs:0.34,0.126);

\addplot[thick, darkpurple, domain=0.34:0.37, samples=2] {x-0.214};
\addplot[thick, darkorange, domain=0.16:0.214, samples=2] {0.34-x};

\addlegendentry{Actual financial loss (geometric pool)}
\addlegendentry{Perceived financial gain (linear pool)}

\addplot[dashed, domain=0.16:0.214, samples=2] {0};
\addplot[dashed, domain=0.34:0.37, samples=2] {0};

\draw[thick, darkred, |-|] (axis cs:0.214,0) -- (axis cs:0.34,0);

\node[anchor=north, font=\scriptsize, text=darkred] at (axis cs:0.277,-0.002) {Wrong-rule approval region};

\end{axis}
\end{tikzpicture}
\caption{Financial losses and information technologies.}
\label{fig:mispricing_loss}
\end{figure}

\subsection{Extension: power pooling and risk aversion}\label{sec:finance_power}

The entropy comparison isolates the effect of changing the order of relative entropy. The same pricing logic extends to the broader separable Bregman family from Section \ref{sec:power}. When the singleton intersection is a power pool, the resulting state-price kernel has a standard risk-sharing interpretation: it is the Arrow price vector generated by an \textit{isoelastic} risk (CRRA) economy. Thus, the power parameter $r$ has a direct economic meaning in financial markets.

\begin{proposition}\label{prop:finance_power_crra}
Maintain the assumptions in Lemma \ref{lem:finance_RS}. Fix $r>0$, a nonempty $J_{\mathrm P}\subseteq I$, and positive convex weights $(\mu_i)_{i\in J_{\mathrm P}}$. Suppose the separable dual Bregman construction in Proposition \ref{prop:separable_pool} yields the power pooling rule, so that $\bigcap_{i\in I}P_i=\{q^{\mathrm P}_r\}$, where
$$
q^{\mathrm P}_r(s)
        =\frac{\big(\sum_{i\in J_{\mathrm P}}\mu_i p_i(s)^r\big)^{1/r}}
        {\sum_{t\in S}\big(\sum_{i\in J_{\mathrm P}}\mu_i p_i(t)^r\big)^{1/r}}
        \quad\forall s\in S.
$$
Then, the full-insurance Bewley equilibrium has the unique normalized Arrow price $q^{\mathrm P}_r$. This is the normalized Arrow price of the complete-preference risk economy whose planner solves
$$
        \max_{( x^i)_{i\in J_{\mathrm P}}}
        \sum_{i\in J_{\mathrm P}}\mu_i^{1/r}\sum_{s\in S}p_i(s)u_r(x_s^i)
        \qquad\text{s.t.}\qquad
        \sum_{i\in J_{\mathrm P}}x_s^i=\bar W\quad\forall s\in S,
$$
where $u_1(z)=\log z$, and $u_r(z)=z^{1-1/r}/(1-1/r)$ if $r\neq1$, for $z>0$. Thus, the power-pooling parameter $r$ corresponds to an isoelastic economy with relative risk aversion $1/r$.
\end{proposition}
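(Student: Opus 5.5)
The plan has two parts: the pricing claim follows from Lemma \ref{lem:finance_RS}, and the remaining work is a first-order-condition computation for the CRRA planner.

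\textbf{Pricing claim.} Since $\bigcap_{i\in I}P_i=\{q^{\mathrm P}_r\}$ is nonempty and a singleton, Lemma \ref{lem:finance_RS} applies directly. It gives that every full-insurance allocation is Pareto optimal, and that the full-insurance allocation $x^i_s\equiv q^{\mathrm P}_r\cdot\omega^i$ is an equilibrium supported by the price $q^{\mathrm P}_r$. By the singleton clause of the same lemma, $q^{\mathrm P}_r$ is the unique normalized Arrow price of any full-insurance equilibrium. Nothing in this step depends on the Bregman structure beyond the hypothesis that the intersection is the stated power pool.

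\textbf{Planner problem, setup.} The second claim is that $q^{\mathrm P}_r$ coincides with the normalized Arrow price of the weighted isoelastic planner. I will mirror the argument behind Lemma \ref{prop:finance_risk_benchmarks}.(i). The objective is strictly concave and the constraint set is a product of simplices scaled by $\bar W$, so an interior optimum is characterized by first-order conditions. Note that $u_r'(z)=z^{-1/r}$ for every $r>0$, including $u_1=\log$. With multiplier $\lambda_s$ on the state-$s$ resource constraint, the first-order conditions are
\begin{equation*}
\mu_i^{1/r}p_i(s)\,(x^i_s)^{-1/r}=\lambda_s\qquad\forall i\in J_{\mathrm P},\ s\in S.
\end{equation*}
Solving gives $x^i_s=\mu_i\,p_i(s)^r\,\lambda_s^{-r}$. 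The exponent of $\mu_i$ is $(1/r)\cdot r=1$, which is why the weights enter the planner as $\mu_i^{1/r}$.

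\textbf{Planner problem, prices.} Summing the solved allocations over $i$ and imposing $\sum_i x^i_s=\bar W$ yields
\begin{equation*}
\lambda_s^{-r}\sum_{i\in J_{\mathrm P}}\mu_i p_i(s)^r=\bar W,
\qquad\text{hence}\qquad
\lambda_s=\bar W^{-1/r}\Big(\sum_{i\in J_{\mathrm P}}\mu_i p_i(s)^r\Big)^{1/r}.
\end{equation*}
The supporting Arrow prices are proportional to the shadow prices $\lambda_s$. The factor $\bar W^{-1/r}$ is common across states because there is no aggregate uncertainty, so it cancels upon normalization, leaving exactly $q^{\mathrm P}_r$. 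The implied allocations $x^i_s=\bar W\,\mu_ip_i(s)^r/V_s$ are positive because the reference beliefs have full support. This confirms that the optimum is interior and the first-order conditions are valid. Decentralization then follows in the usual way: each agent's own first-order condition holds with individual multiplier proportional to $\mu_i^{-1/r}$, so these allocations are supported by the normalized prices $q^{\mathrm P}_r$.

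\textbf{Risk aversion and the main obstacle.} The coefficient of relative risk aversion is $-z u_r''(z)/u_r'(z)=1/r$, which gives the stated interpretation. The main obstacle is the weight bookkeeping: the planner weights must be $\mu_i^{1/r}$, not $\mu_i$, for the weights to reappear linearly inside the power mean, and I would verify this exponent carefully. I would also check the $r=1$ case separately; there the formulas collapse to Lemma \ref{prop:finance_risk_benchmarks}.(i).
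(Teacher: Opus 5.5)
Your proposal is correct and follows the same route as the paper. You use Lemma \ref{lem:finance_RS} to get the unique full-insurance price, and the first-order conditions $\mu_i^{1/r}p_i(s)(x_s^i)^{-1/r}=\lambda_s$ give $x_s^i=\mu_i p_i(s)^r\lambda_s^{-r}$, so $\lambda_s\propto\big(\sum_{i\in J_{\mathrm P}}\mu_i p_i(s)^r\big)^{1/r}$, which normalizes to $q^{\mathrm P}_r$. The paper justifies interiority through the Inada property $u_r'(z)\to+\infty$ as $z\downarrow0$, whereas you exhibit a strictly positive solution of the first-order conditions and rely on their sufficiency under concavity; both are valid.
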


\section{Conclusion}\label{sec:conclusion}

This paper develops an epistemic foundation for the intersection rule by modeling the formation and aggregation of imprecise beliefs through the lens of rational inattention, partial verifiability, and information theory. We establish that when individuals face standard information-acquisition costs and capacity constraints, the set of all collectively admissible posteriors forms an intersection of Bregman balls. This geometric structure is economically relevant: its inherent asymmetry unifies linear, geometric, power, and multiplicative pooling rules under a single framework. These distinct epistemic rules carry observable market implications; when applied to financial markets, the underlying information-cost primitive acts as the main driver dictating how conflicting beliefs are capitalized into equilibrium prices.

\appendix

\section{Appendix: Uniqueness of Unit Cost}\label{app:unitcost_unique}

This appendix explains why the unit-cost restriction is the only nondegenerate way to impose a cost constraint on posterior reports. The issue is due to the timing of the planner's problem. Observation \ref{prop:degeneracy} shows that imposing an expected-cost budget on posterior reports is vacuous: whenever a posterior can be produced, the experiment can be diluted so that the same posterior is reached with arbitrarily small probability and arbitrarily small expected cost. 

\par The constant-marginal-cost property in Lemma \ref{thm:dilution} identifies the appropriate normalization. If posterior $q$ is required to occur with probability $\varepsilon$, then a report budget $\eta$ corresponds to the ex-ante budget $\varepsilon\eta$. Thus, the relevant object is the least expected cost of producing $q$ per unit probability of reaching $q$. This appendix shows that this normalization is necessary.

Throughout this appendix, fix $q_0\in\Delta^\circ(\Omega)$ and a target posterior $q\in\DeltaO$ for which the unit cost in Definition \ref{def:unitcost} is finite. For each $\varepsilon\in(0,1]$, recall from Observation \ref{prop:unitprice} that
$$
        c_q^\star(\varepsilon):=
        \inf\Big\{
        C(\sigma,q_0;Y):
        \exists y^\star\in Y \text{ with }
        q_{y^\star}(\sigma,q_0)=q
        \text{ and }
        \pi_{y^\star}(\sigma,q_0)\geq\varepsilon
        \Big\}
$$
 is the least expected cost of making  $q$ occur with probability at least $\varepsilon$. Define the critical report cost 
$\Gamma(q|q_0):=\lim_{\varepsilon\downarrow0}
        \frac{c_q^\star(\varepsilon)}{\varepsilon},$
whenever the limit exists. All proofs are in Appendix \ref{app:proofs}.

\begin{lemma}\label{lem:critical_report_cost}
The critical report cost exists and satisfies $\Gamma(q|q_0)=\kappa(q|q_0)$.
\end{lemma}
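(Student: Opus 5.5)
The plan is to sandwich $c_q^\star(\varepsilon)/\varepsilon$ between $\kappa(q|q_0)$ and $\kappa(q|q_0)+\delta$ for all sufficiently small $\varepsilon$. Both bounds are essentially the two parts of Observation \ref{prop:unitprice}, so the proof reduces to that observation together with the definition of a limit. I will nevertheless indicate how each part is obtained, so that the argument stands on its own.

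\emph{Lower bound.} Take any feasible $(Y,\sigma,y^\star)$ in the definition of $c_q^\star(\varepsilon)$. It is also feasible in \eqref{eq:kappa}, so
\[
C(\sigma,q_0;Y)\geq \pi_{y^\star}(\sigma,q_0)\,\kappa(q|q_0)\geq \varepsilon\,\kappa(q|q_0),
\]
using $\kappa\geq0$, which holds because $C\geq0$. Taking the infimum over feasible structures gives $c_q^\star(\varepsilon)/\varepsilon\geq\kappa(q|q_0)$ for every $\varepsilon\in(0,1]$. This is part 1 of Observation \ref{prop:unitprice}.

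\emph{Upper bound.} Fix $\delta>0$. Since $\kappa(q|q_0)<\infty$, choose $(Y,\sigma,y^\star)$ with
\[
\frac{C(\sigma,q_0;Y)}{\pi_{y^\star}(\sigma,q_0)}\leq \kappa(q|q_0)+\delta,
\]
and set $\bar\varepsilon:=\pi_{y^\star}(\sigma,q_0)>0$. For $\varepsilon\in(0,\bar\varepsilon]$, apply the dilution $\alpha\cdot(Y,\sigma)$ with $\alpha=\varepsilon/\bar\varepsilon\in(0,1]$. Dilution rescales the likelihoods $\sigma_\omega(y^\star)$ uniformly across states, so Bayes' rule still yields the posterior $q$ at $y^\star$, and the realization $y^\star$ now occurs with probability exactly $\varepsilon$. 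By Lemma \ref{thm:dilution}, the diluted cost is
\[
\alpha\,C(\sigma,q_0;Y)=\varepsilon\,\frac{C(\sigma,q_0;Y)}{\bar\varepsilon}\leq\varepsilon\bigl(\kappa(q|q_0)+\delta\bigr).
\]
Hence $c_q^\star(\varepsilon)/\varepsilon\leq\kappa(q|q_0)+\delta$ for all $\varepsilon\leq\bar\varepsilon$. This is part 2 of Observation \ref{prop:unitprice}.

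\emph{Conclusion.} Combining the two bounds, for every $\delta>0$ there is $\bar\varepsilon>0$ such that
\[
\kappa(q|q_0)\leq c_q^\star(\varepsilon)/\varepsilon\leq\kappa(q|q_0)+\delta \quad\text{for all } \varepsilon\in(0,\bar\varepsilon].
\]
So $\liminf$ and $\limsup$ of $c_q^\star(\varepsilon)/\varepsilon$ as $\varepsilon\downarrow0$ both equal $\kappa(q|q_0)$. The limit $\Gamma(q|q_0)$ therefore exists and equals $\kappa(q|q_0)$.

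The only delicate step is the upper bound. It needs two facts about dilution: that it leaves the posterior at $y^\star$ unchanged, and that it scales the cost exactly linearly. The second is exactly Lemma \ref{thm:dilution}. A secondary point is that the chosen near-optimal structure has a fixed $\bar\varepsilon$, which is why the bound holds only for small $\varepsilon$ rather than all $\varepsilon$. That restriction is harmless, since only the limit $\varepsilon\downarrow0$ matters.
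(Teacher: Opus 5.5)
Your proof is correct and is essentially the paper's argument. The paper also obtains $\liminf_{\varepsilon\downarrow0} c_q^\star(\varepsilon)/\varepsilon\geq\kappa(q|q_0)$ from Observation \ref{prop:unitprice}(i) and $\limsup_{\varepsilon\downarrow0} c_q^\star(\varepsilon)/\varepsilon\leq\kappa(q|q_0)+\delta$ from Observation \ref{prop:unitprice}(ii), with $\delta>0$ arbitrary. Your re-derivation of those two parts via the dilution Lemma \ref{thm:dilution} matches how the paper proves that observation.
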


A proposed report-cost index $\rho(q|q_0)$ is \textit{cost-calibrated} if it uses the same units as the primitive expected cost in the following sense: for every budget $\eta\geq0$, if $\rho(q|q_0)<\eta$, then $c_q^\star(\varepsilon)\leq\varepsilon\eta$ for all sufficiently small $\varepsilon>0$, while if $\eta<\rho(q|q_0)$, then $c_q^\star(\varepsilon)>\varepsilon\eta$ for all sufficiently small $\varepsilon>0$. In words, budgets above the report cost are sufficient to substantiate the report at small probability, and budgets below the report cost are not sufficient.

\begin{proposition}\label{prop:unitcost_unique}
Every cost-calibrated report-cost index satisfies $\rho(q|q_0)=\kappa(q|q_0)$. Hence, the only nondegenerate cost-calibrated restriction on posterior reports is
$\kappa(q|q_0)\leq\eta .$
\end{proposition}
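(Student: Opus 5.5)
The plan is to show that any cost-calibrated index must coincide with the critical report cost $\Gamma(q|q_0)$, and then to invoke Lemma \ref{lem:critical_report_cost}, which gives $\Gamma(q|q_0)=\kappa(q|q_0)$. The argument is by contradiction, comparing $\rho:=\rho(q|q_0)$ with $\kappa:=\kappa(q|q_0)$ and using budgets $\eta$ chosen strictly between them.

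First suppose $\rho<\kappa$. Pick $\eta$ with $\rho<\eta<\kappa$. The first half of cost-calibration then gives $c_q^\star(\varepsilon)\le\varepsilon\eta$ for all sufficiently small $\varepsilon>0$. But Observation \ref{prop:unitprice}(1) gives $c_q^\star(\varepsilon)\ge\varepsilon\kappa>\varepsilon\eta$ for every $\varepsilon\in(0,1]$, which is a contradiction. Here one could equally use $\Gamma=\kappa$: dividing by $\varepsilon$ and letting $\varepsilon\downarrow0$ yields $\kappa\le\eta$, which is false.

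Next suppose $\kappa<\rho$. If $\rho=+\infty$, any finite $\eta>\kappa$ still satisfies $\eta<\rho$, so the same argument applies. Pick $\eta$ with $\kappa<\eta<\rho$ and set $\delta:=\eta-\kappa>0$. Observation \ref{prop:unitprice}(2) supplies $\bar\varepsilon$ such that $c_q^\star(\varepsilon)\le\varepsilon(\kappa+\delta)=\varepsilon\eta$ for all $\varepsilon\le\bar\varepsilon$. The second half of cost-calibration, however, requires $c_q^\star(\varepsilon)>\varepsilon\eta$ for all sufficiently small $\varepsilon$. This is again a contradiction, so $\rho=\kappa$.

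It remains to check that $\kappa$ is itself cost-calibrated, so that the class is nonempty and the restriction is well defined. If $\kappa<\eta$, Observation \ref{prop:unitprice}(2) with $\delta=\eta-\kappa$ gives the sufficiency half. If $\eta<\kappa$, Observation \ref{prop:unitprice}(1) gives $c_q^\star(\varepsilon)\ge\varepsilon\kappa>\varepsilon\eta$ for all $\varepsilon$, which is the necessity half. Consequently, the only admissible budget restriction is the threshold $\kappa(q|q_0)\le\eta$.

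For the ``nondegenerate'' qualifier, we note that Observation \ref{prop:degeneracy} rules out the alternative of unnormalized expected-cost budgets. Those budgets correspond to an index identically zero, and by the argument above that index is not cost-calibrated whenever $\kappa>0$.

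The main delicacy is the boundary case $\eta=\rho$, which calibration leaves unconstrained, and the possibility that $\rho$ takes the value $+\infty$. Both are handled by working only with strict inequalities and the density of the reals. No step seems genuinely hard once Observation \ref{prop:unitprice} is in hand.
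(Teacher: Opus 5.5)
Your proof is correct and follows the paper's argument. Both proofs compare a cost-calibrated $\rho(q|q_0)$ with $\kappa(q|q_0)=\Gamma(q|q_0)$ and derive a contradiction from a budget $\eta$ chosen strictly between them; you invoke the two bounds of Observation \ref{prop:unitprice} directly instead of the limit in Lemma \ref{lem:critical_report_cost}, and you add the check that $\kappa$ is itself cost-calibrated, which the paper leaves implicit in its ``hence'' clause.
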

Cost calibration is the relevant economic concept because it captures the idea that two posterior reports should be regarded as equally informative whenever they require the same marginal informational investment to produce, regardless of how frequently they arise in expectation.
The constant-marginal-cost property says that requiring a report with probability $\varepsilon$ scales the relevant ex-ante budget by exactly $\varepsilon$. Dividing by $\varepsilon$ in Definition \ref{def:unitcost} is therefore the unique way to express the cost of the report itself in the same units as the primitive cost. Any proposed restriction with a different threshold either admits reports below the cost needed to substantiate them or rejects reports above the cost needed to substantiate them. Proposition \ref{prop:unitcost_unique} rules out both possibilities and leaves exactly the  restriction $\kappa(q|q_0)\leq\eta$.

\section{Appendix: Microfoundation of Joint Admissibility}\label{app:verifiability}

We provide a microfoundation of joint admissibility (Definition \ref{def:joint}) by applying the partial-verifiability mechanism-design framework of \citet{deneck08} to posterior reports. The environment is a posterior-reporting institution in which information sources have private abilities to substantiate posterior claims. A posterior can serve as a common evidentiary basis precisely when every source can provide the verifying message associated with that posterior. Joint admissibility is the resulting implementable certification region.
\par The climate-policy example in DGHT's introduction illustrates the economic content of this certification problem. They describe a European climate-policy decision in which different public authorities (e.g., French and British governments) may use different ranges for the probability of severe global warming. In our terminology, each authority is an information source whose models, data, and forecasting capacity determine which posterior claims it can substantiate. A posterior probability for severe warming can serve as a common evidentiary basis for policy evaluation only when every source can verify it. Joint admissibility therefore captures the institutional requirement that a climate-policy assessment be supported by all evidentiary constraints before it is used to rank uncertain policies at the European level.

\par The appendix proceeds as follows. Appendix \ref{app:DS_primitives} recalls the relevant Deneckere-Severinov primitives. Appendix \ref{app:DS_embedding} embeds posterior reports into those primitives. Appendix \ref{app:DS_joint} proves that joint admissibility is the set of posterior claims that can be certified by an incentive-compatible mechanism with partial verifiability. All proofs are in Appendix \ref{app:proofs}.

\subsection{Partial verifiability}\label{app:DS_primitives}

\citet{deneck08} study mechanism design when agents have type-dependent abilities to provide evidence. Their model has a public-decision set $\mathcal{X}$, a verifying-message space $\mathcal{C}$, and agents $i=1,\dots,L$. In the basic model, a type is $t_i=(\vartheta_i,\mathcal{M}_i)$, where $\vartheta_i$ is a preference parameter and $\mathcal{M}_i\subseteq\mathcal{C}$ is the set of verifying messages available to agent $i$. A verifying message is a feasible piece of evidence: a type can be asked to provide only messages available to that type. We build on their formulation with combinations of messages \citep[][Section 4]{deneck08}. Here, a type is $t_i=(\vartheta_i,\mathcal{E}_i)$, where $\mathcal{E}_i(t_i)$ is the set of feasible message combinations. A feasible mechanism is a dynamic game form in which, at every information set where $i$ can send verifying messages, type $t_i$ is restricted to combinations in $\mathcal{E}_i(t_i)$; along any path, the union of verifying messages sent by $t_i$ must also belong to $\mathcal{E}_i(t_i)$. Thus, evidence requests are supported on feasible message combinations.
\par The implementation results in \citet{deneck08} identify the incentive consequences of this evidence structure. Under their Assumption 1, their Theorem 3 shows that their revelation mechanism is without loss. Under the worst-outcome version used in their Corollary 4, the incentive constraints are exactly those in which a type imitates another type whose feasible message combinations it can reproduce. This implication is central in our posterior-reporting setting: a source that can verify a posterior may withhold the evidence, while a source that cannot verify the posterior cannot manufacture the verifying message. We find this property to be natural in institutional settings such as health policy and climate policy, where manufacturing messages may cause significant irreversible damages. 

\subsection{Embedding posterior reports into partial verifiability}\label{app:DS_embedding}

Fix a profile $\mathscr{E}:=\{(p_i,C_i,G_i,\eta_i)\}_{i\in I}$. For each  $i$, let $\kappa_i(\cdot|p_i)$ denote the unit cost generated by $C_i$ (Definition \ref{def:unitcost}). From Section \ref{sec:rational},  $i$ is rationally inattentive with  posterior-verifiability set  $Q_i(\mathscr{E})
        :=
        \{q\in\DeltaO:\kappa_i(q|p_i)\leq \eta_i\}.$
This is the set of posterior claims that source $i$ can substantiate within its unit-cost capacity, which captures \citeauthor{deneck08}'s (\citeyear[][p. 499]{deneck08}) insight that some messages may be ``infeasible or too costly to send.''

\par Fix a posterior claim $q\in\DeltaO$. We build a Deneckere-Severinov environment for certifying this posterior. The public-decision set is
$\mathcal{X}_q:=\{x_q,x_{\varnothing}\}\cup\{x_i^\bot:i\in I\},
$
where $x_q$ means that posterior $q$ is certified as the common evidentiary basis, $x_{\varnothing}$ means that no posterior is certified, and $x_i^\bot$ is the punishment decision for source $i$. The verifying-message space is
$\mathcal{C}_q:=\{m_i(q):i\in I\},
$
where $m_i(q)$ is $i$'s evidence for posterior $q$.

\par Source $i$ has two possible verifiability types for posterior $q$. Type $V_i^q$ can verify $q$, while type $N_i^q$ cannot. Their feasible message-combination sets are
$\mathcal{E}_i^q(V_i^q)=\big\{\{m_i(q)\},\emptyset\big\},$ and $\mathcal{E}_i^q(N_i^q)=\{\emptyset\}.$
Thus, $V_i^q$ can imitate $N_i^q$ by withholding evidence, while $N_i^q$ cannot imitate $V_i^q$ because it cannot send $m_i(q)$. For the source profile $\mathscr{E}$, define the induced verifiability type by $t_i^q(\mathscr{E})=V_i^q$ if $q\in Q_i(\mathscr{E})$, and $t_i^q(\mathscr{E})=N_i^q$ otherwise. Hence,
\begin{equation}\label{eq:joint_type_equiv}
        q\in\bigcap_{i\in I}Q_i(\mathscr{E})
        \quad\Longleftrightarrow\quad
        t_i^q(\mathscr{E})=V_i^q \qquad \forall i\in I .
\end{equation}

\par Let $T_i^q:=\{V_i^q,N_i^q\}$, $T^q:=\prod_{i\in I}T_i^q$, and let $F^q$ be a full-support prior on $T^q$; write $F_i^q(t_{-i}|t_i)$ for its conditional probability. Let $v_i(x;t)$ denote source $i$'s payoff from public decision $x$ when the type profile is $t$.  A mechanism certifies $q$ only at histories where every $i$ has sent a message combination containing $m_i(q)$. Since $x_q$ represents certification of $q$ as common evidence, certification therefore requires the full collection of source-specific verifying messages $\{m_i(q)\}_{i\in I}$. This concept formalizes the requirement that a certified collective posterior be supported by every source's posterior evidence.

\begin{definition}\label{def:certifiability}\normalfont
A posterior $q\in\DeltaO$ is \emph{certifiable} at the profile $\mathscr{E}$ if there exists a feasible Deneckere-Severinov mechanism and an equilibrium outcome function satisfying the certification requirement such that the outcome at the realized type profile $t^q(\mathscr{E})$ is $x_q$.
\end{definition}

\subsection{Joint admissibility as implementable certification}\label{app:DS_joint}

The next result is the main result of this microfoundation, showing that certification requires joint admissibility, i.e., every source must be able to verify a candidate posterior.

\begin{proposition}\label{prop:cert_necessity}
If $q$ is certifiable at $\mathscr{E}$, then $q\in\bigcap_{i\in I}Q_i(\mathscr{E})$, i.e., $\kappa_i(q|p_i)\leq\eta_i$ $\forall i\in I$.
\end{proposition}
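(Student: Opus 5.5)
The argument is essentially a feasibility argument built on the evidence structure, and I would run it as follows. Suppose $q$ is certifiable at $\mathscr{E}$. Then there is a feasible Deneckere--Severinov mechanism and an equilibrium outcome function, satisfying the certification requirement, whose outcome at the realized type profile $t^q(\mathscr{E})$ is $x_q$. I would argue by contradiction: assume $q\notin\bigcap_{i\in I}Q_i(\mathscr{E})$. By the equivalence \eqref{eq:joint_type_equiv}, some source $j$ then has induced type $t_j^q(\mathscr{E})=N_j^q$.

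The key step uses the feasibility restriction on the mechanism. At every information set where $j$ may send verifying messages, type $N_j^q$ is restricted to combinations in $\mathcal{E}_j^q(N_j^q)=\{\emptyset\}$. Moreover, along any path, the union of verifying messages sent by $N_j^q$ must also lie in $\{\emptyset\}$. Hence, along every path of play consistent with the realized profile $t^q(\mathscr{E})$, source $j$ never sends $m_j(q)$, and no history reached at this profile contains a message combination of $j$ that includes $m_j(q)$.

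The certification requirement says the mechanism outputs $x_q$ only at histories where every $i\in I$ has sent a combination containing $m_i(q)$. So at the realized profile the outcome cannot be $x_q$. It must be $x_\varnothing$ or some punishment decision. This contradicts certifiability, so $t_i^q(\mathscr{E})=V_i^q$ for all $i$. By \eqref{eq:joint_type_equiv} and the definition of $Q_i(\mathscr{E})$, this gives $\kappa_i(q|p_i)\leq\eta_i$ for all $i$. Incentive compatibility and the prior $F^q$ play no role: this direction uses only feasibility.

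The main obstacle is stating the "no history contains $m_j(q)$" step carefully for a dynamic game form, including possibly randomized equilibrium outcomes. I would handle this by noting that the feasibility constraint is imposed pathwise on the union of messages sent. Every path in the support of play at $t^q(\mathscr{E})$ is therefore free of $m_j(q)$, so $x_q$ has zero probability there.

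Finally, I would remark that the converse (every jointly admissible $q$ is certifiable) is not needed for this proposition. It could be obtained separately from the direct mechanism that certifies $q$ exactly when all sources present $m_i(q)$, using the worst-outcome incentive structure of their Corollary 4.
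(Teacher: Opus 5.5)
Your proposal is correct and follows the paper's argument. Certification at $t^q(\mathscr{E})$ requires every source to send a combination containing $m_i(q)$, the pathwise feasibility constraint then forces $t_i^q(\mathscr{E})=V_i^q$, and incentive constraints play no role. The only differences are cosmetic: you argue by contradiction and add a remark about randomized outcomes, whereas the paper argues directly, source by source.
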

Notice that this necessity result uses only feasibility of posterior-specific evidence without requiring any additional incentive restrictions. We will now complement this result with the converse under two standard conditions. (1) The Deneckere-Severinov \textit{worst-outcome} condition: $v_i(x_i^\bot;t)=-\infty$  $\forall t$, while $v_i(x;t)>-\infty$  $\forall x\in\mathcal{X}_q \setminus\{x_i^\bot\}$  and $\forall t$.\footnote{The infinite punishment is a convenient way to use  \citet[][Corollary 4]{deneck08}. The results hold with finite punishments by using their general incentive constraints and requiring, in addition, that a non-verifier who reports $V_i^q$ and cannot produce $m_i(q)$ weakly prefers truthful non-verification. In the certification mechanism used here, this amounts to
$\sum_{t_{-i}: t_j=V_j^q \hspace{0.03in} \forall j\neq i}
\bigl[v_i(x_\emptyset;N_i^q,t_{-i})-v_i(x_i^\perp;N_i^q,t_{-i})\bigr]F_i^q(t_{-i}|N_i^q)\ge 0,$
which holds whenever $x_i^\perp$ is a sufficiently severe finite punishment for source $i$.} This penalty is natural in high-stakes institutional settings such as the climate-policy example where endorsing a fraudulent climate model may ruin an authority's reputation. (2) The \textit{alignment} condition: a source that can verify $q$ weakly prefers, in expectation, certification to withholding verification when all other sources can verify $q$. Formally,
\begin{equation}\label{eq:alignment}
        \sum_{t_{-i}: t_j=V_j^q \hspace{0.03in} \forall j\neq i}
        \Big[v_i(x_q;V_i^q,t_{-i})-v_i(x_{\varnothing};V_i^q,t_{-i})\Big]
        F_i^q(t_{-i}|V_i^q)
        \geq 0 \qquad \forall i\in I,
\end{equation}
i.e., this is the Deneckere–Severinov incentive constraint specialized to the certification rule. In the climate-policy example, alignment is natural because it means that if the authorities can all substantiate the same posterior probability that severe warming will occur, then each authority weakly prefers that this posterior be officially certified and used for policy evaluation.  Now, define the certification social choice function $f_q^\star:T^q\to\mathcal{X}_q$ by  $f_q^\star(t)=x_q$ if $t_i=V_i^q$ $\forall i\in I$, and $f_q^\star(t)=x_{\varnothing}$ otherwise.

\begin{lemma}\label{lemma:cert_IC}
Under the worst-outcome condition above, $f_q^\star$ is implementable in Bayesian equilibrium if and only if the alignment condition \eqref{eq:alignment} holds.
\end{lemma}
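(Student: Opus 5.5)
To prove Lemma \ref{lemma:cert_IC}, I would verify the incentive constraints for the direct revelation mechanism that implements $f_q^\star$ and then appeal to the revelation principle of \citet[][Theorem 3]{deneck08}. Under the worst-outcome condition, \citet[][Corollary 4]{deneck08} says the relevant incentive constraints are only those in which a type imitates another type whose feasible message combinations it can reproduce. In our embedding, $\mathcal{E}_i^q(N_i^q)=\{\emptyset\}\subseteq\mathcal{E}_i^q(V_i^q)$, while $\{m_i(q)\}\notin\mathcal{E}_i^q(N_i^q)$. So the only nontrivial deviation is $V_i^q$ reporting $N_i^q$ by withholding evidence. The reverse deviation, $N_i^q$ claiming $V_i^q$, is deterred because the mechanism assigns $x_i^\bot$ whenever a claimed verifier fails to produce $m_i(q)$, and $v_i(x_i^\bot;t)=-\infty$.

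\textbf{Mechanism.} Each source $i$ announces a type and, if it announces $V_i^q$, is asked for $m_i(q)$. If some $i$ announces $V_i^q$ but does not send $m_i(q)$, the outcome is $x_i^\bot$. Otherwise the outcome is $x_q$ when all sources announce $V_i^q$ (and have therefore supplied all messages), and $x_\varnothing$ otherwise. This respects the certification requirement, is feasible for every type, and its truthful outcome equals $f_q^\star$.

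\textbf{Sufficiency.} Assume alignment \eqref{eq:alignment} holds and all other sources report truthfully.
\begin{itemize}
\item For $N_i^q$: truth yields $x_\varnothing$, which is finite. Announcing $V_i^q$ yields $x_i^\bot$, which is $-\infty$, because $N_i^q$ cannot send $m_i(q)$. So truth is optimal.
\item For $V_i^q$: truth and deviation give the same outcome $x_\varnothing$ unless all $j\neq i$ are verifiers. On that event, truth gives $x_q$ and the deviation gives $x_\varnothing$. The expected payoff difference, conditional via $F_i^q(\cdot|V_i^q)$, is exactly the left side of \eqref{eq:alignment}, which is nonnegative. So truth is optimal.
\end{itemize}

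\textbf{Necessity.} Suppose $f_q^\star$ is implementable in Bayesian equilibrium by some feasible mechanism. By the revelation result there is an equivalent truthful direct mechanism with the same outcome function, and the imitation constraint for $V_i^q\to N_i^q$ must hold because this imitation is feasible. Under that imitation the outcome at each $t_{-i}$ is $f_q^\star(N_i^q,t_{-i})=x_\varnothing$. Comparing with $f_q^\star(V_i^q,t_{-i})$ and discarding the profiles where both equal $x_\varnothing$ leaves exactly \eqref{eq:alignment}.

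\textbf{Main obstacle.} The delicate step is the necessity direction. We must justify that implementability by an arbitrary dynamic feasible mechanism reduces to the direct-mechanism constraints. This requires checking that our two-type evidence structure satisfies Assumption 1 of \citet{deneck08} (normality of $\mathcal{E}_i^q$: $\emptyset$ is feasible for both types and the $V$ set contains the $N$ set). It also requires checking that the worst-outcome version is legitimately invoked, so that no additional constraint from $N_i^q$ imitating $V_i^q$ can bind or relax anything. I would first verify these hypotheses explicitly. Then I would note that in any implementing mechanism a verifier can always replicate a non-verifier's strategy, since every message combination available to $N_i^q$ is available to $V_i^q$. This makes the withholding deviation available directly, and gives necessity without fully relying on the revelation theorem.
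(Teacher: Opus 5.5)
Your proposal is correct and follows essentially the same route as the paper. Both invoke Corollary 4 of \citet{deneck08} to reduce implementability to the imitation constraints, and both note that the only nontrivial one is $V_i^q$ mimicking $N_i^q$, since $\mathcal{E}_i^q(N_i^q)=\{\emptyset\}\subseteq\mathcal{E}_i^q(V_i^q)$ but not conversely. Both then show that, after cancelling the profiles where both reports yield $x_{\varnothing}$, this constraint is exactly \eqref{eq:alignment}.

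Your explicit mechanism for sufficiency and your direct replication argument for necessity only make the argument more self-contained. The paper builds a similar mechanism in the proof of Corollary \ref{cor:cert_joint}, with one difference: it requests evidence only when every source reports $V_j^q$, whereas yours requests it from any source announcing $V_i^q$.
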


We can now show the necessity and sufficiency of joint admissibility for certification.

\begin{corollary}\label{cor:cert_joint}
Under the worst-outcome and alignment conditions, a posterior $q\in\DeltaO$ is certifiable at $\mathscr{E}$ if and only if $q\in\bigcap_{i\in I}Q_i(\mathscr{E})$, i.e., $\kappa_i(q|p_i)\leq\eta_i$ $\forall i\in I$.
\end{corollary}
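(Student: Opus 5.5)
The plan is to prove Corollary \ref{cor:cert_joint} by combining Proposition \ref{prop:cert_necessity} with Lemma \ref{lemma:cert_IC}, using the type equivalence \eqref{eq:joint_type_equiv} as the bridge.

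\textbf{Necessity.} This direction is exactly Proposition \ref{prop:cert_necessity}. It requires neither the worst-outcome nor the alignment condition. If $q$ is certifiable, the realized outcome is $x_q$. The certification requirement then forces every source to have sent a combination containing $m_i(q)$. Feasibility restricts type $N_i^q$ to $\{\emptyset\}$, so every realized type must be $V_i^q$. By \eqref{eq:joint_type_equiv}, $q\in\bigcap_{i\in I}Q_i(\mathscr{E})$. I simply cite the proposition here.

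\textbf{Sufficiency.} Suppose $q\in\bigcap_{i\in I}Q_i(\mathscr{E})$. By \eqref{eq:joint_type_equiv}, the realized type profile is $t^q(\mathscr{E})=(V_i^q)_{i\in I}$. Consider the certification social choice function $f_q^\star$.

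\begin{enumerate}
\item Under the worst-outcome and alignment conditions, Lemma \ref{lemma:cert_IC} says $f_q^\star$ is implementable in Bayesian equilibrium. The mechanism is the Deneckere–Severinov revelation mechanism (their Theorem 3 and Corollary 4). Each source announces a type. A source announcing $V_i^q$ is asked for $m_i(q)$ and receives $x_i^\bot$ if it fails to provide it. The mechanism outputs $x_q$ exactly when all sources announce $V_i^q$ and supply their messages, and $x_\varnothing$ otherwise.
\item I verify that this mechanism is feasible. Each type is asked only for combinations in $\mathcal{E}_i^q(t_i)$: a $V_i^q$ type is asked for $\{m_i(q)\}$, which it can send, and an $N_i^q$ type is asked for nothing.
\item I verify that the mechanism satisfies the certification requirement. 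The outcome $x_q$ arises only at histories where every source has sent $m_i(q)$.
\item The truthful equilibrium outcome at $t^q(\mathscr{E})$ is $f_q^\star(t^q(\mathscr{E}))=x_q$. Hence $q$ is certifiable in the sense of Definition \ref{def:certifiability}.
\end{enumerate}

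\textbf{Main obstacle.} The delicate point is matching Lemma \ref{lemma:cert_IC}'s notion of implementability with Definition \ref{def:certifiability}. The equilibrium implementing $f_q^\star$ must come from a \emph{feasible} mechanism that also respects the certification requirement, not merely from some abstract implementation. I would address this by exhibiting the explicit revelation mechanism in step 1 and checking steps 2 and 3 directly rather than relying only on the lemma's statement.

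Finally, I would translate the conclusion back into the paper's primitives. Membership in $Q_i(\mathscr{E})$ means by definition that $\kappa_i(q|p_i)\leq\eta_i$ for all $i$. This gives the stated equivalence and links certifiability to joint admissibility in Definition \ref{def:joint}.
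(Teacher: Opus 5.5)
Your proposal is correct and follows essentially the same route as the paper. Necessity comes from Proposition~\ref{prop:cert_necessity}, and sufficiency runs through \eqref{eq:joint_type_equiv}, Lemma~\ref{lemma:cert_IC}, and an explicit revelation mechanism checked against the certification requirement. The differences are minor. The paper requests $m_i(q)$ only when the reported profile is $(V_i^q)_{i\in I}$, and it writes out the equilibrium check itself: misreporting non-verifiers face $x_i^\bot$, and verifiers are handled by alignment. You instead request evidence from anyone announcing $V_i^q$ and rely on the lemma for incentive compatibility. Both mechanisms work.
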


It is also straightforward to extend from a single posterior to a set of posteriors.

\begin{corollary}\label{cor:DS_set}
Let $P_0\subseteq\DeltaO$ be nonempty. Suppose the worst-outcome and alignment conditions hold for each singleton agenda $A_q:=\{q\}$ with $q\in P_0$. Then, $P_0$ is jointly admissible given $\mathscr{E}=\{(p_i,C_i,G_i,\eta_i)\}_{i\in I}$ if and only if every $q\in P_0$ is certifiable at $\mathscr{E}$ on $A_q$.
\end{corollary}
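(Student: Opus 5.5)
The plan is to reduce the set statement to the single-posterior statement in Corollary \ref{cor:cert_joint}, applied one posterior at a time. Joint admissibility (Definition \ref{def:joint}) is a pointwise condition: $P_0$ is jointly admissible given $\mathscr{E}$ if and only if $\kappa_i(q|p_i)\leq\eta_i$ for every $i\in I$ and every $q\in P_0$. By the definition of the posterior-verifiability sets $Q_i(\mathscr{E})$, this is the same as $q\in\bigcap_{i\in I}Q_i(\mathscr{E})$ for every $q\in P_0$, i.e., $P_0\subseteq\bigcap_{i\in I}Q_i(\mathscr{E})$.

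Next, for each fixed $q\in P_0$, consider the singleton agenda $A_q=\{q\}$. This is exactly the Deneckere--Severinov environment built in Appendix \ref{app:DS_embedding}: decision set $\mathcal{X}_q$, messages $\mathcal{C}_q$, types $T^q$, and induced type profile $t^q(\mathscr{E})$. I will read ``certifiable at $\mathscr{E}$ on $A_q$'' as Definition \ref{def:certifiability} applied to that environment. By hypothesis, the worst-outcome and alignment conditions hold on each $A_q$. Corollary \ref{cor:cert_joint} then gives, for each such $q$, that $q$ is certifiable at $\mathscr{E}$ on $A_q$ if and only if $q\in\bigcap_{i\in I}Q_i(\mathscr{E})$.

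Chaining the two steps gives the equivalence:
\[
P_0 \text{ jointly admissible} \iff \forall q\in P_0,\ q\in\textstyle\bigcap_{i}Q_i(\mathscr{E}) \iff \forall q\in P_0,\ q \text{ certifiable on } A_q .
\]
The necessity direction (certifiable implies jointly admissible) needs neither auxiliary condition, since Proposition \ref{prop:cert_necessity} already delivers it. So the hypotheses are used only for sufficiency, through Lemma \ref{lemma:cert_IC} and the mechanism $f_q^\star$.

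There is no real obstacle; the only point to state carefully is that certification is evaluated posterior-by-posterior, with a separate mechanism for each agenda. No single mechanism is required to certify all of $P_0$ at once, so no cross-posterior incentive constraints arise. Note also that $f_q^\star$ depends on $q$ only through the labels of its types and messages, so there is no issue of uniformity in $q$.
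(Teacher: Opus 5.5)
Your proposal is correct and is essentially the paper's own proof. You rewrite joint admissibility pointwise as $q\in\bigcap_{i\in I}Q_i(\mathscr{E})$ for every $q\in P_0$, then apply Corollary \ref{cor:cert_joint} posterior by posterior on each singleton agenda $A_q$; your added remarks on necessity and on using separate per-agenda mechanisms are accurate.
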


The institutional content of joint admissibility is therefore the following: each source's information technology determines which posterior-verifying messages it can provide, partial verifiability determines which certification rules are incentive compatible, and the planner's admissible collective beliefs are exactly the posteriors that every source can verify.

\section{Appendix: Regularity Conditions}\label{app:conditions}
We impose some regularity conditions on Bregman balls that are standard in information geometry, where, for example, all balls are typically assumed to have the same radius \citep[e.g.,][]{small18,topo18}. Some of these conditions can be relaxed, e.g., most of our results extend to infinite-dimensional spaces, but we omit such extensions to simplify proofs. We separate the conditions used in Section \ref{sec:bregman} and Section \ref{sec:microfoundation}.

\subsection{Regularity: Section \ref{sec:bregman}}
\par Throughout the aggregation results in Sections \ref{sec:pos}--\ref{sec:nonsingleton}, $S$ is finite and  define the set of full-support beliefs as
$\Delta^{\circ}
        :=
        \{q\in\Delta:q(s)>0\hspace{0.04in} \forall s\in S\}.$
In all our aggregation results, we maintain the assumption that the intersection of belief sets is nonempty, and all reference beliefs satisfy
$p_i\in\Delta^{\circ}$ for every $i\in I$. Whenever a result assumes that an intersection of belief sets is a singleton, the singleton element is assumed to lie in $\Delta^{\circ}$.\footnote{Focusing on interior solutions is standard in the aggregation literature \citep[e.g.,][]{agg24}.} This guarantees that the derivatives used in the singleton-intersection arguments are well-defined.

\par Notice that the entropy generator $G(q)=\sum_{s\in S}q(s)\log q(s)$ is not differentiable on the boundary of the simplex. Thus, whenever a generator $G$ is defined only on an open convex set $\mathcal D$ containing $\Delta^{\circ}$, the associated Bregman divergence is understood on $\Delta$ through its lower-semicontinuous extension whenever this extension exists. For the entropy generator, we use the standard conventions $0\log 0=0$ and $a\log(a/0)=+\infty$ for $a>0$.

For the primal Bregman-ball results in Section \ref{sec:pos}, the maintained conditions are the following. The generator $G$ is twice continuously differentiable and convex on an open convex domain $\mathcal D\subseteq\mathbb R^S$ containing $\Delta^{\circ}$.\footnote{These conditions on $G$ are standard in economics \citep[e.g.,][Assumption 1]{HebertWoodford2021}. } For every $i\in I$, define the extended-valued function
$$
        F_i(q):=
        \begin{cases}
        D_G(p_i\Vert q), & q\in\Delta\cap\mathcal D,\\
        +\infty, & q\in\Delta\setminus\mathcal D.
        \end{cases}
$$
We assume that $F_i$ is convex and lower semicontinuous on $\Delta$. We also assume that, for every $q\in\Delta\cap\mathcal D$, the Hessian $\nabla^2G(q)$ is positive definite on the tangent space
$T_\Delta:=\big\{z\in\mathbb R^S:\sum_{s\in S}z(s)=0\big\}.$
These conditions imply that the primal balls
$B^G_{\eta_i}(p_i)=\{q\in\Delta:F_i(q)\leq \eta_i\}$
are closed and convex subsets of $\Delta$, which are the two required properties of belief sets in unambiguous preferences (see, Definition \ref{def:unamb}).\footnote{These assumptions are necessary because \citet{topo18} show that while dual balls are always convex, primal balls need not be, so we must restrict attention to primal balls that are convex.} For example, they include primal entropy balls because, for $p_i\in\Delta^{\circ}$, $D_{\mathrm{KL}}(p_i\Vert q)$ is convex and lower semicontinuous on $\Delta$, finite on $\Delta^{\circ}$, and the entropy Hessian is positive definite on $T_\Delta$.

For the dual Bregman-ball results in Sections \ref{sec:geometric}--\ref{sec:power}, no constraint qualification is imposed. We use only the following implication: if $q^*\in\Delta^{\circ}$ and
$\bigcap_{i\in I}\big\{q\in\Delta:f_i(q)\leq0\big\}=\{q^*\},
$
where each $f_i$ is convex on $\Delta$ and differentiable in a neighborhood of $q^*$, then, with
$A(q^*)=\{i\in I:f_i(q^*)=0\}$, there exist multipliers
$(\lambda_i)_{i\in A(q^*)}$, with $\lambda_i\geq0$ and not all zero, and a scalar $\zeta\in\mathbb R$, such that
$\sum_{i\in A(q^*)}\lambda_i\nabla f_i(q^*)=\zeta\mathbf 1 $, where $\mathbf 1$ denotes the vector of all ones.
This implication is proved in Appendix \ref{app:proofs} as Lemma \ref{lem:active_normal_auto}. Applied to
$f_i(q)=D_G(q\Vert p_i)-\eta_i$, it gives the active-normal relation used in the proofs of Proposition \ref{thm:geo1} and Proposition \ref{prop:separable_pool}.
\subsection{Regularity: Section \ref{sec:microfoundation}}

For Section \ref{sec:microfoundation}, all priors have full support. The generator $G$ in Definition \ref{def:ups} is convex and differentiable at every full-support prior under consideration, and $D_G(q\lVert q_0)$ is well-defined for every posterior $q\in\DeltaO$ used in Lemma \ref{thm:unitbregman} and Theorem \ref{thm:intersection}. This convention includes the entropy generator $G(q)=\sum_{\omega\in\Omega}q(\omega)\log q(\omega)$, because $\KL(q\lVert q_0)$ is finite for every $q\in\DeltaO$ whenever $q_0\in\Delta^{\circ}(\Omega)$.  For the reverse-order results in Section \ref{sec:micro_primal}, the generator $G$ is continuously differentiable on a neighborhood of every posterior at which $D_G(q_0\lVert q)$ is evaluated; in particular, the reverse entropy case is applied only to $q\in\Delta^{\circ}(\Omega)$.

\section{Appendix: Proofs}\label{app:proofs}

\subsection{Proof of Proposition \ref{thm:pos} and Corollary \ref{thm:neutral}}

\begin{proof}[Proof of Proposition \ref{thm:pos}]
Let $\mathcal B:=\bigcap_{i\in I}B^G_{\eta_i}(p_i)$. By assumption, $\mathcal B\neq\varnothing$. For $q\in\Delta$, define $\Phi(q):=\max_{i\in I}\{F_i(q)-\eta_i\}$, where $F_i$ is the extended-valued function from Appendix \ref{app:conditions}. Since each $F_i$ is convex and lower semicontinuous on $\Delta$, the function $\Phi$ is convex and lower semicontinuous on $\Delta$. Since the state space $S$ is finite, $\Delta$ is compact, so $\Phi$ attains a minimum on $\Delta$. Let $q^\star\in\Delta$ be a minimizer. Choose $\bar q\in\mathcal B$. Then, $\Phi(\bar q)\leq0$, so $\Phi(q^\star)\leq0$. Hence, $q^\star\in\mathcal B$, which implies $F_i(q^\star)<+\infty$ for every $i\in I$, and therefore $q^\star\in\mathcal D$.

Let $A(q^\star):=\{i\in I:\Phi(q^\star)=F_i(q^\star)-\eta_i\}$. Since $q^\star\in\mathcal D$, each active $F_i$ is differentiable at $q^\star$. By the finite-dimensional max rule for the subdifferential of a finite maximum of differentiable convex functions, $\partial\Phi(q^\star)=\operatorname{co}\{\nabla_qD_G(p_i\Vert q^\star):i\in A(q^\star)\}$. Since $q^\star$ minimizes $\Phi$ over $\Delta$, the convex first-order condition gives $0\in\partial\Phi(q^\star)+N_\Delta(q^\star)$, where $N_\Delta(q^\star):=\{v\in\mathbb R^S:\langle v,z-q^\star\rangle\leq0\hspace{0.03in} \forall z\in\Delta\}$. Hence, there exist weights $(\lambda_i)_{i\in A(q^\star)}$, with $\lambda_i\geq0$ and $\sum_{i\in A(q^\star)}\lambda_i=1$, and a vector $v\in N_\Delta(q^\star)$ such that $\sum_{i\in A(q^\star)}\lambda_i\nabla_qD_G(p_i\Vert q^\star)+v=0$.

For each $i\in A(q^\star)$, differentiating $D_G(p_i\Vert q)=G(p_i)-G(q)-\langle\nabla G(q),p_i-q\rangle$ with respect to $q$ gives $\nabla_qD_G(p_i\Vert q^\star)=\nabla^2G(q^\star)(q^\star-p_i)$. Let $\bar p:=\sum_{i\in A(q^\star)}\lambda_i p_i$. Since each $p_i\in\Delta$ and $\Delta$ is convex, $\bar p\in\Delta$. Substituting the derivative formula into the first-order condition yields $v=\nabla^2G(q^\star)(\bar p-q^\star)$. Since $v\in N_\Delta(q^\star)$ and $\bar p\in\Delta$, $\langle v,\bar p-q^\star\rangle\leq0$. Therefore, $\langle\nabla^2G(q^\star)(\bar p-q^\star),\bar p-q^\star\rangle\leq0$. However, $\bar p-q^\star\in T_\Delta:=\left\{z\in\mathbb R^S:\sum_{s\in S}z(s)=0\right\}$, and Appendix \ref{app:conditions} assumes that $\nabla^2G(q^\star)$ is positive definite on $T_\Delta$. Hence, $\bar p-q^\star=0$. Thus, $q^\star=\bar p\in\operatorname{co}\{p_i:i\in A(q^\star)\}\subseteq\operatorname{co}(p_i)_{i\in I}$. Since $q^\star\in\mathcal B$, then $\bigcap_{i\in I}B^G_{\eta_i}(p_i)\cap\operatorname{co}(p_i)_{i\in I}\neq\varnothing$.
\end{proof}

\begin{proof}[Proof of Corollary \ref{thm:neutral}]
Let
$\mathcal{C}:=\text{co}(p_{i})_{i\in I}.$
Since $\mathcal{C}$ is the convex hull of finitely many points in the finite-dimensional simplex
$\Delta$, it is compact. For each $i\in I$, the map
$q\mapsto D_G(p_i\Vert q)$
is continuous on $\mathcal{C}$, so it attains a finite maximum on $\mathcal{C}$. Define $\eta_i:=\max_{q\in \mathcal{C}}D_G(p_i\Vert q)$ $\forall i\in I .$
Now, take any $q\in \mathcal{C}$. By the definition of $\eta_i$,
$D_G(p_i\Vert q)\leq \eta_i$ $\forall i\in I .$
Therefore, $q\in B^G_{\eta_i}(p_i)$ $\forall i\in I$, so
$q\in \bigcap_{i\in I} B^G_{\eta_i}(p_i).$
Since $q\in \mathcal{C}$ was arbitrary, we obtain
$\text{co}(p_{i})_{i\in I}
        \subseteq
        \bigcap_{i\in I} B^G_{\eta_i}(p_i).$
\end{proof}

\begin{proof}[Proof of Corollary \ref{thm:pos1}]
This follows directly from Proposition \ref{thm:pos} by imposing a singleton.
\end{proof}

\subsection{Proof of Propositions \ref{thm:geo1}--\ref{prop:separable_pool}, Observation \ref{thm:gen} and Corollary \ref{thm:multi}}
\begin{lemma}\label{lem:active_normal_auto}
Let $q^*\in\Delta^{\circ}$. For each $i\in I$, let $f_i:\Delta\to\mathbb R$ be convex and differentiable in a neighborhood of $q^*$. Suppose
$\bigcap_{i\in I}\{q\in\Delta:f_i(q)\leq0\}=\{q^*\}.$
Let $A(q^*):=\{i\in I:f_i(q^*)=0\}$. Then, $A(q^*)\neq\varnothing$, and there exist multipliers $(\lambda_i)_{i\in A(q^*)}$, with $\lambda_i\geq0$ and not all zero, and a scalar $\zeta\in\mathbb R$, such that
$\sum_{i\in A(q^*)}\lambda_i\nabla f_i(q^*)=\zeta\mathbf 1 .$
\end{lemma}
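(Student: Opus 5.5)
We need to prove Lemma \ref{lem:active_normal_auto}: the point $q^*$ is the unique common point of convex sublevel sets and lies in the relative interior of $\Delta$. The plan is to separate a convex set of descent directions from the feasible directions, and then apply a theorem of the alternative (Gordan) restricted to the tangent space $T_\Delta$.

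\textbf{Step 1: some constraint is active.} Suppose $A(q^*)=\varnothing$, so $f_i(q^*)<0$ for all $i$. Each $f_i$ is differentiable, hence continuous, near $q^*$, and $q^*\in\Delta^\circ$. So a small relative neighborhood of $q^*$ in $\Delta$ satisfies every constraint. That neighborhood contains points other than $q^*$ because $|S|\geq2$, which contradicts the singleton assumption. Hence $A(q^*)\neq\varnothing$.

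\textbf{Step 2: no common descent direction.} We claim there is no $z\in T_\Delta$ with $\langle\nabla f_i(q^*),z\rangle<0$ for all $i\in A(q^*)$. Suppose such a $z$ existed, and consider $q^*+tz$ for small $t>0$.
\begin{itemize}
\item This point lies in $\Delta$, since $q^*$ is interior to $\Delta$ and $z$ sums to zero.
\item For each active $i$, a first-order expansion gives $f_i(q^*+tz)=t\langle\nabla f_i(q^*),z\rangle+o(t)<0$.
\item For each inactive $i$, we have $f_i<0$ near $q^*$ by continuity.
\end{itemize}
So $q^*+tz$ is a second point of the intersection, a contradiction. Convexity of the $f_i$ is not even needed here; differentiability suffices.

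\textbf{Step 3: Gordan's alternative.} Let $\Pi$ be the orthogonal projection onto $T_\Delta$, and set $g_i:=\Pi\nabla f_i(q^*)$. Then $\langle g_i,z\rangle=\langle\nabla f_i(q^*),z\rangle$ for all $z\in T_\Delta$. Step 2 says the finite system $\langle g_i,z\rangle<0$ for $i\in A(q^*)$ has no solution $z\in T_\Delta$. It also has no solution in $\mathbb R^S$, because replacing $z$ by $\Pi z$ leaves each $\langle g_i,z\rangle$ unchanged.

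Gordan's theorem (equivalently, separating the convex hull $\operatorname{co}\{g_i\}$ from $0$) then yields $\lambda_i\geq0$, not all zero, with $\sum_i\lambda_i g_i=0$. This means $\sum_i\lambda_i\nabla f_i(q^*)\in T_\Delta^\perp=\operatorname{span}\{\mathbf 1\}$, that is, $\sum_i\lambda_i\nabla f_i(q^*)=\zeta\mathbf 1$ for some $\zeta\in\mathbb R$. If the reader prefers to avoid citing Gordan's theorem, the same conclusion follows directly from strict separation of the compact convex set $\operatorname{co}\{g_i\}$ from the origin.

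\textbf{Main obstacle.} The only delicate point is Step 2. One must ensure that the perturbed point stays feasible for the inactive constraints and remains in the simplex. Both rely on $q^*\in\Delta^\circ$ and on continuity of the $f_i$ near $q^*$, which is why the regularity conditions require an interior singleton. There is also a possible degenerate case in which every projected gradient $g_i$ is zero. It causes no difficulty: Gordan's alternative still holds, since the system in Step 2 is then trivially unsolvable and any nonzero $\lambda$ works.
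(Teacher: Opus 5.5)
Your proof is correct and follows the paper's argument essentially step for step. You rule out an empty active set by continuity at the interior point $q^*$, use a first-order expansion to show that no $d\in T_\Delta$ is a strict common descent direction for the active constraints, and conclude with Gordan's alternative on $T_\Delta$ together with $T_\Delta^\perp=\operatorname{span}\{\mathbf 1\}$; your projection $\Pi$ onto $T_\Delta$ simply makes explicit the paper's ``Gordan restricted to $T_\Delta$'' step.
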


\begin{proof}[Proof of Lemma \ref{lem:active_normal_auto}]
If $A(q^*)=\varnothing$, then $f_i(q^*)<0$  $\forall i\in I$. Since $I$ is finite and every $f_i$ is continuous at $q^*$, there exists a relative neighborhood $U$ of $q^*$ in $\Delta$ such that $f_i(q)<0$  $\forall q\in U$ and $\forall i\in I$. This contradicts
$\bigcap_{i\in I}\{q\in\Delta:f_i(q)\leq0\}=\{q^*\}.$
Hence, $A(q^*)\neq\varnothing$. Recall
$T_\Delta=\left\{z\in\mathbb R^S:\sum_{s\in S}z(s)=0\right\}.$
For each $i\in A(q^*)$, view $d\mapsto \langle\nabla f_i(q^*),d\rangle$ as a linear functional on $T_\Delta$. Gordan's theorem of the alternative \citep[][Theorem 2.4.5, p. 31]{gord69}, applied to the matrix of the negative active gradients restricted to the finite-dimensional vector space $T_\Delta$, implies that exactly one of the following two alternatives holds:
$$
\begin{array}{ll}
\text{\normalfont (a)} & \text{\normalfont there exists } d\in T_\Delta \text{ \normalfont s.t. }
        \langle\nabla f_i(q^*),d\rangle<0 \text{ \normalfont for every } i\in A(q^*),\\
\text{\normalfont (b)} & \text{\normalfont there exist } (\lambda_i)_{i\in A(q^*)}, \text{\normalfont with } \lambda_i\geq0
        \text{ \normalfont and not all zero, s.t. }
        \sum_{i\in A(q^*)}\lambda_i\nabla f_i(q^*) \text{ \normalfont annihilates } T_\Delta .
\end{array}
$$
We show that alternative (a) is impossible. Suppose otherwise, and fix $d\in T_\Delta$ satisfying the inequalities in (a). Since $q^*\in\Delta^\circ$, there exists $\bar\varepsilon>0$ such that $q^*+\varepsilon d\in\Delta^\circ$ for every $\varepsilon\in(0,\bar\varepsilon)$. For every active $i\in A(q^*)$, differentiability of $f_i$ at $q^*$ gives
$f_i(q^*+\varepsilon d)
        =
        f_i(q^*)+\varepsilon\langle\nabla f_i(q^*),d\rangle+o(\varepsilon)
        <0$
for all sufficiently small $\varepsilon>0$, because $f_i(q^*)=0$ and
$\langle\nabla f_i(q^*),d\rangle<0$. For every inactive $i\in I\setminus A(q^*)$, we have $f_i(q^*)<0$, so continuity of $f_i$ at $q^*$ implies $f_i(q^*+\varepsilon d)<0$ for all sufficiently small $\varepsilon>0$. Therefore, for all sufficiently small $\varepsilon>0$,
$q^*+\varepsilon d\in \bigcap_{i\in I}\{q\in\Delta:f_i(q)\leq0\}.$
Moreover, $q^*+\varepsilon d\neq q^*$ because $d\neq0$; if $d=0$, the strict inequalities in alternative (a) could not hold. This contradicts the singleton-intersection assumption. Hence, alternative (a) is impossible, so alternative (b) must hold. Thus, there exist multipliers $(\lambda_i)_{i\in A(q^*)}$, with $\lambda_i\geq0$ and not all zero, such that
$\langle \sum_{i\in A(q^*)}\lambda_i\nabla f_i(q^*),d\rangle=0$ $\forall d\in T_\Delta .$
Equivalently, $\sum_{i\in A(q^*)}\lambda_i\nabla f_i(q^*)\in T_\Delta^\perp$. Since
$T_\Delta^\perp=\operatorname{span}\{\mathbf 1\}$, there exists $\zeta\in\mathbb R$ such that
$\sum_{i\in A(q^*)}\lambda_i\nabla f_i(q^*)=\zeta\mathbf 1 .$
\end{proof}

\begin{proof}[Proof of Proposition \ref{thm:geo1}]
For each $i\in I$, define $f_i(q):=\KL(q\Vert p_i)-\eta_i$. Since $p_i\in\Delta^{\circ}$, each $f_i$ is convex on $\Delta$ and differentiable in a neighborhood of $\hat q_0$. By assumption, $\bigcap_{i\in I}\hat B^{\text{\normalfont\scriptsize KL}}_{\eta_i}(p_i)=\{\hat q_0\}$, and Appendix \ref{app:conditions} assumes $\hat q_0\in\Delta^{\circ}$. Lemma \ref{lem:active_normal_auto} gives multipliers $\lambda_i\geq0$, not all zero, supported on $A:=\{i\in I:\KL(\hat q_0\Vert p_i)=\eta_i\}$, and a scalar $\zeta\in\mathbb R$, such that $\sum_{i\in A}\lambda_i\nabla_q\KL(\hat q_0\Vert p_i)=\zeta\mathbf 1$. Let $\Lambda:=\sum_{i\in A}\lambda_i>0$. Replacing each $\lambda_i$ by $\lambda_i/\Lambda$, and setting $\lambda_i=0$ for $i\notin A$, we may assume without loss that $\lambda\in\Delta(I)$.

Since all reference beliefs and $\hat q_0$ have full support, $\partial\KL(q\Vert p_i)/\partial q(s)$ evaluated at $q=\hat q_0$ equals $\log\hat q_0(s)-\log p_i(s)+1$ $\forall s\in S$. Substituting this derivative into the active-normal relation gives $\sum_{i\in I}\lambda_i[\log\hat q_0(s)-\log p_i(s)+1]=\zeta$  $\forall s\in S$. Since $\sum_{i\in I}\lambda_i=1$, this is equivalent to $\log\hat q_0(s)=\zeta-1+\sum_{i\in I}\lambda_i\log p_i(s)$ $\forall s\in S$. Exponentiating gives $\hat q_0(s)=K\prod_{i\in I}p_i(s)^{\lambda_i}$ for every $s$, where $K:=\exp(\zeta-1)>0$ is independent of $s$. Since $\hat q_0\in\Delta$, summing over states gives $K=1/\sum_{t\in S}\prod_{i\in I}p_i(t)^{\lambda_i}$. Substituting this value of $K$ gives \eqref{eq:geopool}.
\end{proof}

\begin{proof}[Proof of Proposition \ref{prop:separable_pool}]
For each $i\in I$, let $f_i(q):=D_G(q\Vert p_i)-\eta_i$. Since $G$ is convex and differentiable in a neighborhood of $\tilde q_0$, each $f_i$ is convex and differentiable in a neighborhood of $\tilde q_0$. By assumption, $\bigcap_{i\in I}\hat B^G_{\eta_i}(p_i)=\{\tilde q_0\}$ and   $\tilde q_0\in\Delta^\circ$ (Appendix \ref{app:conditions}). Lemma \ref{lem:active_normal_auto} gives multipliers $\lambda_i\geq0$, not all zero, supported on $A:=\{i\in I:D_G(\tilde q_0\Vert p_i)=\eta_i\}$, and $\zeta\in\mathbb R$, such that
$\sum_{i\in A}\lambda_i\nabla_qD_G(\tilde q_0\Vert p_i)=\zeta\mathbf 1.
$
For every $i\in A$, differentiating $D_G(q\Vert p_i)$ with respect to $q$ gives $\nabla_qD_G(q\Vert p_i)=\nabla G(q)-\nabla G(p_i)$. Since $G(q)=\sum_{s\in S}\beta_s g(q(s))$, the $s$-coordinate of this gradient at $\tilde q_0$ is $\beta_s g'(\tilde q_0(s))-\beta_s g'(p_i(s))$. Therefore, the active-normal relation implies
$ \sum_{i\in A}\lambda_i\big[\beta_s g'(\tilde q_0(s))-\beta_s g'(p_i(s))\big]=\zeta$ $\forall s\in S.$
Let $\Lambda:=\sum_{i\in A}\lambda_i>0$, $\mu_i:=\lambda_i/\Lambda$ for $i\in A$, and $\mu_i:=0$ for $i\notin A$. Then, $(\mu_i)_{i\in I}$ are convex weights. Dividing by $\Lambda$ gives
$\beta_s g'(\tilde q_0(s))
        =
        \frac{\zeta}{\Lambda}
        +
        \beta_s\sum_{i\in I}\mu_i g'(p_i(s))
        $ $
        \forall s\in S.$
Setting $a:=\zeta/\Lambda$ proves the desired representation.
\end{proof}

\begin{proof}[Proof of Corollary \ref{thm:multi}]
If $I$ is a singleton, choose any $p_i\in\Delta^\circ$, set $\eta_i=0$, and let $G(q)=\sum_{s\in S}q(s)\log q(s)$. Then,
$\hat B^G_{\eta_i}(p_i)=\{p_i\}$, and the multiplicative pool is $p_i$.

Suppose  $|I|\geq2$. Choose two distinct $j,k\in I$. Fix any non-uniform $\bar q\in\Delta^\circ$, and set
$$
        p_j(s):=\frac{\bar q(s)^2}{\sum_{t\in S}\bar q(t)^2},
        \quad \text{and}\quad
        p_k(s):=\frac{\bar q(s)^{-1}}{\sum_{t\in S}\bar q(t)^{-1}}
        \qquad \forall s\in S.
$$
For every $i\in I\setminus\{j,k\}$, set $p_i(s):=1/|S|$ $\forall s\in S$, so $p_i\in\Delta^\circ$. Moreover,
$\prod_{i\in I}p_i(s)
        =
        K\bar q(s)$ $\forall s\in S,$
where
$K:=
        \frac{(1/|S|)^{|I|-2}}
        {(\sum_{t\in S}\bar q(t)^2)
        (\sum_{t\in S}\bar q(t)^{-1})}$
is independent of $s$. Hence, the multiplicative pool generated by $(p_i)_{i\in I}$ is $\bar q$:
$q^\times_0(s)
        =
        \frac{\prod_{i\in I}p_i(s)}
        {\sum_{t\in S}\prod_{i\in I}p_i(t)}
        =
        \bar q(s)$ $\forall s\in S.$ Let $G(q):=\sum_{s\in S}q(s)\log q(s)$. Then, $G$ is separable and $D_G(q\Vert p_i)=\KL(q\Vert p_i)$. Define
$\eta_i:=\KL(\bar q\Vert p_i)$ $\forall i\in I .$
Since all $p_i$'s have full support, these radii are finite. By construction,
$\bar q\in\bigcap_{i\in I}\hat B^G_{\eta_i}(p_i).$

It remains to show that no other belief belongs to the intersection. The compatibility restriction used in the construction is that $\bar q$ is the weighted geometric pool of $p_j$ and $p_k$ with weights $2/3$ and $1/3$. Indeed, for every $s\in S$,
$\frac23\log p_j(s)+\frac13\log p_k(s)
        =
        \log\bar q(s)-A,$
where
$A:=
        \frac23\log\big(\sum_{t\in S}\bar q(t)^2\big)
        +
        \frac13\log\big(\sum_{t\in S}\bar q(t)^{-1}\big)$
is independent of $s$. Equivalently, on the marginal-cost scale generated by entropy,
$g'(\bar q(s))
        =
        A+\frac23 g'(p_j(s))+\frac13 g'(p_k(s))
        $ $\forall s\in S,$
where $g(x)=x\log x$. Thus, the required compatibility restriction is satisfiable by the reference beliefs above. Define
$\Phi(q):=\frac23\KL(q\Vert p_j)+\frac13\KL(q\Vert p_k).$
Using the compatibility relation,
\begin{align*}
        \Phi(q)
        &=
        \sum_{s\in S}q(s)\log q(s)
        -
        \sum_{s\in S}q(s)
        \left(\frac23\log p_j(s)+\frac13\log p_k(s)\right) =
        \sum_{s\in S}q(s)\log\frac{q(s)}{\bar q(s)}+A
        =
        \KL(q\Vert \bar q)+A .
\end{align*}

Therefore, $\Phi$ is uniquely minimized over $\Delta$ at $\bar q$, because $\KL(q\Vert\bar q)\geq0$, with equality if and only if $q=\bar q$. Now, take any
$q\in\bigcap_{i\in I}\hat B^G_{\eta_i}(p_i).$ In particular,
$\KL(q\Vert p_j)\leq\KL(\bar q\Vert p_j)$ and $
        \KL(q\Vert p_k)\leq\KL(\bar q\Vert p_k).$
Multiplying the first inequality by $2/3$, the second by $1/3$, and adding gives
$\Phi(q)\leq \Phi(\bar q).$
However, $\bar q$ is the unique minimizer of $\Phi$, so $q=\bar q$. Thus,
$\bigcap_{i\in I}\hat B^G_{\eta_i}(p_i)=\{\bar q\}.$
Since $\bar q=q^\times_0$, this proves
$\bigcap_{i\in I}\hat B^G_{\eta_i}(p_i)=\{q^\times_0\}.$
\end{proof}

\begin{proof}[Proof of Observation \ref{thm:gen}]
Let $\mathcal A$ be the closed convex hull of the collection of beliefs generated by all linear, geometric, power, and multiplicative pooling rules of the reference beliefs $(p_i)_{i\in I}$ in Sections \ref{sec:pos}--\ref{sec:power}. Since $\Delta$ is compact and finite-dimensional, $\mathcal A$ is a nonempty compact subset of $\Delta$. Fix the entropy generator
$G(q)=\sum_{s\in S} q(s)\log q(s),
$
so that $D_G(q\Vert p_i)=\KL(q\Vert p_i)$ $\forall i\in I$. Since $p_i\in\Delta^\circ$ $\forall i\in I$, the function $q\mapsto \KL(q\Vert p_i)$ is finite and continuous on $\Delta$ under the convention $0\log 0=0$. Thus, for every $i\in I$, the maximum
$M_i:=\max_{q\in\mathcal A}\KL(q\Vert p_i)
$
exists and is finite. Fix any $\varepsilon>0$ and set $\eta_i:=M_i+\varepsilon$  $\forall i\in I$. Then, $\eta_i>M_i$  $\forall i\in I$. Moreover, for every $q\in\mathcal A$ and every $i\in I$,
$\KL(q\Vert p_i)\leq M_i<\eta_i,
$
so $q\in \hat B^{\text{\normalfont\scriptsize KL}}_{\eta_i}(p_i)$ $\forall i\in I$. Since $q\in\mathcal A$ was arbitrary, it follows that
$\mathcal A\subseteq\bigcap_{i\in I}\hat B^{\text{\normalfont\scriptsize KL}}_{\eta_i}(p_i).$
\end{proof}

\subsection{Proof of Lemma \ref{thm:dilution} and Observations \ref{prop:degeneracy}--\ref{prop:unitprice}}
The proof of Lemma \ref{thm:dilution} is straightforward, but we include it for completeness.
\begin{proof}[Proof of Lemma \ref{thm:dilution}]
Fix a signal structure $(Y,\sigma)$, a prior $q_0\in\DeltaO$, and $\alpha\in[0,1]$. Let $o\notin Y$ be the null signal realization and let $\sigma^\alpha$ be the diluted signal structure on $Y\cup\{o\}$.

First consider any $y\in Y$. By definition of $\sigma^\alpha$,
$\pi_y(\sigma^\alpha,q_0)
        =
        \sum_{\omega\in\Omega}q_0(\omega)\sigma_{\omega}^\alpha(y)
        =
        \sum_{\omega\in\Omega}q_0(\omega)\alpha\sigma_{\omega}(y)
        =
        \alpha\pi_y(\sigma,q_0).$
If $\pi_y(\sigma,q_0)>0$ and $\alpha>0$, then Bayes rule gives
$$
        q_y(\sigma^\alpha,q_0)(\omega)
        =
        \frac{q_0(\omega)\sigma_{\omega}^\alpha(y)}{\pi_y(\sigma^\alpha,q_0)}
        =
        \frac{q_0(\omega)\alpha\sigma_{\omega}(y)}{\alpha\pi_y(\sigma,q_0)}
        =
        q_y(\sigma,q_0)(\omega)
        \qquad\forall \omega\in\Omega.
$$
If $\pi_y(\sigma,q_0)=0$, then $\pi_y(\sigma^\alpha,q_0)=0$, so this signal realization contributes zero to the uniformly posterior-separable cost. Next, the null signal realization has unconditional probability
$\pi_o(\sigma^\alpha,q_0)
        =
        \sum_{\omega\in\Omega}q_0(\omega)(1-\alpha)
        =
        1-\alpha.$
If $\alpha<1$, then Bayes rule gives
$q_o(\sigma^\alpha,q_0)(\omega)
        =
        \frac{q_0(\omega)(1-\alpha)}{1-\alpha}
        =
        q_0(\omega)$ $\forall \omega\in\Omega.$
If $\alpha=1$, the null signal realization has probability zero and contributes zero to the cost. Using uniform posterior separability and $D_G(q_0\lVert q_0)=0$, 
\begin{align*}
        C(\sigma^\alpha,q_0;Y\cup\{o\})
        &=
        \sum_{y\in Y}\pi_y(\sigma^\alpha,q_0)
        D_G(q_y(\sigma^\alpha,q_0)\lVert q_0)
        +
        \pi_o(\sigma^\alpha,q_0)
        D_G(q_o(\sigma^\alpha,q_0)\lVert q_0) \\
        &=
        \alpha
        \sum_{y\in Y}\pi_y(\sigma,q_0)
        D_G(q_y(\sigma,q_0)\lVert q_0) =\alpha C(\sigma,q_0;Y).
\end{align*}
\end{proof}

\begin{proof}[Proof of Observation \ref{prop:degeneracy}]
By hypothesis, there exist $(Y,\sigma)$ and $y^\star\in Y$ with $\pi_{y^\star}(\sigma,q_0)>0$ and $q_{y^\star}(\sigma,q_0)=q$. Let $\alpha\in(0,1]$ and consider the diluted structure $\alpha\cdot(Y,\sigma)$. For any $y\in Y$, $\pi_y(\sigma^\alpha,q_0)=\alpha\pi_y(\sigma,q_0)$ and $q_y(\sigma^\alpha,q_0)=q_y(\sigma,q_0)$, while the null signal $o$ yields posterior $q_0$. Hence, $q_{y^\star}(\sigma^\alpha,q_0)=q$ and, by Lemma \ref{thm:dilution},
$C(\sigma^\alpha,q_0;Y\cup\{o\})=\alpha C(\sigma,q_0;Y).$
Now, if $C(\sigma,q_0;Y)=0$, take $\alpha=1$. Otherwise, take $\alpha:=\min\{1,\eta/C(\sigma,q_0;Y)\}$ to ensure $C(\sigma^\alpha,q_0;Y\cup\{o\})\le\eta$. This proves the claim with $(\widetilde Y,\widetilde\sigma)=(Y\cup\{o\},\sigma^\alpha)$ and $\widetilde y^\star=y^\star$.
\end{proof}

\begin{proof}[Proof of Observation \ref{prop:unitprice}]
For (i), fix $\varepsilon\in(0,1]$ and any admissible $(Y,\sigma,y^\star)$ with $\pi_{y^\star}(\sigma,q_0)\geq\varepsilon$ and $q_{y^\star}(\sigma,q_0)=q$. Then,
$\frac{C(\sigma,q_0;Y)}{\varepsilon}
        \ge
        \frac{C(\sigma,q_0;Y)}{\pi_{y^\star}(\sigma,q_0)}
        \ge
        \kappa(q|q_0).$
Multiplying by $\varepsilon$ and taking the infimum over admissible triples yields $c^\star(\varepsilon)\ge \varepsilon\kappa(q|q_0)$.

For (ii), fix $\delta>0$. By the definition of $\kappa(q|q_0)$ as an infimum, pick $(\bar Y,\bar\sigma,\bar y^\star)$ with $\bar\pi:=\pi_{\bar y^\star}(\bar\sigma,q_0)>0$, $q_{\bar y^\star}(\bar\sigma,q_0)=q$, and
$\frac{C(\bar\sigma,q_0;\bar Y)}{\bar\pi}\leq \kappa(q|q_0)+\delta.$
Set $\bar\varepsilon:=\bar\pi$. For any $\varepsilon\in(0,\bar\varepsilon]$, let $\alpha:=\varepsilon/\bar\varepsilon\in(0,1]$ and consider the diluted structure $\alpha\cdot(\bar Y,\bar\sigma)$. By construction, the designated realization $\bar y^\star$ has probability $\alpha\bar\pi=\varepsilon$ and yields posterior $q$, and by Lemma \ref{thm:dilution} its cost equals $\alpha C(\bar\sigma,q_0;\bar Y)$. Therefore,
$c^\star(\varepsilon)\leq \alpha C(\bar\sigma,q_0;\bar Y)
        =\varepsilon\frac{C(\bar\sigma,q_0;\bar Y)}{\bar\pi}
        \leq\varepsilon(\kappa(q|q_0)+\delta).$
\end{proof}

\subsection{Proof of Lemma \ref{thm:unitbregman} and Theorem \ref{thm:intersection}}

\begin{lemma}\label{lem:o}
Let $G$ be differentiable at $q_0$. If $h_k\to0$ and $D_G(q_0+h_k\Vert q_0)$ is well-defined for every $k$, then $D_G(q_0+h_k\Vert q_0)=o(\|h_k\|)$.
\end{lemma}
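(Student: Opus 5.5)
The plan is to work directly from the definition \eqref{eq:bregdiv}. For $q_0+h$ in the domain where the divergence is well-defined,
\[
D_G(q_0+h\Vert q_0)=G(q_0+h)-G(q_0)-\langle\nabla G(q_0),h\rangle .
\]
So the statement is the first-order Taylor remainder of $G$ at $q_0$. It should therefore reduce to the definition of (Fréchet) differentiability of $G$ at $q_0$, which is finite-dimensional here since $\Omega$ is finite.

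First, by differentiability of $G$ at $q_0$, for every $\varepsilon>0$ there is $\delta>0$ such that
\[
|G(q_0+h)-G(q_0)-\langle\nabla G(q_0),h\rangle|\leq\varepsilon\|h\|
\]
whenever $\|h\|<\delta$ and $q_0+h$ lies in the domain of $G$. Second, since $h_k\to0$, there is $K$ with $\|h_k\|<\delta$ for all $k\geq K$. Third, the well-definedness hypothesis places $q_0+h_k$ in the domain where the identity above holds, so $0\leq D_G(q_0+h_k\Vert q_0)\leq\varepsilon\|h_k\|$ for $k\geq K$. Nonnegativity comes from convexity of $G$, which is the gradient inequality. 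If $h_k=0$ for some $k$, both sides vanish, so the ratio statement is understood along the nonzero terms. Since $\varepsilon$ is arbitrary, $D_G(q_0+h_k\Vert q_0)/\|h_k\|\to0$.

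The only delicate point is the domain issue, which I expect to be the main obstacle, though it is a mild one. When $q_0+h_k$ lies on the boundary of the simplex (for example with the entropy generator), $D_G$ is defined through the lower-semicontinuous extension of Appendix \ref{app:conditions}. I will handle this as follows. Since $q_0\in\Delta^{\circ}(\Omega)$ and $G$ is differentiable at $q_0$, $G$ is finite on a neighborhood of $q_0$ in the affine hull of the simplex. As $h_k\to0$, the points $q_0+h_k$ eventually lie in that neighborhood, and hence in the interior, where the extension agrees with the formula. This means that only finitely many terms can be boundary cases, and finitely many terms do not affect an $o(\|h_k\|)$ statement.
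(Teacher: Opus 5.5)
Your proposal is correct and matches the paper's proof. Both arguments identify $D_G(q_0+h_k\Vert q_0)$ with the first-order Taylor remainder $G(q_0+h_k)-G(q_0)-\langle\nabla G(q_0),h_k\rangle$ and conclude directly from differentiability of $G$ at $q_0$; your remarks on nonnegativity and on the points eventually avoiding the boundary are harmless extra care that the paper's two-line argument leaves implicit.
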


\begin{proof}[Proof of Lemma \ref{lem:o}]
By differentiability of $G$ at $q_0$, $G(q_0+h_k)=G(q_0)+\langle\nabla G(q_0),h_k\rangle+o(\|h_k\|)$. Substituting this identity into the definition of $D_G(q_0+h_k\Vert q_0)$ gives the claim.
\end{proof}

\begin{proof}[Proof of Lemma \ref{thm:unitbregman}]
Fix $q\in\DeltaO$.

\medskip\noindent
\textit{Step 1: lower bound.}
For any admissible tuple $(Y,\sigma,y^\star)$ in \eqref{eq:kappa}, by Definition \ref{def:ups} and nonnegativity of $D_G$,
$$
        C(\sigma,q_0;Y)
        =\sum_{y\in Y}\pi_y(\sigma,q_0)D_G(q_y(\sigma,q_0)\lVert q_0)
        \geq \pi_{y^\star}(\sigma,q_0)D_G(q_{y^\star}(\sigma,q_0)\lVert q_0)
        =\pi_{y^\star}(\sigma,q_0)D_G(q\lVert q_0).
$$
Divide by $\pi_{y^\star}(\sigma,q_0)>0$ and take the infimum to obtain
\begin{equation}\label{eq:LB}
        \kappa(q|q_0)\geq D_G(q\lVert q_0).
\end{equation}

\medskip\noindent
\textit{Step 2: upper bound via binary construction.}
Fix $\varepsilon\in(0,1)$ such that
\begin{equation}\label{eq:epscond}
        \varepsilon< \min\left\{1,\min_{\omega:q(\omega)>0}\frac{q_0(\omega)}{q(\omega)}\right\}.
\end{equation}
Let $Y=\{y^\star,y^0\}$ and define
$\sigma_{\omega}(y^\star):=\varepsilon\frac{q(\omega)}{q_0(\omega)},$ and $\sigma_{\omega}(y^0):=1-\sigma_{\omega}(y^\star).$
By \eqref{eq:epscond} and full support of $q_0$, $\sigma_{\omega}(y^\star)\in[0,1]$ for all $\omega$, hence $\sigma_{\omega}\in\Delta(Y)$. Compute
$$
        \pi_{y^\star}(\sigma,q_0)=\sum_{\omega\in \Omega}q_0(\omega)\varepsilon\frac{q(\omega)}{q_0(\omega)}=\varepsilon,
        \qquad
        q_{y^\star}(\sigma,q_0)=q.
$$
The other posterior equals
$q_{y^0}(\sigma,q_0)(\omega)=\frac{q_0(\omega)-\varepsilon q(\omega)}{1-\varepsilon}=:r_\varepsilon(\omega)\in\Delta^{\circ}(\Omega)$, where strict positivity follows from \eqref{eq:epscond} and $q_0\in\Delta^{\circ}(\Omega)$. Therefore, by Definition \ref{def:ups},
$C(\sigma,q_0;Y)=\varepsilon D_G(q\lVert q_0)+(1-\varepsilon)D_G(r_\varepsilon\lVert q_0),$
and dividing by $\pi_{y^\star}=\varepsilon$ gives
\begin{equation}\label{eq:ratio}
        \frac{C(\sigma,q_0;Y)}{\pi_{y^\star}(\sigma,q_0)}
        =D_G(q\lVert q_0)+\frac{1-\varepsilon}{\varepsilon}D_G(r_\varepsilon\lVert q_0).
\end{equation}
Now, note that
$r_\varepsilon-q_0=\frac{q_0-\varepsilon q}{1-\varepsilon}-q_0
        =\frac{\varepsilon}{1-\varepsilon}(q_0-q),$
so $\|r_\varepsilon-q_0\|=O(\varepsilon)$. By Lemma \ref{lem:o}, $D_G(r_\varepsilon\lVert q_0)=o(\varepsilon)$, hence $\frac{1-\varepsilon}{\varepsilon}D_G(r_\varepsilon\lVert q_0)\to0$ as $\varepsilon\downarrow0$. Taking $\varepsilon\downarrow0$ in \eqref{eq:ratio} implies $\kappa(q|q_0)\leq D_G(q\lVert q_0)$. Together with \eqref{eq:LB}, this yields the equality in \eqref{eq:kappaEqMain}.
\end{proof}

\begin{proof}[Proof of Theorem \ref{thm:intersection}]
 By Definition \ref{def:joint}, statement \textit{(1)} is equivalent to: for all $q\in P_0$ and all $i\in I$, $\kappa_i(q|p_i)\leq\eta_i$. By Lemma \ref{thm:unitbregman} applied to each source $i$ with generator $G_i$,
$\kappa_i(q|p_i)=D_{G_i}(q\lVert p_i)$ $\forall q\in\DeltaO.$
Hence, \textit{(1)} is equivalent to: $\forall q\in P_0$ and $\forall i\in I$, $D_{G_i}(q\lVert p_i)\leq\eta_i$. By definition of $\hat B^{G_i}_{\eta_i}(p_i)$, this is equivalent to $q\in \hat B^{G_i}_{\eta_i}(p_i)$ $\forall q\in P_0$ and $\forall i\in I$, which is exactly $P_0\subseteq\bigcap_{i\in I}\hat B^{G_i}_{\eta_i}(p_i)$. Thus, \textit{(1)} and \textit{(2)} are equivalent.
\end{proof}

\subsection{Proof of Observation \ref{prop:rev_blackwell} and Theorem \ref{thm:revjoint}}

\begin{proof}[Proof of Observation \ref{prop:rev_blackwell}]
Let $K(z|y)$ be a garbling kernel, and define $\tau_\omega(z):=\sum_{y\in Y}\sigma_\omega(y)K(z|y)$ for every $\omega$ and $z$. Write $Y^+:=\{y\in Y:\pi_y(\sigma,q_0)>0\}$, $\pi_y:=\pi_y(\sigma,q_0)$, $q_y:=q_y(\sigma,q_0)$, and $\rho_z:=\pi_z(\tau,q_0)$. Then, $\rho_z=\sum_{y\in Y^+}\pi_yK(z|y)$. Fix $z$ with $\rho_z>0$, and write $r_z:=q_z(\tau,q_0)$. By Bayes' rule and the fact that $ \rho_z
        =
        \sum_{\omega\in\Omega}q_0(\omega)\sum_{y\in Y}\sigma_\omega(y)K(z|y)
        =
        \sum_{y\in Y}\pi_y(\sigma,q_0)K(z|y)
        =
        \sum_{y\in Y^+}\pi_yK(z|y)$, we have
$r_z(\omega)=\sum_{y\in Y^+}\lambda_{y|z}q_y(\omega)$ $\forall \omega\in\Omega$, where
$\lambda_{y|z}:=\pi_yK(z|y)/\rho_z$. The coefficients $(\lambda_{y|z})_{y\in Y^+}$ are nonnegative and sum to one, so $r_z$ is a convex combination of the original posteriors. Since $F_{q_0}(q)=D_G(q_0\lVert q)$ is convex, $F_{q_0}(r_z)\leq\sum_{y\in Y^+}\lambda_{y|z}F_{q_0}(q_y)$. Multiplying by $\rho_z$ and summing over $z$ gives
$$
        \tilde C(\tau,q_0;Z)=\sum_{z:\rho_z>0}\rho_zF_{q_0}(r_z)
        \leq
        \sum_{z\in Z}\sum_{y\in Y^+}\pi_yK(z|y)F_{q_0}(q_y)
        =
        \sum_{y\in Y^+}\pi_yF_{q_0}(q_y)
        =
        \tilde C(\sigma,q_0;Y).
$$
Signals with zero unconditional probability contribute zero to the reverse uniformly posterior-separable cost, so the inequality proves the claim.
\end{proof}

\begin{lemma}\label{lem:rev_small}
Let $G$ be differentiable at $q_0$ and at each $q_0+h_k$. If $h_k\to0$, $D_G(q_0\Vert q_0+h_k)$ is well-defined for every $k$, and $\nabla G(q_0+h_k)\to\nabla G(q_0)$, then $D_G(q_0\Vert q_0+h_k)=o(\|h_k\|)$.
\end{lemma}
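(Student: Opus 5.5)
Write out the reverse Bregman divergence directly and expand to first order. By definition,
$$D_G(q_0\Vert q_0+h_k)=G(q_0)-G(q_0+h_k)-\langle\nabla G(q_0+h_k),\,q_0-(q_0+h_k)\rangle = G(q_0)-G(q_0+h_k)+\langle\nabla G(q_0+h_k),h_k\rangle.$$
This is well-defined by hypothesis, since $G$ is differentiable at each $q_0+h_k$. The plan is to rewrite this as the forward Bregman remainder at $q_0$ plus a gradient-difference term, and to control each piece separately.

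\emph{Step 1: decomposition.} Add and subtract $\langle\nabla G(q_0),h_k\rangle$:
$$D_G(q_0\Vert q_0+h_k)=-\bigl[G(q_0+h_k)-G(q_0)-\langle\nabla G(q_0),h_k\rangle\bigr]+\langle\nabla G(q_0+h_k)-\nabla G(q_0),h_k\rangle.$$
The bracket is exactly $D_G(q_0+h_k\Vert q_0)$, so we may write
$$D_G(q_0\Vert q_0+h_k)=-D_G(q_0+h_k\Vert q_0)+\langle\nabla G(q_0+h_k)-\nabla G(q_0),h_k\rangle.$$

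\emph{Step 2: first term.} Differentiability of $G$ at $q_0$ gives $D_G(q_0+h_k\Vert q_0)=o(\|h_k\|)$. This is exactly the argument of Lemma \ref{lem:o}, which can be invoked directly, since its well-definedness hypothesis only requires $G(q_0+h_k)$ to be finite, and that already holds here.

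\emph{Step 3: second term.} By Cauchy–Schwarz,
$$|\langle\nabla G(q_0+h_k)-\nabla G(q_0),h_k\rangle|\le\|\nabla G(q_0+h_k)-\nabla G(q_0)\|\,\|h_k\|,$$
and the first factor tends to zero by the assumed gradient convergence. Hence this term is also $o(\|h_k\|)$.

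\emph{Conclusion and main obstacle.} Combining Steps 2 and 3, $D_G(q_0\Vert q_0+h_k)=o(\|h_k\|)$. If some $h_k=0$, both sides vanish and the claim holds trivially, so the little-$o$ can be read along the subsequence with $h_k\neq0$. The only delicate point is that the reverse divergence evaluates the gradient at the moving point $q_0+h_k$ rather than at $q_0$. This is precisely why differentiability at $q_0$ alone does not suffice and the continuity hypothesis $\nabla G(q_0+h_k)\to\nabla G(q_0)$ is needed; Step 3 is where that hypothesis enters.
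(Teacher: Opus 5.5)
Your proof is correct and takes essentially the same route as the paper. The paper substitutes the first-order expansion $G(q_k)=G(q_0)+\langle\nabla G(q_0),h_k\rangle+o(\|h_k\|)$ into the reverse divergence to get $\langle\nabla G(q_k)-\nabla G(q_0),h_k\rangle+o(\|h_k\|)$ and then uses the gradient convergence; this is exactly your decomposition, with the forward remainder handled as in Lemma~\ref{lem:o} and the gradient term controlled by Cauchy--Schwarz.
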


\begin{proof}[Proof of Lemma \ref{lem:rev_small}]
Set $q_k:=q_0+h_k$. By differentiability of $G$ at $q_0$, $G(q_k)=G(q_0)+\langle\nabla G(q_0),h_k\rangle+o(\|h_k\|)$. Therefore, $D_G(q_0\Vert q_k)=G(q_0)-G(q_k)+\langle\nabla G(q_k),h_k\rangle=\langle\nabla G(q_k)-\nabla G(q_0),h_k\rangle+o(\|h_k\|)$. Since $\nabla G(q_k)\to\nabla G(q_0)$, the first term is $o(\|h_k\|)$.
\end{proof}

\begin{lemma}\label{prop:unitprbregman}
Assume reverse uniform posterior separability. Then, for every $q\in\Delta^{\circ}(\Omega)$,
$$
        \tilde\kappa(q|q_0)=D_G(q_0\lVert q).
$$
\end{lemma}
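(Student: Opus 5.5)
The plan is to mirror the two-step argument of Lemma \ref{thm:unitbregman}, with the order of the Bregman divergence reversed throughout and Lemma \ref{lem:rev_small} replacing Lemma \ref{lem:o} in the limiting step. Fix $q\in\Delta^{\circ}(\Omega)$.

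\emph{Lower bound.} Take any admissible tuple $(Y,\sigma,y^\star)$ in the definition of $\tilde\kappa$, that is, one where the right-hand side of \eqref{eq:UPSprime} is well defined. Every term of \eqref{eq:UPSprime} is nonnegative, because a Bregman divergence of a convex generator is nonnegative. Dropping all realizations other than $y^\star$ therefore gives
$$\tilde C(\sigma,q_0;Y)\geq\pi_{y^\star}(\sigma,q_0)\,D_G(q_0\lVert q).$$
Dividing by $\pi_{y^\star}(\sigma,q_0)>0$ and taking the infimum yields $\tilde\kappa(q|q_0)\geq D_G(q_0\lVert q)$.

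\emph{Upper bound.} I reuse the binary construction from Step 2 of the proof of Lemma \ref{thm:unitbregman}. Choose $\varepsilon$ satisfying \eqref{eq:epscond}. Set $\sigma_\omega(y^\star)=\varepsilon q(\omega)/q_0(\omega)$, and let the complementary signal $y^0$ carry the remaining mass. Then $\pi_{y^\star}=\varepsilon$ and $q_{y^\star}=q$. The other posterior is $r_\varepsilon=(q_0-\varepsilon q)/(1-\varepsilon)\in\Delta^{\circ}(\Omega)$. Both posteriors are full-support, so by the regularity conditions of Appendix \ref{app:conditions} the reverse cost is well defined. We get
$$\frac{\tilde C(\sigma,q_0;Y)}{\varepsilon}=D_G(q_0\lVert q)+\frac{1-\varepsilon}{\varepsilon}D_G(q_0\lVert r_\varepsilon).$$
Moreover, $r_\varepsilon-q_0=\frac{\varepsilon}{1-\varepsilon}(q_0-q)$, so $\|r_\varepsilon-q_0\|=O(\varepsilon)$.

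\emph{Main obstacle: the remainder term.} It remains to show $D_G(q_0\lVert r_\varepsilon)=o(\varepsilon)$. This is where the reverse order differs from Lemma \ref{thm:unitbregman}. The gradient in the divergence is now evaluated at the moving point $r_\varepsilon$ rather than at $q_0$, so differentiability of $G$ at $q_0$ alone is not enough. Lemma \ref{lem:rev_small} supplies exactly what is needed, provided $\nabla G(r_\varepsilon)\to\nabla G(q_0)$. This follows from the standing assumption in Appendix \ref{app:conditions}: $G$ is continuously differentiable on a neighborhood of every posterior at which the reverse divergence is evaluated, and $r_\varepsilon\to q_0\in\Delta^{\circ}(\Omega)$.

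Apply Lemma \ref{lem:rev_small} along a sequence $\varepsilon_k\downarrow0$ with $h_k=r_{\varepsilon_k}-q_0$. This gives $D_G(q_0\lVert r_{\varepsilon_k})=o(\varepsilon_k)$, so the second term in the displayed ratio vanishes. Hence $\tilde\kappa(q|q_0)\leq D_G(q_0\lVert q)$. Combined with the lower bound, this proves the equality.
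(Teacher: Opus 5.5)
Your proposal is correct and follows the paper's proof essentially step for step: the lower bound comes from dropping every term except the one at $y^\star$, the upper bound uses the same binary construction with $\sigma_\omega(y^\star)=\varepsilon q(\omega)/q_0(\omega)$, and Lemma \ref{lem:rev_small} kills the remainder $\frac{1-\varepsilon}{\varepsilon}D_G(q_0\lVert r_\varepsilon)$. Your explicit check of the hypothesis $\nabla G(r_\varepsilon)\to\nabla G(q_0)$ via the continuous-differentiability condition in Appendix \ref{app:conditions} is a detail the paper leaves implicit.
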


\begin{proof}[Proof of Lemma \ref{prop:unitprbregman}]
Fix $q\in\Delta^{\circ}(\Omega)$. For the lower bound, take any admissible tuple $(Y,\sigma,y^\star)$ in the definition of $\tilde\kappa(q|q_0)$. Definition \ref{def:upsprime} and nonnegativity of $D_G$ imply
$\tilde C(\sigma,q_0;Y)
        \geq
        \pi_{y^\star}(\sigma,q_0)D_G(q_0\lVert q_{y^\star}(\sigma,q_0))
        =
        \pi_{y^\star}(\sigma,q_0)D_G(q_0\lVert q).
$
Dividing by $\pi_{y^\star}(\sigma,q_0)>0$ and taking the infimum gives $\tilde\kappa(q|q_0)\geq D_G(q_0\lVert q)$.

For the upper bound, choose $\varepsilon\in(0,1)$ such that $\varepsilon<\min_{\omega:q(\omega)>0}q_0(\omega)/q(\omega)$, which is possible because $q_0$ has full support. Let $Y=\{y^\star,y^0\}$ and define $\sigma_{\omega}(y^\star):=\varepsilon q(\omega)/q_0(\omega)$ and $\sigma_{\omega}(y^0):=1-\sigma_{\omega}(y^\star)$. Then, $\sigma_{\omega}\in\Delta(Y)$, $\pi_{y^\star}(\sigma,q_0)=\varepsilon$, and $q_{y^\star}(\sigma,q_0)=q$. The residual posterior is $q_{y^0}(\sigma,q_0)(\omega)=(q_0(\omega)-\varepsilon q(\omega))/(1-\varepsilon)=:r_\varepsilon(\omega)$. Given the choice of $\varepsilon$ and $q_0\in\Delta^\circ(\Omega)$,  $q_0(\omega)-\varepsilon q(\omega)$ is strictly positive for every $\omega$, so $r_\varepsilon\in\Delta^\circ(\Omega)$.

By Definition \ref{def:upsprime},
$\frac{\tilde C(\sigma,q_0;Y)}{\pi_{y^\star}(\sigma,q_0)}
        =
        D_G(q_0\lVert q)
        +
        \frac{1-\varepsilon}{\varepsilon}D_G(q_0\lVert r_\varepsilon).$
Lemma \ref{lem:rev_small} gives $D_G(q_0\lVert r_\varepsilon)=o(\|r_\varepsilon-q_0\|)=o(\varepsilon)$, so the second term converges to zero as $\varepsilon\downarrow0$. Taking the infimum over the constructed binary signal structures yields $\tilde\kappa(q|q_0)\leq D_G(q_0\lVert q)$. 
\end{proof}

\begin{proof}[Proof of Theorem \ref{thm:revjoint}]
Statement \emph{(1)} is equivalent to $\tilde\kappa_i(q|p_i)\leq\eta_i$ $\forall q\in P_0$ and $\forall i\in I$. By Lemma \ref{prop:unitprbregman}, applied source by source, $\tilde\kappa_i(q|p_i)=D_{G_i}(p_i\lVert q)$ $\forall q\in\Delta^{\circ}(\Omega)$ and $\forall i\in I$. Therefore, statement \emph{(1)} is equivalent to $D_{G_i}(p_i\lVert q)\leq\eta_i$  $\forall q\in P_0$ and  $\forall i\in I$. By the definition of the primal Bregman ball in \eqref{eq:primalball}, this is equivalent to $q\in B^{G_i}_{\eta_i}(p_i)$  $\forall q\in P_0$ and $\forall i\in I$, which is  $P_0\subseteq\bigcap_{i\in I}B^{G_i}_{\eta_i}(p_i)$. Thus, \emph{(1)} and \emph{(2)} are equivalent.
\end{proof}

\subsection{Proofs from Section \ref{sec:app_finance}}

\begin{proof}[Proof of Lemma \ref{lem:finance_RS}]
The first statement is \citeauthor{RigottiShannon2005}'s (\citeyear{RigottiShannon2005}) full-insurance characterization: under their Assumptions A1--A3, a full-insurance Pareto optimum exists if and only if individuals share at least one prior, and in that case every full-insurance allocation is Pareto optimal \citep[Corollary 2]{RigottiShannon2005}.

Fix $q\in\bigcap_{i\in I}P_i$ and define $x_s^i\equiv q\cdot\omega^i$. Feasibility follows from no aggregate uncertainty:
$\sum_{i\in I}x_s^i
        =
        \sum_{i\in I}q\cdot\omega^i
        =
        q\cdot\big(\sum_{i\in I}\omega^i\big)
        =
        q\cdot(\bar W\mathbf 1)
        =
        \bar W
        $ $\forall s\in S.
$
Since each $x^i$ is constant across states, $P_i(x^i)=P_i$. Hence,
$q\in\bigcap_{i\in I}P_i
        =
        \bigcap_{i\in I}P_i(x^i).$
Also,
 $q\cdot x^i
        =
        \sum_{s\in S}q(s)(q\cdot\omega^i)
        =
        q\cdot\omega^i .$
Therefore, \citeauthor{RigottiShannon2005}'s (\citeyear{RigottiShannon2005}) equilibrium characterization applies: an interior allocation is an equilibrium allocation if and only if the intersection of marginal-belief sets is nonempty and there is a price in that intersection satisfying each individual's budget equation \citep[Theorem 4]{RigottiShannon2005}. Thus, $(x^i)_{i\in I}$ is an equilibrium supported by $q$. If $\bigcap_{i\in I}P_i=\{q^*\}$ and $(x^i)_{i\in I}$ is any full-insurance equilibrium, then $P_i(x^i)=P_i$  $\forall i\in I$. Any normalized Arrow price supporting the equilibrium must belong to
$\bigcap_{i\in I}P_i(x^i)
        =
        \bigcap_{i\in I}P_i
        =
        \{q^*\}.$
Hence, the normalized Arrow price vector is unique and equals $q^*$.
\end{proof}

\begin{proof}[Proof of Lemma \ref{prop:finance_risk_benchmarks}]
For \emph{(i)}, let $\lambda_s$ be the multiplier on the resource constraint in state $s$. The first-order conditions are necessary and sufficient because the objective is strictly concave on the positive orthant and the constraint is linear. They give
$\mu_i\frac{p_i(s)}{x_s^i}=\lambda_s$ $\text{for every }i\in J\text{ and }s\in S.$
Thus, $x_s^i=\frac{\mu_i p_i(s)}{\lambda_s}.$
Summing over $i\in J$ and using $\sum_i x_s^i=\bar W$,
$\bar W
        =
        \frac{1}{\lambda_s}\sum_{i\in J}\mu_i p_i(s),$
so
$\lambda_s=\frac{\sum_{i\in J}\mu_i p_i(s)}{\bar W}.$
Normalizing $\lambda:=(\lambda_s)_{s\in S}$ to sum to one yields
$q^{\mathrm L}(s)
        =
        \frac{\lambda_s}{\sum_{t\in S}\lambda_t}
        =
        \frac{\sum_{i\in J}\mu_i p_i(s)}
        {\sum_{t\in S}\sum_{i\in J}\mu_i p_i(t)}
        =
        \sum_{i\in J}\mu_i p_i(s),
$
because each $p_i$ sums to one and $\sum_{i\in J}\mu_i=1$.

For \emph{(ii)}, FOCs are again necessary and sufficient. They are
$\alpha_i p_i(s)\frac{1}{\tau_i}e^{-x_s^i/\tau_i}
        =
        \lambda_s
        $ $\forall i\in J\text{ and }s\in S.
$
Solving for $x_s^i$,
$x_s^i
        =
        \tau_i\big(\log\alpha_i+\log p_i(s)-\log\tau_i-\log\lambda_s\big).$
Summing over $i\in J$ and imposing $\sum_i x_s^i=\bar W$,
$\bar W
        =
        A+\sum_{i\in J}\tau_i\log p_i(s)
        -
        \big(\sum_{i\in J}\tau_i\big)\log\lambda_s,$
where
$A:=\sum_{i\in J}\tau_i(\log\alpha_i-\log\tau_i).$
Let $T:=\sum_{i\in J}\tau_i$. Rearranging gives
$\log\lambda_s
        =
        \frac{A-\bar W}{T}
        +
        \sum_{i\in J}\frac{\tau_i}{T}\log p_i(s).$
Since $\tau_i=\tau\mu_i$ and $\sum_i\mu_i=1$, we have $T=\tau$ and $\tau_i/T=\mu_i$. Therefore
$\lambda_s
        =
        K\prod_{i\in J}p_i(s)^{\mu_i}$
for a constant $K>0$ independent of $s$. Normalizing $\lambda$ to sum to one yields
$q^{\mathrm H}(s)
        =
        \frac{\prod_{i\in J}p_i(s)^{\mu_i}}
        {\sum_{t\in S}\prod_{i\in J}p_i(t)^{\mu_i}}$ $\forall s\in S$.
\end{proof}

\begin{proof}[Proof of Proposition \ref{thm:finance_entropy_order}]
For \emph{(i)}, the belief sets are entropy specializations of the primal Bregman balls $B^G_{\eta_i}(p_i)$. Since their intersection is a singleton, Corollary \ref{thm:pos1} implies that the unique element of the intersection is a linear pool: $q^{\mathrm L}(s)=\sum_{i\in I}\lambda_i p_i(s)$ $\forall s\in S$
for some convex weights $(\lambda_i)_{i\in I}$. Let $J_{\mathrm L}:=\{i\in I:\lambda_i>0\}$ and set $\mu_i^{\mathrm L}:=\lambda_i$ for $i\in J_{\mathrm L}$. Then, the linear-pool formula follows with positive weights on $J_{\mathrm L}$. By Lemma \ref{lem:finance_RS}, the unique normalized Arrow price vector of any full-insurance Bewley equilibrium is the unique element of the intersection, namely $q^{\mathrm L}$. Lemma \ref{prop:finance_risk_benchmarks} then shows that the same normalized Arrow price vector is generated by the log-utility risk economy with weights $(\mu_i^{\mathrm L})_{i\in J_{\mathrm L}}$.

For \emph{(ii)}, Proposition \ref{thm:geo1} applies directly to the dual entropy balls. Thus, the unique element of the intersection is a geometric pool:
$q^{\mathrm H}(s)
        =
        \frac{\prod_{i\in I}p_i(s)^{\lambda_i}}
        {\sum_{t\in S}\prod_{i\in I}p_i(t)^{\lambda_i}}$ $\forall s\in S$,
        for some convex weights $(\lambda_i)_{i\in I}$. Let $J_{\mathrm H}:=\{i\in I:\lambda_i>0\}$ and set $\mu_i^{\mathrm H}:=\lambda_i$ for $i\in J_{\mathrm H}$. Removing zero-weight terms gives the desired formula. By Lemma \ref{lem:finance_RS}, the unique normalized Arrow price vector of any full-insurance Bewley equilibrium is $q^{\mathrm H}$. Lemma \ref{prop:finance_risk_benchmarks} then shows that the same normalized Arrow price vector is generated by the exponential-utility risk economy with absolute risk-tolerance shares $(\mu_i^{\mathrm H})_{i\in J_{\mathrm H}}$.
\end{proof}

\begin{proof}[Proof of Proposition \ref{prop:finance_power_crra}]
By Lemma \ref{lem:finance_RS}, since $\bigcap_{i\in I}P_i=\{q^{\mathrm P}_r\}$, every full-insurance Bewley equilibrium has the unique normalized Arrow price vector $q^{\mathrm P}_r$. It remains only to verify the complete-preference benchmark. The function $u_r$ is strictly increasing and strictly concave, and satisfies $u_r'(z)=z^{-1/r}$ for every $r>0$, with the logarithmic case included at $r=1$. Since $u_r'(z)\rightarrow+\infty$ as $z\downarrow0$, the planner's solution is interior. Let $\lambda_s>0$ be the multiplier on the resource constraint in state $s$. The first-order condition is
$\mu_i^{1/r}p_i(s)(x_s^i)^{-1/r}=\lambda_s$ $\forall i\in J_{\mathrm P},$ $\forall s\in S.$
Thus, $x_s^i=\mu_i p_i(s)^r\lambda_s^{-r}$. Summing over $i\in J_{\mathrm P}$ and using $\sum_i x_s^i=\bar W$ gives
$\lambda_s
        =
        \bar W^{-1/r}\big(\sum_{i\in J_{\mathrm P}}\mu_i p_i(s)^r\big)^{1/r}$ $\forall s\in S.$
The normalized Arrow price is $\lambda_s/\sum_{t\in S}\lambda_t$, which is  $q^{\mathrm P}_r(s)=\frac{(\sum_{i\in J_{\mathrm P}}\mu_i p_i(s)^r)^{1/r}}{\sum_{t\in S}(\sum_{i\in J_{\mathrm P}}\mu_i p_i(t)^r)^{1/r}}$ $\forall s\in S$. 
\end{proof}

\subsection{Proofs from Appendices \ref{app:unitcost_unique}--\ref{app:verifiability}}

\begin{proof}[Proof of Lemma \ref{lem:critical_report_cost}]
Observation \ref{prop:unitprice}.(i) gives
$c_q^\star(\varepsilon)\geq \varepsilon\kappa(q|q_0)$  $\forall \varepsilon\in(0,1]$, so
$\liminf_{\varepsilon\downarrow0}c_q^\star(\varepsilon)/\varepsilon\geq\kappa(q|q_0)$. Conversely, fix any $\delta>0$. Observation \ref{prop:unitprice}.(ii) gives $\bar\varepsilon\in(0,1]$ such that, for every $\varepsilon\in(0,\bar\varepsilon]$,
$c_q^\star(\varepsilon)\leq \varepsilon(\kappa(q|q_0)+\delta)$. Hence,
$\limsup_{\varepsilon\downarrow0}c_q^\star(\varepsilon)/\varepsilon\leq\kappa(q|q_0)+\delta$. Since $\delta>0$ was arbitrary,
$\limsup_{\varepsilon\downarrow0}c_q^\star(\varepsilon)/\varepsilon\leq\kappa(q|q_0)$. Thus, the limit exists and equals $\kappa(q|q_0)$.
\end{proof}

\begin{proof}[Proof of Proposition \ref{prop:unitcost_unique}]
By Lemma \ref{lem:critical_report_cost}, $\Gamma(q|q_0)=\kappa(q|q_0)$. Let $\rho(q|q_0)$ be any cost-calibrated report-cost index. Suppose first that $\rho(q|q_0)>\Gamma(q|q_0)$. Choose $\eta$ such that $\Gamma(q|q_0)<\eta<\rho(q|q_0)$. Since $\eta<\rho(q|q_0)$, cost calibration gives $c_q^\star(\varepsilon)>\varepsilon\eta$ for all sufficiently small $\varepsilon>0$. But $\Gamma(q|q_0)<\eta$ and $c_q^\star(\varepsilon)/\varepsilon\to\Gamma(q|q_0)$ imply $c_q^\star(\varepsilon)<\varepsilon\eta$ for all sufficiently small $\varepsilon>0$, a contradiction. Suppose next that $\rho(q|q_0)<\Gamma(q|q_0)$. Choose $\eta$ such that $\rho(q|q_0)<\eta<\Gamma(q|q_0)$. Since $\rho(q|q_0)<\eta$, cost calibration gives $c_q^\star(\varepsilon)\leq\varepsilon\eta$ for all sufficiently small $\varepsilon>0$. But $\eta<\Gamma(q|q_0)$ and $c_q^\star(\varepsilon)/\varepsilon\to\Gamma(q|q_0)$ imply $c_q^\star(\varepsilon)>\varepsilon\eta$ for all sufficiently small $\varepsilon>0$, again a contradiction. Therefore, $\rho(q|q_0)=\Gamma(q|q_0)=\kappa(q|q_0)$.
\end{proof}

\begin{proof}[Proof of Proposition \ref{prop:cert_necessity}]
Suppose $q$ is certifiable at $\mathscr{E}$. By Definition \ref{def:certifiability}, there is an equilibrium history at the realized type profile $t^q(\mathscr{E})$ whose outcome is $x_q$. By the certification requirement, every source $i$ must have sent a message combination containing $m_i(q)$ along this history. Since the mechanism is feasible in the sense of \citet{deneck08}, the union of the verifying messages sent by type $t_i^q(\mathscr{E})$ along the path must belong to $\mathcal{E}_i^q(t_i^q(\mathscr{E}))$. Therefore, $\{m_i(q)\}\in\mathcal{E}_i^q(t_i^q(\mathscr{E}))$. By construction of $\mathcal{E}_i^q$, this implies $t_i^q(\mathscr{E})=V_i^q$, which means $q\in Q_i(\mathscr{E})$. Since $i$ was arbitrary, $q\in\bigcap_{i\in I}Q_i(\mathscr{E})$. 
\end{proof}
\begin{proof}[Proof of Lemma \ref{lemma:cert_IC}]
Since $v_i(x_i^\bot;t)=-\infty$ and $f_q^\star$ assigns only $x_q$ or $x_{\varnothing}$, \citet[][Corollary 4]{deneck08} applies. Therefore, $f_q^\star$ is implementable if and only if, for every source $i$, every true type $t_i\in T_i^q$, and every type $t_i'\in T_i^q$ such that $\mathcal{E}_i^q(t_i')\subseteq\mathcal{E}_i^q(t_i)$,
$$
        \sum_{t_{-i}}v_i(f_q^\star(t_i,t_{-i});t_i,t_{-i})F_i^q(t_{-i}|t_i)
        \geq
        \sum_{t_{-i}}v_i(f_q^\star(t_i',t_{-i});t_i,t_{-i})F_i^q(t_{-i}|t_i).
$$
If $t_i=N_i^q$, the only type $t_i'$ satisfying $\mathcal{E}_i^q(t_i')\subseteq\mathcal{E}_i^q(N_i^q)=\{\emptyset\}$ is $N_i^q$, so the incentive constraint is an equality. If $t_i=V_i^q$, then both $V_i^q$ and $N_i^q$ are imitable. The constraint for $t_i'=V_i^q$ is an equality. The constraint for $t_i'=N_i^q$ is the only remaining condition. In that case, $f_q^\star(N_i^q,t_{-i})=x_{\varnothing}$ for every $t_{-i}$, while $f_q^\star(V_i^q,t_{-i})=x_q$ exactly when $t_j=V_j^q$ for every $j\neq i$, and equals $x_{\varnothing}$ otherwise. Substituting these values into the incentive constraint and cancelling the terms in which some $j\neq i$ has type $N_j^q$ gives exactly \eqref{eq:alignment}. Hence, the Deneckere-Severinov incentive constraints are equivalent to \eqref{eq:alignment}.
\end{proof}

\begin{proof}[Proof of Corollary \ref{cor:cert_joint}]
Necessity is Proposition \ref{prop:cert_necessity}. Conversely, suppose $q\in\bigcap_{i\in I}Q_i(\mathscr{E})$. By \eqref{eq:joint_type_equiv}, $t_i^q(\mathscr{E})=V_i^q$ $\forall i\in I$. By Lemma \ref{lemma:cert_IC}, $f_q^\star$ is implementable under alignment. At the realized type profile $t^q(\mathscr{E})$, all sources are verifiers, so $f_q^\star(t^q(\mathscr{E}))=x_q$. To see that the implementing mechanism satisfies the certification requirement, take the Deneckere-Severinov revelation mechanism to request $m_i(q)$ from every source $i$ whenever the reported profile is $(V_i^q)_{i\in I}$, to choose $x_q$ only if every requested message is sent, and to choose $x_\emptyset$ whenever at least one reported type is $N_i^q$. If some requested message is not sent, the mechanism assigns $x_i^\perp$ to one source $i$ whose requested message is missing. If a true non-verifier $N_i^q$ reports $V_i^q$ and every other source reports $V_j^q$, then the mechanism requests $m_i(q)$. Since $E_i^q(N_i^q)=\{\emptyset\}$, source $i$ cannot send $m_i(q)$, so the deviation leads to $x_i^\perp$. If some other source reports $N_j^q$, the public outcome is $x_\emptyset$ under either report by source $i$. Therefore, the worst-outcome condition rules out profitable deviations by non-verifiers. By Lemma \ref{lemma:cert_IC}, alignment rules out profitable deviations by verifiers. Thus, truthful reporting and obedient message submission form a Bayesian equilibrium. Since all sources are verifiers at $t^q(\mathscr{E})$, every requested message $m_i(q)$ is sent in equilibrium, the mechanism selects $x_q$, and $x_q$ is reached only at histories in which every $m_i(q)$ has been sent. Thus, $q$ is certifiable.
\end{proof}

\begin{proof}[Proof of Corollary \ref{cor:DS_set}]
By Definition \ref{def:joint}, $P_0$ is jointly admissible if and only if $\kappa_i(q|p_i)\leq\eta_i$ for every $q\in P_0$ and every $i\in I$. Equivalently, $q\in\bigcap_{i\in I}Q_i(\mathscr{E})$ for every $q\in P_0$. By Corollary \ref{cor:cert_joint}, this holds if and only if every $q\in P_0$ is certifiable at $\mathscr{E}$ on its singleton agenda $A_q$.
\end{proof}

\begin{singlespace}
\bibliography{ref}
\bibliographystyle{apalike}
\end{singlespace}

\end{document}